\documentclass[a4paper, authorcolumns, nolineno]{lipics-v2021}
\usepackage{mathtools}
\usepackage{cite}
\usepackage{float}
\usepackage{graphicx}
\usepackage[ruled]{algorithm2e}

\graphicspath{ {./images/} }

\newcommand{\arcsinh}{\operatorname{arcsinh}}

\newcommand{\RE}{\mathbb{R}}  
\newcommand{\HY}{\mathbb{H}}

\newcommand{\eps}{\varepsilon}          
                 
\newcommand{\bd}{\partial}
\newcommand{\bdOmega}{\partial \kern+1pt \Omega} 

\newcommand{\SP}{\kern+1pt}             
\newcommand{\revFunk}[1]{{}^r \kern-1pt F_{#1}}

\title{The Hyperbolic Surface Distance, Diameter, and Dirichlet Problems}
\titlerunning{\small The Hyperbolic Surface Distance, Diameter, and Dirichlet Problems}

\author{Vincent Despré}{Institut universitaire de France (IUF), Université de Lorraine, CNRS, LORIA, Nancy, France}{vincent.despre@loria.fr}{}{}
\author{Auguste H. Gezalyan}{Université de Lorraine, CNRS, Inria, LORIA, F-54000 Nancy, France}{octavo@umd.edu}{https://orcid.org/0000-0002-5704-312X}{}
\author{Marc Pouget}{Université de Lorraine, CNRS, Inria, LORIA, F-54000 Nancy, France}{marc.pouget@loria.fr}{https://orcid.org/0000-0001-8085-4134}{}

\authorrunning{\small Despr\'e, Gezalyan, Pouget}

\Copyright{Despr\'e, Gezalyan, Pouget}

\ccsdesc[500]{Theory of computation~Computational geometry}
\keywords{Hyperbolic surfaces, Poincare disk, Voronoi diagrams}

\date{\today}

\EventEditors{}
\EventLongTitle{}
\EventShortTitle{}
\EventAcronym{}
\EventYear{}
\EventDate{}
\EventLocation{}
\EventLogo{}
\SeriesVolume{}
\ArticleNo{}

\nolinenumbers
\funding{This work was partially supported by grant ANR-23-CE48-0017 of the French National Research Agency ANR (project Abysm) and partially supported by grant ANR-25-CE40-0416 of the French National Research Agency (project SUGAR).}
\begin{document}

\maketitle

\begin{abstract}
Despite the prominence of hyperbolic surfaces in mathematics, basic algorithmic questions about them, even computing the distance between two points, have remained open, leaving many features of these surfaces inaccessible. The classical machinery assumes a polyhedral structure absent on a smooth surface. We remove these obstacles. We begin with an efficient $O(g^2)$ algorithm for the distance between two points, where $g$ is the genus of the surface. Building on it, we obtain an $O(g^2 \log g)$ method for answering distance queries from a fixed source and, as a consequence, for recentering a Dirichlet domain around an arbitrary point. This understanding of distances on the surface then lets us approximate the diameter to within any $\eps$ in time $O(g^3 \log g / \eps^2)$. We further show that the diameter, a single real number encoding a great deal about the surface, is exactly computable. Its hyperbolic cosine is an algebraic number over the field encoding the coefficients of the hyperbolic isometries defining the surface.

\end{abstract}

\section{Introduction}

Hyperbolic geometry, the geometry of constant negative curvature, underlies structures as varied as complex networks~\cite{krioukov2010hyperbolic}, phylogenetic trees~\cite{matsumoto2021novel}, and hierarchical embeddings in machine learning~\cite{nickel2017poincare}. It also plays a central role in theoretical physics through anti-de Sitter spacetimes~\cite{Carroll2019}. Its importance is even more fundamental in mathematics: by the Uniformization Theorem~\cite{abikoff1981uniformization}, every closed surface of genus at least two admits a unique hyperbolic metric in its conformal class. Consequently, hyperbolic surfaces are not exceptional objects but rather the canonical geometric representatives of surfaces of higher genus. Despite their ubiquity, fundamental algorithmic questions concerning hyperbolic surfaces are open.

Recent years have seen the development of several algorithmic tools for hyperbolic surfaces. Delaunay triangulations have been studied in increasing generality, from low-genus examples to arbitrary hyperbolic surfaces~\cite{bogdanov2016delaunay,iordanov2016implementing,despreFlippingGeometricTriangulations2020}. More recently, an algorithm for computing the Dirichlet domain of a hyperbolic surface from an arbitrary fundamental polygon~\cite{despreComputingDirichletDomain2023} provided a canonical geometric representation that is independent of the quality of the input. Another line of work introduced $\eps$-nets as a geometric discretization suitable for approximation algorithms~\cite{despre2024computing}. While Dirichlet domains provide a global description of the geometry of a surface, $\eps$-nets capture its local structure. In this paper, we show how these two complementary representations lead to efficient algorithms for fundamental geometric problems on hyperbolic surfaces.

Our first contribution concerns the computation of shortest-path distances. Computing distances on hyperbolic surfaces is closely related to the shortest-path problem on polyhedral surfaces. Two classical algorithms play a central role in this context: the algorithm of Mitchell, Mount, and Papadimitriou~\cite{mitchell1987discrete}, which is remarkably robust, and the optimal algorithm of Chen and Han~\cite{chen1996shortest}, whose adaptation to hyperbolic surfaces requires a more delicate analysis. We show how both approaches can be extended to hyperbolic surfaces represented by Dirichlet domains. This yields the following result.

\begin{theorem}
Let $S$ be a hyperbolic surface of genus $g$ represented by a Dirichlet domain $D\subset\mathbb{H}^2$, and let $\alpha,\omega\in S$ be specified by lifts in $D$. Then the distance $d_S(\alpha,\omega)$ can be computed in $O(g^2)$ time.
\end{theorem}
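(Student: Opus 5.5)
We plan to reduce the distance computation to a shortest-path problem in $\HY^2$ among finitely many translates of the Dirichlet domain $D$. Write $D = D_p$ for the Dirichlet domain centered at a point $p$, with side-pairing isometries $g_1,\dots,g_k$ of the Fuchsian group $\Gamma$. Here $D$ has $k = O(g)$ sides, at most $12g-6$ in the generic case. The distance satisfies
\[ d_S(\alpha,\omega) = \min_{\gamma\in\Gamma} d_{\HY}(\tilde\alpha,\gamma\tilde\omega), \]
and the minimizer corresponds to a geodesic segment $\sigma$ from $\tilde\alpha$ to some $\gamma\tilde\omega$. Since $\sigma$ projects to a shortest path on $S$, its length is at most the diameter of $S$, which in turn is at most $\mathrm{diam}(D)$. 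The segment $\sigma$ crosses a sequence of translates of $D$, and no translate repeats, because $\sigma$ is a hyperbolic geodesic and each translate is convex.

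The first key step is to bound the combinatorial length of $\sigma$, i.e.\ the number of sides of translates of $D$ that it crosses. We will use the property that a shortest path on $S$ crosses each edge of the cut graph $\bd D/\Gamma$ at most once. If it crossed some edge twice, we could shortcut between the two crossing points along the edge, or swap the two subsegments. This only contradicts minimality once we know the path crosses transversally. Taking that as given, $\sigma$ crosses at most $k/2 = O(g)$ edges of the tessellation, so only $O(g)$ translates are involved.

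The second step is an unfolding, Chen--Han style propagation. We maintain wedges, or ``windows'', emanating from $\tilde\alpha$ through sides of $D$. Each window is an interval of directions from $\tilde\alpha$ together with the word $\gamma$ of the translate it currently sees. When a window hits a side of a translate $\gamma D$, we apply the corresponding generator and split the interval according to the $O(g)$ sides of the next translate. The goal is to show that the total number of window--side events is $O(g^2)$, which would give the claimed running time. The Chen--Han argument provides the mechanism: each side of $D$ can be the ``last'' side for at most one surviving window per side from which it is reachable. This bounds windows by $O(g)$ per side, hence $O(g^2)$ in total.

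In hyperbolic geometry we need one extra ingredient. Two windows arriving at the same side via different words may both be kept. We must prune all but one by comparing the hyperbolic distances from $\tilde\alpha$ to the endpoints, exactly as in Chen--Han's occlusion rule. This pruning relies only on the fact that geodesics from a point are unique and that bisectors behave like Euclidean ones, both of which hold in $\HY^2$.

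Finally, we answer the query by locating $\omega$. For each translate $\gamma D$ reached by a window containing the direction to $\gamma\tilde\omega$, we compute $d_{\HY}(\tilde\alpha,\gamma\tilde\omega)$ in $O(1)$ using $\cosh d = 1 + 2|z-w|^2/((1-|z|^2)(1-|w|^2))$, and take the minimum. Testing whether $\gamma\tilde\omega$ lies in a window is $O(1)$ per window, so this stage also stays within $O(g^2)$.

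The main obstacle is the $O(g^2)$ bound on windows. Chen--Han's pruning proof uses the fact that a vertex has total angle $2\pi$ and uses planar unfolding geometry. We will have to redo the argument that at most one window per (source side, target side) pair survives, using hyperbolic triangle comparisons, namely angle sum less than $\pi$. We also need to check that the vertices of $D$ are identified into cycles of total angle exactly $2\pi$, which holds because $S$ is smooth, so that no saddle-vertex case arises.
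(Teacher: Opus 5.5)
Your framework is the paper's: virtual starts obtained by side-pairings, Chen--Han windows, and an $O(g)$ bound on crossings. However, the step that gives $O(g^2)$, which you flag as the main obstacle, is not established, and the mechanism you sketch for it fails.

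(i) You propagate windows across whole copies of $D$, so a window entering a $k$-gon can split into up to $k-1$ children. Chen--Han's width bound depends on every node having at most two children, with only the window whose virtual start is closest to the opposite vertex allowed to fork there.

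(ii) Your rule of keeping a single window per side, chosen by comparing distances at the endpoints, is wrong. The bisector of two virtual starts is a geodesic meeting a side at most once, so both competitors generally own sub-intervals of that side. Discarding one loses shortest paths; in Chen--Han the loser still keeps one child.

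(iii) The count ``one surviving window per (entry side, exit side) pair'' is false. Several adjacent windows on one side, with different virtual starts, can all project onto the same next side. Moreover, the running time is governed by the number of windows created over all levels, not by those that survive on each side.

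The paper resolves this by fan-triangulating $D_{\tilde c}$ from its center. Each window then crosses a triangle and has at most two children. An angle-occupant table lets only the closest virtual start fork at the opposite vertex, with the previous occupant losing its child on one side; this keeps $O(g)$ nodes per level. The correctness of this pruning needs only that the region where one virtual start beats another is a half-plane intersected with the convex domain (Lemma~\ref{lem:connected}), not any hyperbolic angle-sum comparison.

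Your depth bound also needs fixing. Shortcutting along the side, or swapping subsegments, only contradicts minimality if the side is itself a distance path. A side of a Dirichlet domain need not be: in general it is only a half-minimizer (Proposition~4 of \cite{despre2024representing}). So the obstruction is not transversality, and the correct statement is that a distance path crosses each side at most twice (Lemma~\ref{lem:portal-happy}). This still gives $O(g)$.

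Once you triangulate, you also need the diagonals to be crossed at most once. This is why the fan must be taken from the Dirichlet center $\tilde c$: its edges are distance paths, so each is crossed at most once (Lemma~\ref{lem:fan-happy}), whereas an arbitrary triangulation of $D$ gives no such guarantee. Together these bound the depth of the sequence tree by $3k$. Then $O(g)$ levels, $O(g)$ nodes per level and $O(1)$ work per node give the $O(g^2)$ bound.
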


A key contribution of this work is the insight that the wavefront propagated by our shortest-path algorithm implicitly encodes the Dirichlet domain centered at the source point. This leads simultaneously to an efficient single-source shortest-path data structure and to an algorithm for recentering Dirichlet domains. It complements the algorithm of Despr\'e, Kolbe, Parlier, and Teillaud~\cite{despreComputingDirichletDomain2023}, which computes a canonical Dirichlet domain from an arbitrary fundamental polygon but does not allow the center to be prescribed. We obtain the following.

\begin{theorem}
Let $S$ be a hyperbolic surface of genus $g$ represented by a Dirichlet domain $D\subset\mathbb{H}^2$, and let $\alpha\in S$ be specified by a lift $\tilde{\alpha}\in D$. After $O(g^2\log g)$ preprocessing, shortest-path queries from $\alpha$ to arbitrary points of $S$ can be answered in $O(\log g)$ time. The same preprocessing yields the Dirichlet domain $D_{\tilde{\alpha}}$.
\end{theorem}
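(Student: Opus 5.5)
The plan is to reuse the wavefront propagation behind the $O(g^2)$ distance algorithm of the previous theorem, but to run it to completion from $\tilde\alpha$ instead of stopping when $\omega$ is reached. The set of points it covers is then the Dirichlet domain $D_{\tilde\alpha}$, and we equip that domain with a point-location structure.

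\textbf{Step 1: the first-arrival structure.} We propagate from $\tilde\alpha$ in the universal cover, using the copies $\gamma D$ of $D$ for deck transformations $\gamma$ to index the geometry. Each wavefront element is a pair $(\gamma, I)$: an interval $I$ of directions at $\tilde\alpha$, or equivalently a geodesic ``window'' on a side of a copy $\gamma D$, along which geodesics from $\tilde\alpha$ are locally shortest. Following Chen--Han / Mitchell--Mount--Papadimitriou, we discard a branch once it is dominated. In our setting this means: a point $\tilde x\in\gamma D$ is reached first through the lift $\gamma^{-1}\tilde x$ closest to $\tilde\alpha$.

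The key geometric claim is that the region where the branch labelled by $\gamma$ survives is exactly
\[
\{\tilde x : d(\tilde x,\tilde\alpha)\le d(\tilde x,\delta\tilde\alpha)\ \text{for all deck transformations } \delta\},
\]
pulled back through $\gamma$. This set is precisely $D_{\tilde\alpha}$, and the surviving branches tile it into pieces $D_{\tilde\alpha}\cap \gamma D$.

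The wave stops once every point of $S$ has been reached. At that moment, the boundaries between competing branches are segments of the bisectors of $\tilde\alpha$ and $\delta\tilde\alpha$. These are the sides of $D_{\tilde\alpha}$, and the side pairings $\delta$ can be read off the pairs of branches that collide.

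\textbf{Step 2: complexity.} I would carry over the counting argument behind the $O(g^2)$ bound. $D$ has $O(g)$ sides, and the diameter of $D_{\tilde\alpha}$ is comparable to that of $D$. A surviving geodesic therefore crosses $O(g)$ copies of $D$, and the pieces $D_{\tilde\alpha}\cap\gamma D$ have total complexity $O(g^2)$.

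For the full single-source structure we must process events in order of distance rather than answer one target. So the wavefront events (vertex hits and collisions) go into a priority queue, which costs $O(g^2\log g)$. Tracking the wavefront itself can use balanced search trees keyed by angle at $\tilde\alpha$.

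\textbf{Step 3: queries.} We store the subdivision of $D_{\tilde\alpha}$ into the cells $D_{\tilde\alpha}\cap\gamma D$. Equivalently, we store the subdivision of $D$ into cells $\gamma^{-1}(D_{\tilde\alpha}\cap\gamma D)$, each labelled by $\gamma$. This is a planar subdivision of size $O(g^2)$ with geodesic edges. In the Klein model the edges become straight segments, so we can build a standard point-location structure (Kirkpatrick, or a persistent-search-tree slab decomposition) in $O(g^2\log g)$ time.

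A query $\omega$ is given by its lift in $D$. We locate it in this subdivision to obtain its cell label $\gamma$, then return $d_{\HY}(\tilde\alpha,\gamma\tilde\omega)$ in $O(1)$ time; the geodesic itself is a segment.

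\textbf{Main obstacle.} I expect the hardest step to be the correctness claim in Step 1. The issue is that the pruning rule may discard a branch of the wavefront while it is still the first arrival somewhere. A branch $\gamma$ can be blocked inside one copy $\gamma D$ yet remain optimal in some other copy. To rule this out, I would try first to use convexity of Dirichlet/Voronoi cells: the optimal region of each $\delta$ is a convex polygon, namely the cell of the orbit point $\delta\tilde\alpha$. Its intersection with each copy of $D$ is therefore convex, and in particular connected, so a branch can never need to be resumed after it has been discarded.
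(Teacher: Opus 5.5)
Your architecture matches the paper's. You run a continuous-Dijkstra wavefront with a priority queue and justify pruning by convexity of bisector half-planes, which is the paper's Lemma~\ref{lem:connected}. You then subdivide $D$ into cells labelled by the deck transformation (equivalently, the virtual start) realizing the distance, and obtain $D_{\tilde\alpha}$ by unfolding.

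Only your query structure is genuinely different. The paper builds, for each of the $O(g)$ fan triangles, the hyperbolic Voronoi diagram of the $O(g)$ virtual starts reaching it, and answers a query by an angular binary search around $\tilde c$ followed by a Voronoi lookup. You instead put one global point-location structure on the same folded subdivision, with edges straightened in the Klein model. That is a legitimate, arguably cleaner alternative. But both routes need the subdivision to have size $O(g^2)$, and that is exactly where your proposal breaks.

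\textbf{The gap is in Step~2.} The inference ``the diameter of $D_{\tilde\alpha}$ is comparable to that of $D$, hence a surviving geodesic crosses $O(g)$ copies of $D$'' is not valid. A length bound controls the number of copies of $D$ met by a segment only through a packing count like Lemma~\ref{lem:relevant-isometries}, which is exponential in $\operatorname{diam}(S)$. For fixed $g$, $\operatorname{diam}(S)$ is unbounded as the systole shrinks, and a segment of bounded length running along a short closed geodesic can cross arbitrarily many thin copies of $D$.

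What rules this out for distance paths is a combinatorial crossing property, not their length. This is the ingredient the paper uses:
\begin{itemize}
\item fan edges are distance paths, and portal sides are distance paths or half-minimizers;
\item so a distance path crosses each fan edge at most once and each portal side at most twice (Lemmas~\ref{lem:fan-happy} and~\ref{lem:portal-happy}).
\end{itemize}
Apply the same fact to the sides of $D_{\tilde\alpha}$, which are also distance paths or half-minimizers. Each of the $O(g)$ sides of $D_{\tilde\alpha}$ then meets each fan-triangulation edge $O(1)$ times on $S$. So each fan triangle is cut into $O(g)$ cells, giving an $O(g^2)$-size subdivision overall.

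\textbf{A second missing piece.} Even with the size bound, your count of priority-queue events is not automatically bounded by the size of the final subdivision. A continuous Dijkstra creates windows that are later dominated and trimmed, and those must be counted too. The paper gets this from the Chen--Han sequence tree: at most $3k$ levels by the crossing lemmas, and $O(g)$ leaves per level by one-angle-one-split. That gives $O(g^2)$ nodes at $O(\log g)$ each. You need either to import that bound, or to show that your angular-keyed wavefront only ever holds genuinely optimal pieces. The latter requires spelling out how collisions between different virtual starts are detected inside each triangle, which your Step~2 leaves unspecified.
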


Our third contribution concerns the diameter of a hyperbolic surface. Unlike the distance queries above, which require only the Dirichlet domain, the diameter is a global invariant and requires a surface-wide discretization. We therefore assume an extended representation: a Dirichlet domain together with the Delaunay triangulation of a pseudo $\eps$-net on the thick part of the surface, as produced by~\cite{despre2024computing}. The diameter is considerably more challenging to compute exactly; to the best of our knowledge, exact values are currently known only for a single hyperbolic surface in each genus~\cite{stepanyantsDiameterCompactRiemann2023}. We show that this discretization is accurate enough to obtain an efficient approximation algorithm.

\begin{theorem}
Let $S$ be a hyperbolic surface given by a Dirichlet domain together with the Delaunay triangulation of a pseudo $\eps/2$-net on the thick part of $S$. Then an additive $\eps$-approximation of the diameter of $S$ can be computed in time $O(g^3 \log g / \eps^2)$.
\end{theorem}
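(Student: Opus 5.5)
We approximate the diameter by running the single-source structure of the preceding theorem from every net point, and then control the error coming from points that lie in the thin part or between net points. Let $N$ be the pseudo $\eps/2$-net on the thick part of $S$. By the usual area and packing argument (Gauss--Bonnet gives area $4\pi(g-1)$, and disjoint disks of radius about $\eps/4$ on the thick part have area $\Theta(\eps^2)$), we get $|N| = O(g/\eps^2)$. For each $p\in N$ we lift $p$ into $D$ and apply the preceding theorem with $O(g^2\log g)$ preprocessing. This gives the shortest-path structure from $p$ and the recentred Dirichlet domain $D_{\tilde p}$.

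\textbf{Step 1: computing a candidate value.} For a fixed source $p$, the farthest point of $S$ from $p$ is a vertex of $D_{\tilde p}$. The distance function to $\tilde p$ restricted to a geodesic edge is convex with its maximum at an endpoint, and $D_{\tilde p}$ has $O(g)$ vertices. So $R(p)=\max_{x\in S} d_S(p,x)$ can be computed exactly, in $O(g)$ time after preprocessing. We output $\Delta^*=\max_{p\in N} R(p)$. The total cost is $O(g/\eps^2)\cdot O(g^2\log g)=O(g^3\log g/\eps^2)$.

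\textbf{Step 2: correctness.} Clearly $\Delta^*\le \operatorname{diam}(S)$, since each $R(p)$ is a genuine distance. For the reverse inequality, let $x,y$ realise the diameter. Then $\operatorname{diam}(S)=R(x)$, and $R$ is $1$-Lipschitz, so it suffices to find a net point $p$ with $d_S(p,x)\le\eps$. This would give $\Delta^*\ge R(p)\ge \operatorname{diam}(S)-\eps$.

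\textbf{Main obstacle: $x$ may lie in the thin part.} If $x$ is in the thick part, the net property gives a point $p$ with $d(p,x)\le \eps/2$. The harder case is when $x$ lies in a collar around a short geodesic. There I would use the symmetry of the collar. Within a standard collar, the function $R$ does not increase as one moves from the boundary of the collar towards the core geodesic, because any far point $y$ lies outside the collar and every path to it must exit through one of the two boundary curves. Moreover, every point of the collar boundary is within $\eps/2$ of the net. So I would argue that a diametral pair can be chosen with $x$ within $\eps/2$ of the thick part, which reduces this case to the thick one. Making this monotonicity precise is the step I expect to be most delicate: it needs the collar lemma, a check that the pseudo-net's definition of "thick" matches the collar boundary, and the observation that the point $y$ itself is handled exactly by Step 1. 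Steps 1 and 2 together yield the additive $\eps$-approximation within the stated time bound.
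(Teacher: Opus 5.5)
There is a genuine gap, at the step you flagged. The claim that $R$ does not increase toward the core of a collar is false, and the case where both endpoints of every diameter pair lie in the thin part cannot be reduced to the thick case.

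Your exit argument actually points the other way. A point on the core of a collar of half-width $w$ is at distance at least $w$ from everything outside that collar, so distances to the exterior grow with depth. Here is a concrete example. Take two disjoint short geodesics $\gamma_1,\gamma_2$ of length $\ell$, the only short ones, whose union does not separate $S$ (e.g.\ the two handle meridians in genus $2$).
\begin{itemize}
\item The $\eps$-thick part is then connected. Chaining the $O(g/\eps^2)$ net balls shows that its surface diameter is some $K=O(g/\eps)$, independent of $\ell$.
\item The half-width $w$ of the two $\eps$-collars tends to infinity as $\ell\to 0$.
\item For $x\in\gamma_1$ and $y\in\gamma_2$, every path from $x$ to $y$ must traverse both half-collars, so $\operatorname{diam}(S)\ge d_S(x,y)\ge 2w$.
\item Every thick point $p$ satisfies $R(p)\le w+K$, because any collar point reaches that collar's boundary, which is thick, within distance $w$.
\end{itemize}
Hence your output satisfies $\Delta^*\le \operatorname{diam}(S)-w+K$. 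Farthest-point queries from the net alone therefore have unbounded additive error.

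What is missing is a separate treatment of the thin--thin case; this is the paper's second algorithmic step and Case~2 of Theorem~\ref{thm:approx-correctness}. Your Step~1 and the thick case of Step~2 coincide with the paper's thick-part loop and Case~1; when exactly one endpoint is thin, take the thick one as $x$.

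For the thin part, the paper places $O(1)$ net points on the boundary of each $\eps$-collar.
\begin{itemize}
\item Pairs inside a single cylinder are approximated by pairs of its boundary points (Lemma~\ref{lem:same-cylinder-snapping}).
\item For each pair of distinct cylinders $A,B$, the boundary points of each cylinder are weighted by their distance to the other cylinder's boundary points. The vertex of the additively weighted Voronoi diagram in each cylinder is then taken (Lemmas~\ref{lem:voronoi-approx} and~\ref{lem:cylinder-voronoi}).
\end{itemize}
This recovers $\max_{A\times B} d_S$ up to $2\cdot\eps/2$; that factor $2$ is exactly why the input is an $\eps/2$-net rather than an $\eps$-net. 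The extra cost is $O(g)$ additional single-source structures plus $O(g^2)$ cylinder pairs at $O(\log g)$ per query, which fits in the $O(g^3\log g/\eps^2)$ budget.
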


Finally, we investigate the exact computation of the diameter from the viewpoint of arithmetic complexity. Although our approximation algorithm is efficient, it is natural to ask whether the exact diameter can be computed algorithmically. We show that this problem reduces to solving a finite collection of polynomial systems. While the resulting procedure is exponential and therefore not intended as a practical algorithm, it establishes that the diameter is determined by algebraic data over the field generated by the input.

\begin{theorem}
Let $S$ be a hyperbolic surface given by a finite set
of generators of $\Gamma$, each a hyperbolic isometry. Let $\mathbb{K}$ be the rational extension containing the real and imaginary parts of the coefficients of the generators of $\Gamma$. 
The exact diameter of $S$ can be computed by solving exponentially many systems of polynomial equations. In particular, $\cosh(\operatorname{diam}(S))$ lies in a finite extension of $\mathbb{K}$.
\end{theorem}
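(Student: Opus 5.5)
The diameter is $\max_{x\in S}\max_{y\in S} d_S(x,y)$. The plan is to show that the maximum is attained at a configuration that is rigid, i.e. cut out by finitely many polynomial equations with coefficients in $\mathbb{K}$. After that, the value $\cosh(\operatorname{diam}(S))$ is one coordinate of an isolated solution of such a system, hence algebraic over $\mathbb{K}$.

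First fix a fundamental polygon $D$ (a Dirichlet domain, computed from the generators) with finitely many side-pairings in $\Gamma$. Lift $x,y$ to $\tilde x,\tilde y\in D$. Then
\[
d_S(x,y)=\min_{\gamma\in F}d_{\HY}(\tilde x,\gamma\tilde y)
\]
for a finite set $F\subset\Gamma$ of words of bounded length. Finiteness holds because $D$ has bounded diameter, so only finitely many translates $\gamma D$ meet the ball of radius $\operatorname{diam}(D)$ around $D$. Each $\gamma\in F$ has entries in $\mathbb{K}$, possibly after adjoining $i$. In the hyperboloid (or Poincaré disk) model,
\[
\cosh d_{\HY}(\tilde x,\gamma\tilde y)
\]
is a rational function of the coordinates of $\tilde x,\tilde y$ with coefficients in $\mathbb{K}$. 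Since $\cosh$ is increasing, maximizing $d_S$ is equivalent to maximizing the minimum of these finitely many functions $f_\gamma(\tilde x,\tilde y)$.

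Second, I characterize a maximizer combinatorially. At a maximizing pair $(x,y)$ there is an active set $A\subseteq F$, namely the $\gamma$ achieving the minimum. I also record a face of $D\times D$ containing $(\tilde x,\tilde y)$ (interior, edge, or vertex for each point). The maximum of a min of smooth functions on that face is a critical point in the nonsmooth sense. The Lagrange condition expresses this: $0$ lies in the convex hull of the gradients $\nabla f_\gamma$, $\gamma\in A$, plus the normals of the active face constraints.

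Geometrically, this is the familiar statement that $y$ is a farthest point from $x$ only if $y$ is a Voronoi vertex of the orbit of $x$, or lies on a boundary with balancing conditions. Symmetrically for $x$, the configuration is locally rigid.

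For each choice of active set $A$, face, and support pattern of multipliers, I write the system
\[
f_\gamma=t \quad (\gamma\in A),\qquad \text{face equations},\qquad \sum_\gamma\lambda_\gamma\nabla f_\gamma+\sum\mu_j n_j=0,
\]
together with normalization $\sum\lambda_\gamma=1$. After clearing denominators this is polynomial over $\mathbb{K}$. There are exponentially many choices in $|F|$, which gives the claimed complexity.

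The diameter is the largest value of $t$ among real solutions satisfying the inequalities (multipliers nonnegative, other $f_{\gamma'}\ge t$, points in $D$). Those inequalities can be checked by real-algebraic methods, e.g. cylindrical decomposition, over $\mathbb{K}$.

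The main obstacle is ensuring that the maximizing $t$ is algebraic, i.e. that the solution set does not contain positive-dimensional components on which $t$ varies. To handle this I will use the following fact. On a semialgebraic set $V$ defined over $\mathbb{K}$, the function $t$ is semialgebraic, so by the semialgebraic Sard / finiteness of critical values it takes only finitely many critical values on $V$. The true maximum of $d_S$ is attained at some point of one of the strata, since $S$ is compact. That value is a critical value of $t$ on the corresponding stratum, which is defined over $\mathbb{K}$. Finitely many such values are algebraic over $\mathbb{K}$, being defined by a first-order formula over the real closure of $\mathbb{K}$ with finitely many solutions. Hence $\cosh(\operatorname{diam}(S))$ lies in a finite extension of $\mathbb{K}$.
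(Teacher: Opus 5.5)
Your proof is correct, but it takes a different route from the paper.

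\textbf{The paper's route.} It treats only the source point as unknown. It uses that a farthest point from $s$ is a vertex of $D_{\tilde s}$, and restricts to the finitely many isometries $\Gamma_{4}$ of displacement at most $4\operatorname{diam}(S)$. It then cuts the space of base points by degree-$8$ ``wall'' curves, where four orbit points become cocyclic, so that the combinatorics of $D_{\tilde s}$ is constant on each cell. Finally it maximizes $\cosh d(\tilde s, v_{\gamma_1,\gamma_2}(\tilde s))$ over each semialgebraic cell, the Voronoi vertex $v_{\gamma_1,\gamma_2}$ being quadratic over $\mathbb{K}(\tilde s)$.

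\textbf{Your route.} You keep both points as unknowns and write $\cosh\operatorname{diam}(S)=\max_{D\times D}\min_{\gamma\in F} f_\gamma$ for an explicit finite $F$. You then generate candidate systems from nonsmooth KKT conditions. This is simpler and needs no control of the Voronoi combinatorics. In fact algebraicity follows from Tarski--Seidenberg as soon as you have the max--min formula: the maximum of a semialgebraic function over a compact semialgebraic set defined over $\mathbb{K}$ lies in the real closure of $\mathbb{K}$. So your Sard step is unnecessary, and the KKT systems serve only the algorithmic claim.

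\textbf{Two points to tighten in the algorithmic part.} First, the KKT condition needs a constraint qualification. It does hold for the reformulation ``maximize $t$ subject to $f_\gamma \ge t$ for all $\gamma\in F$, $(\tilde x,\tilde y)\in D\times D$'', since the sides of $D$ meet transversally at its corners. Second, enumerating all active sets $A \subseteq F$ gives $2^{|F|}$ systems, with $|F|$ already exponential in $\operatorname{diam}(S)$; that is doubly exponential. By conic Carath\'eodory in $\mathbb{R}^5$, at most five active gradients are needed in the multiplier relation, with $f_{\gamma'}\ge t$ imposed for the remaining $\gamma'$. This brings your count down to polynomial in $|F|$, comparable to the paper's.

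\textbf{What the paper's approach buys.} It gives a structural description of the farthest point as a Voronoi vertex of the orbit. It optimizes over two variables rather than four. And it yields an explicit bound $O(N^6(\log N + g^2))$ in terms of $\operatorname{diam}(S)$.
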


The remainder of the paper is organized as follows. After a short preliminaries section, Section~\ref{sec:distances} presents our algorithm for computing the shortest path between two points. Section~\ref{sec:datastructure} extends this construction to obtain a single-source shortest-path data structure and an algorithm for computing Dirichlet domains centered at arbitrary points. Section~\ref{sec:diameter} is devoted to the approximation of the diameter using $\eps$-nets. Finally, Section~\ref{sec:exact-diameter} investigates the arithmetic complexity of the exact diameter and proves that it can be computed by solving a finite collection of polynomial systems.

\section{Preliminaries}
\subsection{Hyperbolic Surfaces}
The hyperbolic plane is the unique $2$-dimensional simply connected Riemannian manifold with constant curvature $-1$. While it cannot be isometrically embedded in Euclidean space, several models allow us to study it. Throughout, we work with the \emph{Poincar\'e disk model}. 

\begin{definition}[Poincar\'e disk]
The Poincar\'e disk, $\mathbb{H}^2$, is the open unit disk in the complex plane, equipped with the metric $d_{\HY^2}: \HY^2 \times \HY^2 \rightarrow \RE^{\geq 0}$, such that:
\[
d_{\mathbb{H}^2}(z,w) = \operatorname{arccosh}\!\left(1 + 
\frac{2|z-w|^2}{(1-|z|^2)(1-|w|^2)}\right).
\]
\end{definition}

Note that the expression inside of $\operatorname{arccosh}$ is a rational function of $z$ and $w$. Throughout the paper:

\begin{definition}[Hyperbolic surface]
 A hyperbolic surface is a closed orientable Riemannian $2$-manifold without boundary and of constant curvature $-1$.
\end{definition}

Every such surface $S$ has genus $g \geq 2$ by the Gauss-Bonnet theorem and is isometric to a quotient $S=\mathbb{H}^2 / \Gamma$, each $\gamma \in \Gamma$ is a hyperbolic isometry:
\[
\gamma(z) = \frac{uz+v}{\bar{v}z+\bar{u}}, \qquad |u|^2 - |v|^2 = 1.
\]

A \textbf{geodesic} is a curve that is locally distance-minimizing. On a negatively curved surface, every homotopy class of paths with fixed endpoints contains a unique geodesic. We call any globally shortest path between two points a \textbf{distance path}. In the hyperbolic plane $\mathbb{H}^2$, every geodesic is a distance path. This is not true on a hyperbolic surface $S$. A curve on $S$ is a geodesic if and only if each of its lifts to $\mathbb{H}^2$ is a geodesic, but not all of these are distance paths on $S$. The length of a distance path between two points on $S$ is given by $d_S(s_1,s_2)=\min_{\gamma\in\Gamma} d_{\mathbb{H}^2}(\tilde s_1,\gamma\tilde s_2),$ where $\tilde s_1,\tilde s_2\in\mathbb{H}^2$ are arbitrary lifts of $s_1,s_2\in S$.

\subsection{The discrete geodesic problem}\label{sec:discrete-geodesic}

The \emph{discrete geodesic problem} asks how to compute the shortest path between two points on the surface of a polyhedron $P$. Mitchell, Mount, and Papadimitriou \cite{mitchell1987discrete} introduce a continuous Dijkstra algorithm for solving this problem which uses a technique called \emph{windowing}. A \emph{window} is a sub-interval of an edge of a triangulation $P$, together with a point representing a translated source, called a \emph{virtual start}; the distance from the source to any point in the window is measured as the distance from this virtual start. Windows are propagated across faces of the triangulation of $P$ by a continuous Dijkstra sweep. An interval is split whenever a competing image dominates part of it. On a polyhedron with $n$ edges this produces $O(n^2)$ windows and yields an $O(n^2 \log n)$ algorithm, along with a shortest path map supporting $O(\log n)$ point queries by point location in the resulting subdivision.

Chen and Han \cite{chen1996shortest} removed the logarithmic factor from the complexity of Mitchell, Mount, and Papadimitriou \cite{mitchell1987discrete} using a \emph{sequence tree}. Rather than using a priority queue based approach, they process windows in breadth-first order. The nodes of their tree are windows. When a window crosses into the adjacent face, its interval is projected onto the two opposite edges, and each resulting subinterval becomes a child node. They bound the horizontal growth of the tree using the \emph{one-angle-one-split} property: at each vertex $A$ of the triangulation of $P$, among all images competing for the angle at $A$, only the one closest to $A$ may fork into children on both incident edges, while the others produce a single child. This keeps the number of leaves at $O(n)$ per level. By proving that they only need to check a depth of $O(n)$ levels,  they get $O(n^2)$ windows/nodes in $O(n^2)$ time.

In our setting we adapt both of these methods. Chen and Han's query structure, however, relies on the planar unfolding of the surface and on curvature being concentrated at vertices \cite{aronov1991nonoverlap}, neither of which holds for a hyperbolic surface. Whether their $O(n^2)$ preprocessing time can be matched in our setting is an open problem.

\subsection{Dirichlet domains.}
Fix a basepoint $\tilde u \in \mathbb{H}^2$. The \emph{Dirichlet domain} 
centered at $\tilde u$ is the Voronoi cell of $\tilde u$ in the orbit 
$\Gamma \tilde u$:
\[
D_{\tilde u} = \bigl\{ \tilde v \in \mathbb{H}^2 \mid 
d_{\mathbb{H}^2}(\tilde u, \tilde v) \leq 
d_{\mathbb{H}^2}(\gamma \tilde u, \tilde v) 
\text{ for all } \gamma \in \Gamma\setminus\{e\} \bigr\}.
\]
The translations $\{\gamma D_{\tilde u}\}_{\gamma \in \Gamma}$ tile $\mathbb{H}^2$: their union is all of $\mathbb{H}^2$ and their interiors are pairwise disjoint. The interior of $D_{\tilde u}$ contains exactly one lift of each point of $S$ not lying on the projection of $\partial D_{\tilde u}$; the remaining points have their lifts on $\partial D_{\tilde u}$, identified by the side-pairings. The boundary of $D_{\tilde u}$ consists of $k$ geodesic segments called \emph{sides}, where $4g \leq k \leq 12g-6$ \cite{despreComputingDirichletDomain2023}, grouped into $k/2$ identified pairs called \emph{portals}. Each pair is related by a \emph{side-pairing} isometry $\gamma \in \Gamma$; these generate $\Gamma$ and determine $S$ completely.

\begin{definition}[Portal]
Let $s_1, \ldots, s_k$ be the sides of a Dirichlet domain $D_{\tilde u}$. The sides are grouped into $k/2$ pairs $\{p, p'\}$, where each pair is related by a unique \emph{side-pairing} isometry $\gamma_p \in \Gamma$ with $\gamma_p(p') = p$. Each such pair is called a \emph{portal} of $D_{\tilde u}$.
\end{definition}

\section{Distance path between two points}\label{sec:distances}

\subsection{Supporting lemmas and proofs}
\label{sec:ShortesPath}
To compute the distance path between two points on $S$ we adapt the sequence tree approach of Chen \& Han \cite{chen1996shortest} based on the windowing shortest-path framework of the discrete geodesic problem \cite{mitchell1987discrete} (see Sec \ref{sec:discrete-geodesic}). We exploit the convexity of the Dirichlet domain $D_{\tilde c}$ for a fixed base-point $\tilde c \in \HY^2$ and the distance minimizing properties of its edges \cite{despre2024representing}. The boundary of the Dirichlet domain $D_{\tilde c}$ consists of $k$ geodesic edges, where $4g \leq k \leq 12g-6$, forming $\frac{k}{2}$ identified portal pairs, each identified by a hyperbolic isometry $\gamma_p$. We construct the \emph{fan triangulation} of $D_{\tilde c}$ by connecting $\tilde c$ to each vertex of $\partial D_{\tilde c}$ by a geodesic in $\mathbb{H}^2$, called \emph{fan edges} (see Fig~\ref{fig:setup} (c)). This produces $O(g)$ triangles whose edges we use to propagate distance information from $\tilde{\alpha}$ to $\tilde{\omega}$. We emphasize that every computation will be done in $\mathbb{H}^2$ and not directly on $S$. Indeed, given start and end points $\alpha, \omega \in S$ we will use their lifts $\tilde{\alpha}, \tilde{\omega} \in D_{\tilde c}$, (see Fig~\ref{fig:setup} (a)-(b)), and then compute windows with virtual starts being points of $\Gamma\tilde{\alpha}$.

\begin{figure}[h]
\centering
\includegraphics[page = 5, scale=.7]{images/GolfRoute3.pdf}
  \caption{(a) A Dirichlet domain with base point $\tilde c$, portals identified by colors (pairs of edges), an end $\tilde{\omega}$ and a start $\tilde{\alpha}$, (b) with portals $p_i$ enumerated, (c) and fan edges $e_i$ around $\tilde c$. }
  \label{fig:setup}
\end{figure}

\begin{definition}[Virtual start]\label{def:virtual-start-distance}
Let $p_1, \dots, p_j$ be a sequence of portals crossed in order. The \emph{virtual start} associated with this sequence is $\tilde{\alpha}' = \gamma_{p_j}^{-1} \circ \cdots \circ \gamma_{p_1}^{-1}(\tilde{\alpha})$.
\end{definition}

This takes on the role of unfolding images in Chen \& Han \cite{chen1996shortest}. Rather than unfolding triangles in the Euclidean plane, each portal crossing applies a hyperbolic isometry $\gamma_{p_i}^{-1}$, placing a virtual start in a different fundamental domain of $S$ in $\HY^2$. This is equivalent to placing the start in the neighboring tile across the portal, so that the geodesic from this relocated start reaches its target exactly as a path through the portal would.

\begin{remark}[Virtual start distance]
\label{rem:virtual-start-distance}
For any point $\tilde{q}$ in a Dirichlet domain $D_{\tilde c}$ and virtual start $\tilde{\alpha}'$ associated with portal sequence $p_1, \dots, p_j$, the length of the shortest path from $\tilde{\alpha}$ to $\tilde{q}$ crossing portals $p_1, \dots, p_j$ in order equals $d_{\mathbb{H}^2}(\tilde{\alpha}', \tilde{q})$.
\end{remark}

Indeed, portal $p_1$ is a geodesic edge of $\partial D_{\tilde c}$ identified with its partner $p_1'$ via $\gamma_{p_1}$. Crossing $p_1$ from $\tilde{\alpha}$ is equivalent to applying $\gamma_{p_1}^{-1}$ to $\tilde{\alpha}$ and placing the virtual start at $\gamma_{p_1}^{-1}(\tilde{\alpha})$ in the cell that borders $D_{\tilde c}$ along $p_1'$. Since $\gamma_{p_1}^{-1}$ is a hyperbolic isometry the shortest path from $\tilde{\alpha}$ to $\tilde{q}$ crossing $p_1$ has length $d_{\mathbb{H}^2}(\gamma_{p_1}^{-1}(\tilde{\alpha}), \tilde{q})$. Applying this argument successively to each portal $p_2, \dots, p_j$ gives $\tilde{\alpha}' = \gamma_{p_j}^{-1} \circ \cdots \circ \gamma_{p_1}^{-1}(\tilde{\alpha})$ and total length $d_{\mathbb{H}^2}(\tilde{\alpha}', \tilde{q})$.

To establish the correctness and complexity of our algorithm, we need to bound the number of portals traversed by a distance path. The key observation is that two distance paths cannot cross more than once. We prove the following two lemmas, which are essentially tailored versions of results from~\cite{despre2024representing}.

\begin{lemma}[Fan edges]\label{lem:fan-happy}
    A distance path on $S$ crosses each fan edge at most once.
\end{lemma}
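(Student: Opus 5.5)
The plan is to reduce the statement to the principle that two distance paths on $S$ cannot cross twice. The first step is to show that the projection to $S$ of a fan edge is itself a distance path. The second step is a cut-and-paste argument.

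\emph{Step 1: fan edges are distance paths.} Let $e$ be the fan edge from $\tilde c$ to a vertex $\tilde v$ of $\partial D_{\tilde c}$. By definition of the Dirichlet domain, every $\tilde q\in D_{\tilde c}$ satisfies
\[
d_{\HY^2}(\tilde c,\tilde q)\le d_{\HY^2}(\gamma\tilde c,\tilde q)=d_{\HY^2}(\tilde c,\gamma^{-1}\tilde q)\quad\text{for all }\gamma\in\Gamma .
\]
Hence $d_S(c,q)=\min_{\gamma} d_{\HY^2}(\tilde c,\gamma\tilde q)=d_{\HY^2}(\tilde c,\tilde q)$. Since $D_{\tilde c}$ is convex, the geodesic segment $[\tilde c,\tilde q]$ lies in $D_{\tilde c}$, and its projection is a path on $S$ of length exactly $d_S(c,q)$. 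Applying this with $\tilde q=\tilde v$ shows that the projection of $e$ is a distance path from $c$ to $v$. Any subarc of a distance path is again a distance path, so for any two points $x,y$ on $e$ the portion of $e$ between them realizes $d_S(x,y)$.

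\emph{Step 2: two crossings give a contradiction.} Suppose a distance path $\sigma$ on $S$ crosses (the projection of) $e$ at two distinct points $x$ and $y$. By Step 1 and the subarc property, both $e[x,y]$ and $\sigma[x,y]$ are shortest paths from $x$ to $y$. They are distinct geodesics, because $\sigma$ crosses $e$ transversally. If $\sigma$ instead ran along $e$ between $x$ and $y$, that overlap would be a single crossing and not two. Order the points so that $x$ lies between $c$ and $y$ on $e$. Replace $e[x,y]$ by $\sigma[x,y]$ to get a path from $c$ to $y$ of length $d_{\HY^2}(\tilde c,\tilde x)+d_S(x,y)=d_S(c,y)$. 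This path has a genuine corner at $x$, since the two geodesics meet transversally there. Cutting the corner, by the triangle inequality in a small convex ball around $x$, yields a strictly shorter path from $c$ to $y$, which is a contradiction. If $x=c$, apply the symmetric argument: splice at $y$ and compare paths from $x$ to a point of $e$ just beyond $y$. Alternatively, splice $\sigma$ instead, replacing $\sigma[x,y]$ by $e[x,y]$ inside $\sigma$ and using that $\sigma$ continues past $y$ (or precedes $x$) transversally.

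\emph{Main obstacle.} The delicate part is making ``crossing'' precise and handling degenerate endpoint cases. These include crossings exactly at $c$ or at the vertex $v$, where several lifts of fan edges meet on $S$. They also include the fact that $e$ is a distance path only up to its endpoint $\tilde v\in\partial D_{\tilde c}$, whose projection may coincide with other vertices. My first approach would be to define a crossing as a transversal intersection at a point of the open fan edge, which lies in the interior of $D_{\tilde c}$ where the projection is injective. I would treat tangential or overlapping contact as a single crossing, since two geodesics that share a segment coincide on a maximal common interval. The endpoint $v$ can then be handled by the same splice, using whichever of $\sigma$ or $e$ continues past the second intersection point.
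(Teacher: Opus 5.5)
Your proposal is correct and follows the same route as the paper. The paper's proof simply cites Lemma~3 of \cite{despre2024representing} for the fact that fan edges are distance paths, and Lemma~1 there for the fact that two distance paths cross at most once. You prove both ingredients directly, via the Dirichlet inequality plus convexity and a corner-cutting splice. Restricting to transversal crossings at interior points of the fan edge is a worthwhile precision. Distinct fan edges to $\Gamma$-equivalent vertices project to distinct distance paths sharing both endpoints $c$ and $v$, so ``intersect at most once'' cannot be read literally.
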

\begin{proof}
    Each fan edge is a distance path, by Lemma~3 of \cite{despre2024representing}, and hence intersects another distance path at most once by Lemma~1 of \cite{despre2024representing}.
\end{proof}

\begin{lemma}[Portal sides]\label{lem:portal-happy}
    A distance path on $S$ crosses each portal side at most twice.
\end{lemma}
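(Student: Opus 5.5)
We prove Lemma~\ref{lem:portal-happy} the same way as Lemma~\ref{lem:fan-happy}: show that each portal side, viewed on $S$, is a union of at most two distance paths, and then apply Lemma~1 of \cite{despre2024representing} (two distance paths intersect at most once) to each piece. The natural split point is the foot of the perpendicular from the base point. Let $p$ be a portal side of $D_{\tilde c}$ with partner $p'$, so $\gamma_p(p')=p$. Then $p$ lies on the perpendicular bisector of $\tilde c$ and $\gamma_p\tilde c$. Let $\tilde m\in p$ be the foot of the perpendicular dropped from $\tilde c$ to the line supporting $p$. Since $D_{\tilde c}$ is convex and $p$ is a full side, I expect $\tilde m$ to lie on $p$ itself: it is the midpoint of the segment $[\tilde c,\gamma_p\tilde c]$, and by the Dirichlet property this midpoint is in $D_{\tilde c}$. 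If this fails, $p$ is monotone in distance from $\tilde c$ and is a single piece, which only makes the argument easier.

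\textbf{Step 1.} Split $p$ at $\tilde m$ into two subsegments $p^+$ and $p^-$. On each subsegment, $d_{\mathbb{H}^2}(\tilde c,\cdot)$ is monotone, and every point $\tilde x\in p$ satisfies $d_S(c,x)=d_{\mathbb{H}^2}(\tilde c,\tilde x)$ because $p\subset D_{\tilde c}$.

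\textbf{Step 2.} Show that each of $p^\pm$ projects to a distance path on $S$. For this I would reuse the mechanism behind Lemma~3 of \cite{despre2024representing}, which shows that segments inside a Dirichlet domain emanating from its center are distance paths. Concretely, take $\tilde x,\tilde y\in p^+$ with $\tilde x$ between $\tilde m$ and $\tilde y$, and suppose some $\gamma\neq e$ gives $d_{\mathbb{H}^2}(\tilde x,\gamma\tilde y)<d_{\mathbb{H}^2}(\tilde x,\tilde y)$. The triangle $\tilde c\tilde m\tilde y$ is right-angled at $\tilde m$, and $\gamma\tilde y$ lies in the tile $\gamma D_{\tilde c}$, whose center is $\gamma\tilde c$. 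Combining the triangle inequality, the hyperbolic Pythagorean relation $\cosh d(\tilde c,\tilde y)=\cosh d(\tilde c,\tilde m)\cosh d(\tilde m,\tilde y)$, and the inequality $d(\tilde c,\tilde y)\le d(\gamma'\tilde c,\tilde y)$ for all $\gamma'$, should yield a contradiction. A cleaner alternative is to note that $p^+$ lies in the closure of the triangle $\tilde c\,\tilde m\,\tilde v$, where $\tilde v$ is the endpoint vertex of $p^+$, and that this triangle is contained in $D_{\tilde c}$. One would then argue that any geodesic contained in a convex subset of $D_{\tilde c}$ containing $\tilde c$ on one side is minimizing, via the fact that $D_{\tilde c}$ embeds isometrically for distances measured from $\tilde c$.

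\textbf{Step 3.} Conclude by applying Lemma~1 of \cite{despre2024representing} to each of $p^+$ and $p^-$: a distance path meets each at most once, hence meets $p$ at most twice.

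The main obstacle is Step~2. The issue is not proving that individual segments from $\tilde c$ are minimizing; the issue is that a subsegment of the boundary side itself is a distance path, and points of $p$ also have lifts on $p'$, which complicates the argument. If the direct argument fails, the fallback is to give a weaker representation of $p$: express it through the two fan edges $[\tilde c,\tilde v_1]$, $[\tilde c,\tilde v_2]$ at its endpoints together with the segment $[\tilde c,\tilde m]$, and use a topological crossing argument in the triangle $\tilde c\tilde v_1\tilde v_2$. Since a distance path's lift is a single geodesic in $\mathbb{H}^2$, it enters and exits each of the two subtriangles $\tilde c\tilde m\tilde v_i$ at most once. Each crossing of $p$ must happen inside one of these subtriangles, so $p$ is crossed at most twice.
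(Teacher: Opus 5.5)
Your strategy is the paper's: the paper takes the decomposition of a portal side into at most two distance paths from Proposition~4 of \cite{despre2024representing} (a side is a distance path or a half-minimizer), then concludes with Lemmas~1--2 there, which is your Step~3. The gap is Step~2, which you leave open, and none of the three justifications you offer works.

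\emph{The first sketch.} It is not carried out, and its ingredients do not combine through the triangle inequality. From $d(\tilde x,\gamma\tilde y)<d(\tilde x,\tilde y)$ you only get $d(\tilde c,\gamma\tilde y)<d(\tilde c,\tilde x)+d(\tilde x,\tilde y)$, which is compatible with $d(\tilde c,\gamma\tilde y)\ge d(\tilde c,\tilde y)$. Moreover, the center $\tilde c$ alone cannot rule out lifts $\gamma\tilde y$ lying across the line $L$ supporting $p$.

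\emph{The ``cleaner alternative''.} It rests on a false principle. The whole side $p$ lies in the convex triangle $\tilde c\,\tilde v_1\tilde v_2\subset D_{\tilde c}$, yet $p$ need not be minimizing. In the regular octagon of the Bolza surface all vertices are identified, so every side projects to a geodesic loop. Knowing that distances from $\tilde c$ are realized inside $D_{\tilde c}$ says nothing about distances between two other points.

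\emph{The fallback.} It counts the wrong thing. Crossings of $p$ on $S$ are intersections of one lift $\tilde\pi$ with many translates $\gamma p$, not of one geodesic with one triangle. Since the argument never uses minimality, it would bound the crossings of every geodesic, which is false.

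The missing idea is a ball-covering argument that uses both centers $\tilde c$ and $\gamma_p\tilde c$. Take $\tilde x,\tilde y\in p$ with $\tilde x$ between $\tilde m$ and $\tilde y$ on $L$. This also covers the case $\tilde m\notin p$, so your unjustified claim that $\tilde m\in p$ is not needed. Put $a=d(\tilde c,\tilde x)$, $b=d(\tilde x,\tilde y)$, $r=d(\tilde c,\tilde y)=d_S(c,y)$.
\begin{itemize}
\item Since $\tilde m$ is the foot of the perpendicular, $\psi=\angle\tilde c\tilde x\tilde y\ge\pi/2$. Hence $\angle\tilde c\tilde x\tilde z\le\psi$ for every $\tilde z\neq\tilde x$ in the closed half-plane of $L$ containing $\tilde c$.
\item If moreover $s=d(\tilde x,\tilde z)<b$, the law of cosines gives $\cosh d(\tilde c,\tilde z)\le\cosh a\cosh s-\sinh a\sinh s\cos\psi<\cosh a\cosh b-\sinh a\sinh b\cos\psi=\cosh r$. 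The strict step holds because the middle expression is strictly increasing in $s$ when $\cos\psi\le 0$.
\item Reflecting across $L$, which fixes $\tilde x,\tilde y$ and swaps $\tilde c$ with $\gamma_p\tilde c$, gives the same bound with $\gamma_p\tilde c$ on the other side.
\end{itemize}
Thus every point of the open ball $B(\tilde x,b)$ is at distance less than $r$ from $\tilde c$ or from $\gamma_p\tilde c$. But every lift of $y$ is at distance at least $r$ from every lift of $c$, so no lift of $y$ lies in $B(\tilde x,b)$, and $d_S(x,y)=b$. Applied to the endpoints of $p^{\pm}$, this proves Step~2. The inequality $\psi\ge\pi/2$ is exactly where the split at $\tilde m$ is needed.
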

\begin{proof}
    By Proposition~4 of \cite{despre2024representing}, each portal side is either a distance path or a half-minimizer. In the first case, two distance paths intersect at most once by Lemma~1 of \cite{despre2024representing}. In the second case, a distance path intersects a half-minimizer (the concatenation of at most two distance paths) at most twice by Lemma~2 of \cite{despre2024representing}.
\end{proof}

\begin{lemma}\label{lem:connected}
    Let $\tilde{\alpha}_1$ and $\tilde{\alpha}_2$ be two virtual starts. The region $\{\tilde q\in D_{\tilde c}\mid d_{\mathbb{H}^2}(\tilde q,\tilde{\alpha}_1) < d_{\mathbb{H}^2}(\tilde q,\tilde{\alpha}_2) \}$ is a geodesically convex subset of $D_{\tilde c}$.
\end{lemma}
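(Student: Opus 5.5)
The set in question is the intersection of $D_{\tilde c}$ with an open hyperbolic half-plane, and both pieces are geodesically convex in $\HY^2$. I would prove the statement by establishing these two facts and then intersecting. If $\tilde\alpha_1=\tilde\alpha_2$ the region is empty, hence trivially convex, so assume $\tilde\alpha_1\neq\tilde\alpha_2$.

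\textbf{Step 1: the half-plane.} Let $H=\{\tilde q\in\HY^2\mid d_{\HY^2}(\tilde q,\tilde\alpha_1)<d_{\HY^2}(\tilde q,\tilde\alpha_2)\}$. The cleanest route is an isometry argument. Choose an isometry $\phi$ of $\HY^2$ sending $\tilde\alpha_1,\tilde\alpha_2$ to $-t,t$ on the real diameter for some $t\in(0,1)$. By the distance formula of the Poincar\'e disk, $d(\tilde q,-t)<d(\tilde q,t)$ is equivalent to $|\tilde q+t|^2<|\tilde q-t|^2$, because the denominators agree. This in turn is equivalent to $\operatorname{Re}\tilde q<0$. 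So $\phi(H)$ is the open half-disk bounded by the imaginary axis, which is a geodesic. Hence $H$ is an open half-plane bounded by the perpendicular bisector of $\tilde\alpha_1\tilde\alpha_2$.

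\textbf{Step 2: convexity of each piece.} A half-plane bounded by a complete geodesic is geodesically convex. Given two points of $H$, the geodesic segment joining them meets the bounding geodesic at most once, because two distinct geodesics of $\HY^2$ meet at most once. If it crossed into the complement, it would have to cross the boundary twice to return, which is impossible. The Dirichlet domain $D_{\tilde c}$ is, by its definition, an intersection of closed half-planes of the same bisector type, one for each $\gamma\neq e$. It is therefore geodesically convex, as already noted in this section.

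\textbf{Step 3: conclusion.} The intersection of geodesically convex sets is geodesically convex. Moreover, geodesics in $\HY^2$ between two points are unique, so convexity in $\HY^2$ means exactly that the segment stays in the set. This gives the lemma, and connectedness in particular.

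There is no real obstacle. The only point needing care is to interpret ``geodesically convex'' in $\HY^2$, with segments taken inside $D_{\tilde c}\subset\HY^2$, and not on $S$. If one prefers to avoid the normalization in Step 1, one can instead show that the bisector $\{d(\cdot,\tilde\alpha_1)=d(\cdot,\tilde\alpha_2)\}$ is a geodesic using reflection symmetry.
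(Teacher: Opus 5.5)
Your proposal is correct and follows the paper's own argument. You identify the region as $D_{\tilde c}$ intersected with the open half-plane bounded by the bisector geodesic of $\tilde\alpha_1,\tilde\alpha_2$, and then use that both sets are geodesically convex. The only difference is that you prove the steps the paper asserts without proof: that the bisector is a geodesic (your normalization to $\pm t$), that half-planes are convex, and that $D_{\tilde c}$ is convex as an intersection of closed half-planes.
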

\begin{proof}
The bisector between $\tilde{\alpha}_1$ and $\tilde{\alpha}_2$ in $\mathbb{H}^2$ is a geodesic $J$. The region $\{\tilde q \in \mathbb{H}^2 \mid d_{\mathbb{H}^2}(\tilde q, \tilde{\alpha}_1) < d_{\mathbb{H}^2}(\tilde q, \tilde{\alpha}_2)\}$ is a half-space of $\mathbb{H}^2$ bounded by $J$, which is geodesically convex since half-spaces in $\mathbb{H}^2$ are geodesically convex. The intersection of this half-space with $D_{\tilde c}$ is geodesically convex, since $D_{\tilde c}$ is geodesically convex and the intersection of two geodesically convex sets in $\mathbb{H}^2$ is geodesically convex. In particular, its intersection with any edge of the triangulation is a connected interval.
\end{proof}

\begin{algorithm}[!ht]
  \caption{Shortest distance} 
  \label{alg:sd}

  \KwIn{$\alpha$ and $\omega$ two points of  a hyperbolic surface $S$ given by a Dirichlet domain $D_{\tilde c}$ with $k$ sides and $\tilde \alpha$ and $\tilde \omega$ their lifts in $D_{\tilde c}$.}

  \KwOut{$d_S(\alpha, \omega)$}

    \textbf{Initialize.} Create a sequence tree with root $\tilde{\alpha}$. For each edge of the triangle containing $\tilde{\alpha}$ create a  window with virtual start $\tilde{\alpha}$ covering its entire length and insert it as a child of the root. If $\tilde{\omega}$ lies in this triangle, record $d_{\mathbb{H}^2}(\tilde{\alpha}, \tilde{\omega})$ as the current best distance. Create the angle occupant table of the vertices, set the angles of the starting triangle to the root of the sequence tree and the rest to empty.

\For{$i$ from 1  to $3k$}{
   For each leaf of the $i^{th}$ level of the sequence tree, its window is given by a virtual start $\tilde{\alpha}'$, an edge $e$ of the fan triangulation of $D_{\tilde c}$ and an interval on $e$.  
   
   Insert as children of the current leaf the windows determined as follows:
   
    \textbf{\quad (a) Update the virtual start.} Let $\tilde{\alpha}'$ be the virtual start and $e$ the edge of the leaf's window. If $e$ is a portal side $p_j$, crossing it gives the new virtual start $\tilde{\alpha}'' = \gamma_{p_j}^{-1}(\tilde{\alpha}')$ by Definition~\ref{def:virtual-start-distance} (See Fig \ref{fig:portal} (a)). Else $e$ is a fan edge, the virtual start remains the same for $\tilde{\alpha}'' = \tilde{\alpha}'$. 

    \textbf{\quad (b) Identify the next triangle.} If $e$ is a portal side $p_i$, the window propagates into the triangle of $D_{\tilde c}$ incident to the partner side $p_i'$ (See Fig \ref{fig:portal} (b)-(c)). If $e$ is a fan edge, it is shared by exactly two triangles of the fan triangulation, and the window propagates into the one not containing the window's parent.
        
    \textbf{\quad (c) Propagate the window.} Let $A$ be the vertex of the triangle opposite $e$, and let $e_1$, $e_2$ be the two sides of the triangle meeting at $A$. 
    
    If the geodesic cone (called shadow in \cite{chen1996shortest}) from $\tilde{\alpha}''$ through the interval on $e$ does not contain $A$, create one child window on whichever of $e_1$, $e_2$ the cone reaches. In each case, the child's interval is the intersection of the geodesic cone from $\tilde{\alpha}''$ with the corresponding edge. 
    
    Otherwise we compare $d_{\mathbb{H}^2}(\tilde{\alpha}'', A)$ against $d_{\mathbb{H}^2}(\tilde{\alpha}''', A)$, where $\tilde{\alpha}'''$ is the virtual start of the window currently occupying $\angle e_1 A e_2$ (See Fig \ref{fig:portal} (d)).
    
        \textbf{\quad\quad i.} If $\tilde{\alpha}''$ is closer to $A$, its window takes over occupancy and creates two child windows on $e_1$ and $e_2$. 
        The subtree rooted at the child of the previous occupant on the side where $\tilde{\alpha}''$ is closer (See Fig \ref{fig:portal} (e)) is deleted from the sequence tree.
        
        \textbf{\quad\quad ii.}  Otherwise $\tilde{\alpha}''$ creates a single child window, on whichever of $e_1$, $e_2$ it does not lose entirely to $\tilde{\alpha}'''$ (See Fig \ref{fig:portal} (f)).
        
    \textbf{\quad (d) Update the best distance.}  If $\tilde{\omega}$ lies in the next triangle and within the geodesic cone from $\tilde{\alpha}''$, update the current best distance if $d_{\mathbb{H}^2}(\tilde{\alpha}'', \tilde{\omega})$ improves on it.
}
  
    \Return the current best distance.
\end{algorithm}

\subsection{Shortest Distance Algorithm}
The input of Algorithm~\ref{alg:sd} is a Dirichlet domain $D_{\tilde c}$, a start $\tilde{\alpha}$ and an end $\tilde{\omega}$ both inside $D_{\tilde c}$, lifts of $\alpha,\omega \in S$. The algorithm works within the fan triangulation of $D_{\tilde c}$, whose edges are the fan edges from $\tilde{c}$ together with the portal sides. The algorithm tracks windows only on the fan triangulation of $D_{\tilde c}$ itself and never on edges of other lifts.

The algorithm maintains a \emph{sequence tree} whose nodes are windows and an \emph{angle occupant table}, storing, for each vertex $v$ of the fan triangulation, a pointer to the window whose virtual start is currently closest to $v$. This is used to enforce the one-angle-one-split property of \cite{chen1996shortest}: at most one window may create children on both edges adjacent to $v$.

The sequence tree is processed level by level in breadth-first order, following \cite{chen1996shortest}. By Lemma~\ref{lem:fan-happy} and Lemma~\ref{lem:portal-happy}, any shortest path crosses each portal side at most twice and each fan edge at most once. As we show in the next section in Lemma~\ref{lem:correctnessSequence}, we may terminate after reaching $3k$ levels.

\begin{figure}[h]
\centering
\includegraphics[page = 1, scale=.7]{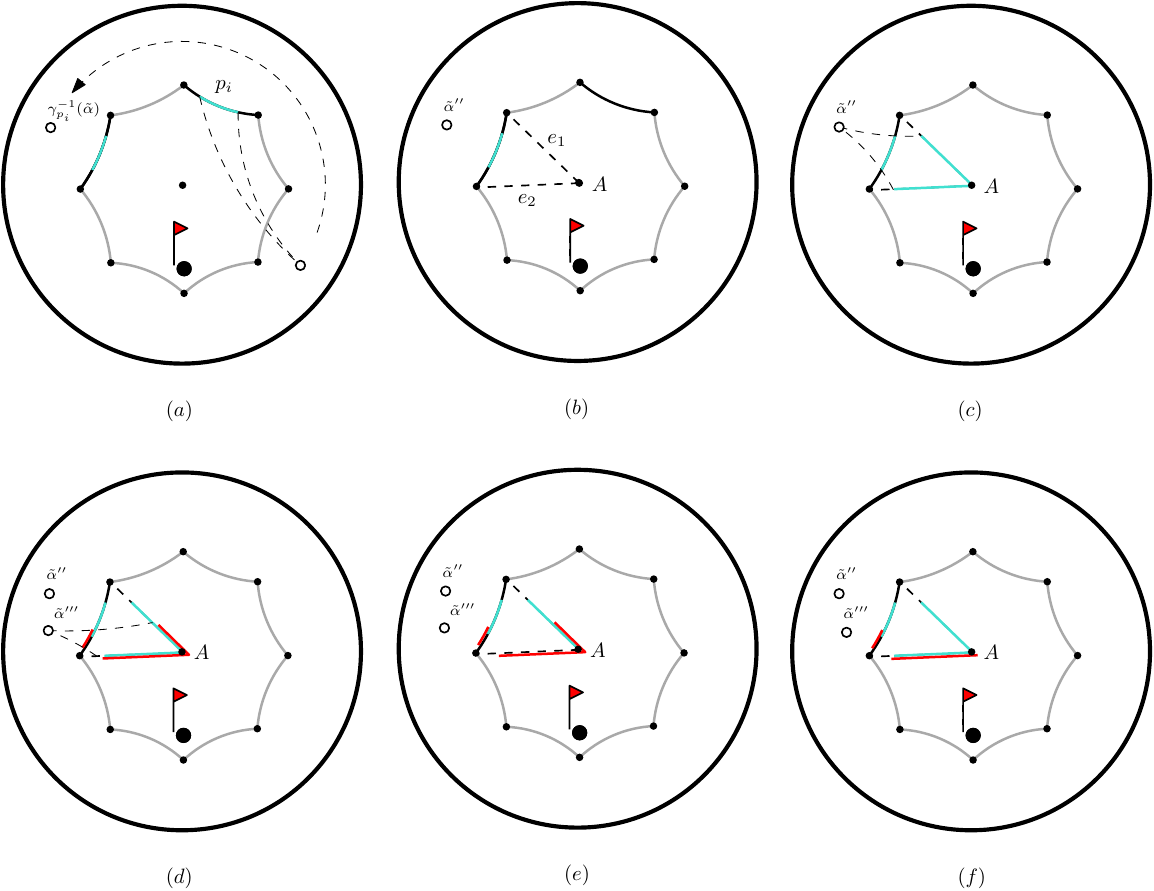}
    \caption{Propagating a window across a portal. (a) Crossing $p_i$ gives the virtual start $\tilde{\alpha}'' = \gamma_{p_i}^{-1}(\tilde{\alpha}')$. (b) The far-side triangle, with vertex $A$ and edges $e_1$, $e_2$. (c) The geodesic cone from $\tilde{\alpha}''$. (d) The cone contains $A$, so $\tilde{\alpha}''$ is compared against the occupant $\tilde{\alpha}'''$. (e) $\tilde{\alpha}'''$ is closer to $A$ and takes over, creating two children. (f) $\tilde{\alpha}''$ is farther and creates one.}
  \label{fig:portal}
\end{figure}

\subsection{Correctness and Complexity}
We prove that Algorithm~\ref{alg:sd} returns the correct shortest distance from $\alpha$ to $\omega$ on $S$, following the correctness argument of \cite{chen1996shortest}.

\begin{lemma}[Correctness]
\label{lem:correctnessSequence}
    Algorithm~\ref{alg:sd} terminates after $3k$ levels and returns distance $d^*= d_S(\alpha, \omega)$.
\end{lemma}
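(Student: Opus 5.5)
Termination is immediate because the main loop runs exactly $3k$ iterations and each level is finite: by the one-angle-one-split rule, at most one window per vertex angle forks, so each level has $O(k)$ leaves. The substance is therefore the equality $d^*=d_S(\alpha,\omega)$. I would prove the two inequalities $d^*\ge d_S(\alpha,\omega)$ and $d^*\le d_S(\alpha,\omega)$ separately.

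\emph{Soundness ($d^*\ge d_S$).} Every value the algorithm records has the form $d_{\mathbb{H}^2}(\tilde\alpha'',\tilde\omega)$, where $\tilde\alpha''=\gamma_{p_j}^{-1}\circ\cdots\circ\gamma_{p_1}^{-1}(\tilde\alpha)$ is built only through steps (a) of Algorithm~\ref{alg:sd}. Hence $\tilde\alpha''\in\Gamma\tilde\alpha$. By Remark~\ref{rem:virtual-start-distance} this value is the length of an actual path on $S$ from $\alpha$ to $\omega$, and the formula $d_S=\min_{\gamma}d_{\mathbb{H}^2}(\tilde\alpha,\gamma\tilde\omega)$ shows it is at least $d_S(\alpha,\omega)$.

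\emph{Completeness ($d^*\le d_S$).} Fix a distance path $\sigma$ from $\alpha$ to $\omega$. Cut $\sigma$ by the edges of the fan triangulation, giving a sequence of triangles and crossed edges. By Lemma~\ref{lem:fan-happy} each fan edge is crossed at most once, and by Lemma~\ref{lem:portal-happy} each portal side at most twice. The number of fan edges equals the number of vertices, at most $k$, and there are $k$ portal sides. So the number of crossings, which is the depth of the corresponding branch, is at most $k+2k=3k$; this justifies the cutoff at $3k$ levels. I would then show by induction on the crossing index $i$ the following invariant: at level $i$ some window exists on the $i$-th crossed edge, with virtual start equal to the lift of $\alpha$ determined by $\sigma$'s portal sequence, and its interval contains the crossing point of $\sigma$. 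At the last triangle, step (d) then records $d_{\mathbb{H}^2}(\tilde\alpha',\tilde\omega)=\mathrm{length}(\sigma)$.

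\emph{Inductive step.} If the cone does not contain $A$, the unique child contains the continuation of $\sigma$, since $\sigma$ is a straight geodesic from $\tilde\alpha'$ inside the cone. If the cone contains $A$, a window's child can be lost only in two ways: through case ii, or through a subtree deletion in case i. In both cases a competing virtual start $\tilde\beta$ is at least as close to $A$ on the relevant side. Here I use Lemma~\ref{lem:connected}: the region where $\tilde\beta$ beats $\tilde\alpha'$ is convex, so on the edge $e_1$ or $e_2$ it is an interval adjacent to $A$. If $\sigma$'s crossing point lay in the discarded part, then $\tilde\beta$ would give a strictly shorter or equal path to a point of $\sigma$. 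Replacing the initial portion of $\sigma$ by it would give a path to $\omega$ that is no longer than $\sigma$ and is not a geodesic at the junction. This contradicts minimality, or else can be handled by choosing $\sigma$ to be the distance path that survives under a tie-breaking rule.

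The main obstacle I expect is exactly this one-angle-one-split step. Chen and Han's argument relies on planar unfolding, whereas here the competing starts may come from different portal sequences whose cones are in different tiles. I would handle this by working entirely in $\mathbb{H}^2$ with the half-plane bisector of Lemma~\ref{lem:connected}, together with the fact that subpaths of distance paths are distance paths. I would also need to check that a deleted subtree cannot remove a branch whose edge has already been passed. I would do this by noting that deletion only removes descendants dominated pointwise on the whole edge interval beyond the bisector.
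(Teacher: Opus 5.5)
\textbf{Gap in the survival step.} Your outline follows the paper's proof. Soundness holds because every recorded value is $d_{\mathbb{H}^2}(\gamma\tilde\alpha,\tilde\omega)$ for some $\gamma\in\Gamma$. Completeness holds because the windows carrying a fixed distance path $\sigma$ are never discarded. The $3k$ cutoff comes from Lemmas~\ref{lem:fan-happy} and~\ref{lem:portal-happy}. The gap is in the survival step, and the paper makes the same leap when it appeals to Lemma~\ref{lem:connected}.

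That lemma only tells you that, on $e_1$ and on $e_2$, the set where $\tilde\beta$ beats $\tilde\alpha'$ is an interval containing $A$. The algorithm, however, discards the \emph{whole} child of $\tilde\alpha'$ on one edge in case ii, and the whole subtree of the old occupant's child in case i. Convexity alone does not rule out the bisector cutting off the corner at $A$, crossing both $e_1$ and $e_2$ inside the cone. Then $\tilde\beta$ wins near $A$ on both edges while $\tilde\alpha'$ wins farther out on both, and the discarded child could carry $\sigma$. Your closing claim, that deletion only removes windows dominated on the whole edge interval, is exactly what remains unproved.

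\textbf{The missing ingredient.} You need the hypothesis that $A$ lies in the cone of the closer start $\tilde\beta$; this is a hyperbolic one-angle-one-split lemma. All distances below are in $\mathbb{H}^2$.
\begin{itemize}
\item The geodesic from $\tilde\beta$ to $A$ crosses $e$ at a point $b$ of $\tilde\beta$'s window.
\item Then $d(\tilde\beta,b)=d(\tilde\beta,A)-d(b,A)\le d(\tilde\alpha',A)-d(b,A)\le d(\tilde\alpha',b)$.
\item So $b$ and $A$, and hence the whole segment $bA$, lie in the closed half-plane of $\tilde\beta$.
\item The segment $bA$ cuts $T$ into two triangles, one with side $e_1$ and one with side $e_2$.
\item The set of points of $T$ where $\tilde\alpha'$ is strictly closer is convex and misses $bA$, so it lies in one of the two triangles.
\end{itemize}
Thus $\tilde\beta$ is at least as close as $\tilde\alpha'$ on the entire other edge. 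This makes cases i and ii well defined, and it justifies dropping that child together with all its descendants. Your prefix-replacement and corner argument then finishes the induction.

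This argument lives entirely inside $T$. Every virtual start is an orbit point in the universal cover, so any shortcut you build projects to a genuine path on $S$. The unfolding concern you raise therefore does not actually arise.
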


\begin{proof}
We fix lifts $\tilde{\alpha}, \tilde{\omega} \in D_{\tilde c}$. Since $D_{\tilde c}$ is a fundamental domain for $S$, any path on $S$ from $\alpha$ to $\omega$ lifts to a path in $\mathbb{H}^2$ from $\tilde{\alpha}$ to some element of the orbit $\Gamma\tilde{\omega}$, crossing portal sides each time it leaves $D_{\tilde c}$. Hence paths on $S$ correspond exactly to sequences of portal crossings in $D_{\tilde c}$. We show $d^* \geq d_S(\alpha,\omega)$ and $d^* \leq d_S(\alpha,\omega)$.

For the first inequality, since $d^*$ is the length of a valid path on $S$ from $\alpha$ to $\omega$ crossing a sequence of portals and fan edges, it must be that $d^* \geq d_S(\alpha,\omega)$.

For the second inequality, let $\pi$ be a distance path from $\alpha$ to $\omega$ on $S$. The algorithm propagates windows through every edge and portal except for self-propagation: a window arriving through a portal $p_i$ is not propagated back through $p_i$ itself. This never occurs on a shortest path because the path from a virtual start through the triangle is a geodesic, the edges of the triangle are geodesics, and two geodesics in $\mathbb{H}^2$ meet at most once. Since the path already crossed $p_i$ on entering, it cannot cross $p_i$ again to return. 

When a child window is removed, its virtual start is strictly dominated near the relevant vertex by the occupying virtual start; by Lemma~\ref{lem:connected}, the dominated region is a connected interval near the vertex, so the undominated part of any valid path is always preserved, hence we never remove a window of a valid path. Hence at every vertex and at every level we never remove a window of a valid path. Thus, at the end of the algorithm, the sequence tree contains a window covering every point on every edge that $\pi$ crosses.

By Lemmas~\ref{lem:fan-happy} and~\ref{lem:portal-happy}, $\pi$ crosses each fan edge at most once and each portal side at most twice, so $\pi$ passes through at most $3k$ edges. Each level of the sequence tree advances a window across one edge, so a path crossing at most $3k$ edges is captured within $3k$ levels. Hence $d^* \leq \ell(\pi) = d_S(\alpha, \omega)$.
\end{proof}

\begin{lemma}
    Algorithm~\ref{alg:sd} terminates in $O(g^2)$ time.
\end{lemma}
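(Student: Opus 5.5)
The plan is to follow Chen and Han's counting argument: bound the number of nodes on each level of the sequence tree, multiply by the number of levels, and check that each node costs $O(1)$ work. The triangulation is the fan triangulation of $D_{\tilde c}$. It has $k$ triangles, $k$ fan edges, $k$ portal sides and $k$ boundary vertices, where $4g \le k \le 12g-6$. Hence all of these counts are $O(g)$. By Lemma~\ref{lem:correctnessSequence} the main loop runs for $3k = O(g)$ levels. It therefore suffices to show that each level contains $O(k)$ windows and that each window is processed in constant time. Summing then gives $O(k)\cdot O(k) = O(g^2)$.

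\textbf{Width of a level.} I would show that the number of leaves on any level is at most the number of vertices of the fan triangulation plus a constant, i.e.\ $k+1+3$. The argument is the one-angle-one-split accounting. Each window on level $i$ produces children on level $i+1$, and this happens in one of two ways.
\begin{itemize}
\item If its geodesic cone misses the opposite vertex $A$, it produces exactly one child.
\item If the cone contains $A$, it competes for the angle at $A$. Step (c) allows only the current occupant of that angle to have two children. Whenever a new window takes over, one child subtree of the previous occupant is deleted.
\end{itemize}
So, level by level, the number of nodes grows by at most one per vertex angle that is claimed in the fan triangulation. The initialization contributes three windows. Together this gives a width of $O(k)$ per level.

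One point needs care. The center $\tilde c$ is a vertex with $k$ incident angles, and each boundary vertex has at most two incident angles inside $D_{\tilde c}$. I would count \emph{angles} rather than vertices, which gives $O(k)$ angles in total, and apply the one-split rule per angle. Since the occupancy table is indexed by angles, the $O(k)$ bound on the growth per level follows.

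\textbf{Cost per window.} Each window stores its virtual start as a point of $\HY^2$ and an interval on an edge. Crossing a portal applies a single Möbius map $\gamma_{p_j}^{-1}$. Intersecting a geodesic cone with $e_1$ and $e_2$, testing whether $A$ lies in the cone, and comparing $d_{\HY^2}$ values against the occupant are all constant-size computations in the Poincaré disk; in the real-RAM model each takes $O(1)$. Updating the best distance is also $O(1)$. Deleting a subtree when occupancy changes costs time proportional to the deleted nodes. Each node is created once and deleted at most once, so this cost is charged to creation.

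\textbf{Main obstacle.} I expect the width bound to be the hardest step, because deletions remove subtrees whose later levels may not yet exist. I would follow Chen and Han and show by induction on $i$ the invariant that the number of level-$i$ nodes is at most the number of angles claimed so far plus $3$. Two facts drive the induction. First, a window with two children must currently occupy an angle. Second, when occupancy changes, the deleted child of the former occupant cancels the extra child of the new one.

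Combining the pieces, the total number of nodes is $O(k^2)$, each costs $O(1)$, and so the running time is $O(k^2) = O(g^2)$.
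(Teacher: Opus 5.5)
Your proposal is correct and follows the paper's proof: $3k=O(g)$ levels, $O(k)$ windows per level by Chen and Han's one-angle-one-split accounting, and $O(1)$ work per window. You are in fact more careful than the paper on two points: you index occupancy by the $3k$ angles rather than by vertices, and you charge subtree deletions to node creation. One small tightening: run the width invariant on the total number of leaves of the current tree rather than on the level-$i$ count, since a deleted subtree may contain no node of the current level, though it always contains a leaf. This directly gives at most $3k+3$ leaves, because every non-root node with two children is the current occupant of a distinct angle, and every level-$i$ node is a leaf once level $i$ is completed.
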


The fan triangulation has $O(g)$ edges, $O(g)$ vertices, and $O(g)$ triangles, since the number of sides $k$ of $\partial D_{\tilde c}$ satisfies $4g \le k \le 12g-6$ and so $k = O(g)$. Each level of the sequence tree advances a window across one edge, so a path crossing at most $3k$ edges is captured within $3k$ levels. The tree has height at most $3k = O(g)$. By the argument of \cite{chen1996shortest}, the total number of leaves at each level of the sequence tree is $O(g)$. At level $0$ there are $O(g)$ leaves. At each subsequent level, every leaf produces at most two children, but whenever a window forks into two children by taking over occupancy of a vertex, the previous occupant loses one child. Since there are at most $O(g)$ vertices, there are at most $O(g)$ such takeovers per level, each removing one existing child. Hence the leaf count does not grow between levels after it has propagated through all the triangles and remains $O(g)$ throughout. Since the tree has height at most $3k = O(g)$, the total number of windows created is $O(g^2)$. Each window is processed in $O(1)$ time by a constant number of distance comparisons against the angle occupant table. Hence the total running time is $O(g^2)$.

\begin{remark}\label{rem:multi-dest}
We can modify the algorithm to handle multiple endpoints from a single start. Given $h$ destinations, we mark each in its triangle; whenever a window is created over a triangle, we update the current best distance of each destination in that triangle in $O(1)$ time per destination. Since each triangle is visited $O(g)$ times during propagation, each destination is checked $O(g)$ times, for a total of $O(g^2 + gh)$ instead of $h$ separate executions.
\end{remark}

\section{Handling Multiple Shortest Distance Queries}\label{sec:datastructure}
\subsection{Query Data Structure}\label{sec:queryDS}
While the wavefront algorithm we employ is adapted from Chen \& Han \cite{chen1996shortest}, their original algorithm also supports shortest path queries in $O(\log n)$ time after $O(n \log n)$ postprocessing after running their algorithm, via the star unfolding and the Voronoi diagram of source images. This query method implicitly relies on a special property of convex polyhedra: the vertices of the polyhedron define the ridge tree (cut locus) of the source point. Specifically, Sharir \& Schorr \cite{sharir1984shortest} prove that the cut locus of any source point on a convex polyhedron is a tree whose leaves are exactly the vertices of the polyhedron (Lemma 4.5(a) of \cite{sharir1984shortest}), and that shortest paths cannot pass through vertices (Lemma 4.1 of \cite{sharir1984shortest}). These results rely on the fact that curvature is concentrated at the corners of a polyhedral surface. Aronov \& O'Rourke \cite{aronov1991nonoverlap} then use these properties to prove that the star unfolding does not self-overlap and that the ridge tree is precisely the Voronoi diagram of the $O(n)$ source images.

On a smooth hyperbolic surface $S$ curvature is uniform everywhere so neither of these lemmas applies. Shortest paths on $S$ can pass through any point, and the cut locus of $\alpha$ can have leaves anywhere on $S$ with no forced termination points. It remains an open problem whether a hyperbolic analog of the star unfolding exists as a direct consequence of the sequence tree and whether Chen \& Han's query method can be generalized to this setting.

As such, we construct our query data structure by mirroring the process of the discrete geodesic problem but on our sequence tree \cite{mitchell1987discrete}. For each edge of the fan triangulation we maintain an ordered list of windows. By Lemma~\ref{lem:connected}, the region where any virtual start dominates on an edge is a connected interval, so these windows partition each edge into an ordered sequence of non-overlapping intervals each owned by a unique virtual start. We process paths on the sequence tree in a priority queue via the minimum distances from windows onto edges. This requires $O(\log g)$ time per node to pop. Whenever we propagate a virtual start $\tilde{\alpha}'$ onto an edge $e$, we compare its distance function $d_{\mathbb{H}^2}(\tilde{\alpha}', \cdot)$ against the existing windows on $e$ using binary search on the ordered window list, finding the connected interval where $\tilde{\alpha}'$ dominates in $O(\log  g)$ time by Lemma~\ref{lem:connected}. If a virtual start cannot propagate onto the edge by this process we prune the entire branch. Now only virtual starts that contribute to realizable paths are left. By Lemma~\ref{lem:fan-happy} and Lemma~\ref{lem:portal-happy}, each triangle of the fan triangulation is visited by at most $O(g)$ paths of the sequence tree, giving $O(g)$ virtual starts per triangle across $O(g)$ triangles. In each we build a hyperbolic Voronoi diagram of the virtual starts in $O(g\log g)$ time restricted to the triangle~\cite{bogdanov2014}. This allows us to do an angular binary search to determine the triangle to check of the fan triangulation and then a standard Voronoi lookup in that triangle for a runtime of $O(\log g)$ for querying points. Summing over all levels, the priority queue costs $O(\log g)$ per node across $O(g^2)$ nodes, and the per-triangle Voronoi diagrams cost $O(g \log g)$ each across $O(g)$ triangles, giving $O(g^2 \log g)$ preprocessing.

\begin{theorem}[Point queries]\label{thm:query}
After $O(g^2 \log g)$ preprocessing, the shortest path distance $d_S(\alpha, \omega)$ for any query point $\omega \in S$ can be computed in $O(\log g)$ time.
\end{theorem}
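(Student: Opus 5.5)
The approach is to show that the structure described in Section~\ref{sec:queryDS} is a correct \emph{shortest path map} for $\alpha$, and then to account for its construction and query costs. The key claim is as follows. Let $V_T$ be the set of virtual starts that survive the priority-queue pruning and reach a triangle $T$ of the fan triangulation of $D_{\tilde c}$. Then for every $\tilde q \in T$,
\[
d_S(\alpha, q) = \min_{\tilde\alpha' \in V_T} d_{\mathbb{H}^2}(\tilde\alpha', \tilde q).
\]
Consequently, the hyperbolic Voronoi diagram of $V_T$ restricted to $T$ answers the query: locate $\tilde\omega$ in its cell and evaluate the distance to that cell's site.

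For correctness I would follow the proof of Lemma~\ref{lem:correctnessSequence}, with the level-by-level processing replaced by the Dijkstra-style priority queue of \cite{mitchell1987discrete}. The inequality $\geq$ in the key claim is immediate from Remark~\ref{rem:virtual-start-distance}, since every virtual start corresponds to a realizable path. For $\leq$, take a distance path $\pi$ from $\alpha$ to $\omega$. By Lemmas~\ref{lem:fan-happy} and~\ref{lem:portal-happy}, $\pi$ crosses at most $3k$ edges, so its portal sequence gives a virtual start $\tilde\alpha_\pi$ at bounded depth. I then argue by induction along the edges crossed by $\pi$. At each such edge $e$, the crossing point of $\pi$ is owned by $\tilde\alpha_\pi$ or by a virtual start at equal distance. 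The reason is that ownership of $e$ by any virtual start is a connected interval (Lemma~\ref{lem:connected}), and a start that wins at the crossing point cannot be strictly shorter than $\pi$. Therefore the binary-search insertion never prunes the branch carrying $\pi$ beyond a tie, and the branch reaches the triangle containing $\tilde\omega$.

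For the complexity, the sequence tree has $O(g^2)$ nodes, by the same leaf-count argument used for Algorithm~\ref{alg:sd}, and each node costs $O(\log g)$: one priority-queue pop plus one binary search on an ordered window list of size $O(g)$. For $|V_T| = O(g)$, I would argue that each surviving start owns a nonempty interval of some edge of $T$, and each of the three edges is partitioned into $O(g)$ intervals. Each diagram is built in $O(g \log g)$ time by \cite{bogdanov2014}, which totals $O(g^2 \log g)$ over the $O(g)$ triangles. Adding a point-location structure to each diagram in linear time gives $O(\log g)$ lookups.

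A query first finds the fan triangle containing $\tilde\omega$. This is an angular binary search around $\tilde c$ over the $O(g)$ fan edges, taking $O(\log g)$ time, which works because the fan triangles are ordered cyclically about $\tilde c$. A point location in that triangle's Voronoi diagram then takes a further $O(\log g)$, and evaluating $d_{\mathbb{H}^2}$ takes $O(1)$.

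The main obstacle is the bound $|V_T| = O(g)$ together with the claim that restricting each Voronoi diagram to its triangle loses nothing. A start that owns no interval on the boundary of $T$ might still own an interior region of $T$. I would first try to rule this out using convexity. The region of $T$ owned by a start is an intersection of half-planes (Lemma~\ref{lem:connected}), so it is convex. Any shortest path to an interior point of that region enters $T$ through its boundary while remaining closest to that same start. Hence the region meets $\partial T$ in a nonempty interval, and the boundary interval count bounds $|V_T|$.
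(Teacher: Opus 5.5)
Your proposal follows the paper's construction essentially step for step: priority-queue propagation with binary-searched ordered window lists and branch pruning, per-triangle hyperbolic Vorono\"i diagrams built with \cite{bogdanov2014}, and queries answered by an angular search around $\tilde c$ followed by point location. Your explicit invariant $d_S(\alpha,q)=\min_{\tilde\alpha'\in V_T} d_{\mathbb{H}^2}(\tilde\alpha',\tilde q)$ and your star-shapedness argument (every start owning part of $T$ also owns part of $\partial T$) make precise what the paper only asserts via Lemmas~\ref{lem:fan-happy} and~\ref{lem:portal-happy}. The one step you leave unproved is that each edge carries only $O(g)$ owner intervals. It follows because these intervals correspond to distinct tiles $\gamma D_{\tilde\alpha}$ met by the edge, so the same crossing bounds applied to the $O(g)$ sides of $D_{\tilde\alpha}$ suffice; alternatively, bounding $\sum_T |V_T|$ by the $O(g^2)$ total window count already yields $O(g^2\log g)$ construction and $O(\log g)$ queries.
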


\subsection{Recentering a Dirichlet Domain}
Keeping the same structure as the previous subsection yields the farthest distance from $\alpha$ to any point of the surface, $F(\alpha)$. Within each triangle $T$, the restricted Voronoi diagram of the virtual starts partitions $T$ into cells such that in the cell the distance to $\alpha$ is exactly the distance from its virtual start. Since $d_{\mathbb{H}^2}(\tilde{\alpha}'_{\tilde v}, \cdot)$ is convex, each cell maximum is attained at one of its vertices, so $F(\alpha) = \max_{T} \; \max_{\tilde v \in \mathcal{V}(T)} d_{\HY^2}(\tilde{\alpha}'_{\tilde v}, \tilde v)$, where $\mathcal{V}(T)$ is the vertex set of the Voronoi diagram restricted to $T$ and $\tilde{\alpha}'_{\tilde v}$ is the virtual start owning $\tilde v$. Each triangle contributes $O(g)$ Voronoi vertices, so checking distances at all of them takes $O(g^2)$ time, which is dominated by the $O(g^2 \log g)$ preprocessing of Theorem~\ref{thm:query}. As such, we get the following corollary of Theorem~\ref{thm:query}:

\begin{corollary}[Farthest point]\label{cor:farthest}
For any $\alpha \in S$, the farthest distance $F(\alpha)$ and a point realizing it can be computed in $O(g^2 \log g)$ time.
\end{corollary}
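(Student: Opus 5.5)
The plan is to reuse the preprocessing behind Theorem~\ref{thm:query} and read off $F(\alpha)$ from the per-triangle Voronoi diagrams. First, run the $O(g^2\log g)$ preprocessing from $\tilde{\alpha}$. This gives, for each triangle $T$ of the fan triangulation of $D_{\tilde c}$, a set of $O(g)$ virtual starts and their hyperbolic Voronoi diagram restricted to $T$. The key correctness claim is that for every $\tilde q\in T$,
\[
d_S(\alpha,q)=\min_{\tilde{\alpha}'} d_{\HY^2}(\tilde{\alpha}',\tilde q),
\]
where the minimum ranges over the virtual starts stored for $T$. The inequality $\ge$ follows from the proof of Lemma~\ref{lem:correctnessSequence}: the virtual start of a distance path to $q$ is never pruned. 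The inequality $\le$ holds because every virtual start is an element of $\Gamma\tilde{\alpha}$, so each candidate distance is the length of an actual path on $S$. Hence on each restricted Voronoi cell $C$, owned by a virtual start $\tilde{\alpha}'_C$, the function $d_S(\alpha,\cdot)$ coincides with $d_{\HY^2}(\tilde{\alpha}'_C,\cdot)$.

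Second, show that the maximum over each cell is attained at a vertex of the cell. Each cell is the intersection of $T$ with finitely many half-planes bounded by bisectors, so by Lemma~\ref{lem:connected} it is a compact geodesically convex hyperbolic polygon. The function $\tilde q\mapsto d_{\HY^2}(\tilde{\alpha}',\tilde q)$ is convex along every geodesic, because $\HY^2$ is CAT($-1$). A convex function restricted to a segment attains its maximum at an endpoint. Applying this to any interior point on a geodesic chord of the cell, and then to the edges, shows that the maximum of the cell is attained at one of its vertices.

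Third, bound the cost of evaluating all candidates. The vertex set $\mathcal V(T)$ contains the Voronoi vertices inside $T$, the intersections of Voronoi edges with $\partial T$, and the three corners of $T$. A planar Voronoi diagram of $O(g)$ sites restricted to a triangle has $O(g)$ vertices, edges and faces, so $|\mathcal V(T)|=O(g)$. The diagram already records which site owns each vertex; for vertices shared by several cells, any incident owner gives the same distance. Evaluating $d_{\HY^2}(\tilde{\alpha}'_{\tilde v},\tilde v)$ at every vertex of every triangle therefore costs $O(1)$ each, $O(g^2)$ in total. Taking the global maximum gives $F(\alpha)$, and its maximizing vertex $\tilde v$, projected to $S$, is a point realizing it. This $O(g^2)$ step is dominated by the preprocessing, giving $O(g^2\log g)$ overall.

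The main obstacle is the first step. We need that the pruned sequence tree stores, for every point of every triangle, the virtual start of some distance path, and not merely values that are correct along edges. I would argue this by following a distance path to $q$ back to the last fan-triangulation edge it crosses before entering $T$. By correctness, a window carrying that path's virtual start survives on this edge, and the geodesic cone of that window into $T$ contains $\tilde q$. Therefore the path's virtual start is among the sites for $T$.
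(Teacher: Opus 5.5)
Your proposal is correct and follows the paper's route. You reuse the $O(g^2\log g)$ preprocessing of Theorem~\ref{thm:query}, use convexity of $d_{\HY^2}(\tilde{\alpha}',\cdot)$ to place each cell maximum at a vertex of the per-triangle restricted Vorono\"i diagram, and evaluate the $O(g)$ vertices in each of the $O(g)$ triangles in $O(g^2)$ total time. Your added justifications fill in steps the paper only asserts: that the stored virtual starts realize $d_S(\alpha,\cdot)$ at every point inside a triangle, not just along edges, and that boundary intersections and triangle corners must be included in $\mathcal V(T)$.
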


In fact, the per-triangle Vorono\"i diagrams contain considerably more information than is required for answering distance queries. Each Vorono\"i cell is associated with a virtual source $\gamma\tilde{\alpha}$, and every point of the cell lies in the translated Dirichlet domain $\gamma D_{\tilde{\alpha}}$. Consequently, unfolding these diagrams into $\mathbb{H}^2$ reconstructs the Dirichlet domain centered at $\tilde{\alpha}$.

\begin{lemma}[Recentering a Dirichlet domain]
\label{lem:dirichlet-from-voronoi}
Let $\alpha\in S$ with lift $\tilde{\alpha}\in D_{\tilde c}$. After computing the data structure query of Section~\ref{sec:queryDS} from $\tilde{\alpha}$, unfolding the per-triangle Vorono\"i diagrams of the virtual sources into $\mathbb{H}^2$ yields the vertices of the Dirichlet domain $D_{\tilde{\alpha}}$.
\end{lemma}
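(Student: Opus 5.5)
We will identify the per-triangle Vorono\"i structure with the tiling $\{\gamma D_{\tilde\alpha}\}_{\gamma\in\Gamma}$, pulled back to the fan triangulation of $D_{\tilde c}$.

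\textbf{Step 1: each cell lies in one tile.} By Theorem~\ref{thm:query} and the correctness argument of Lemma~\ref{lem:correctnessSequence}, every triangle $T$ of the fan triangulation is partitioned into cells, each owned by a virtual start $\tilde\alpha'=\gamma^{-1}\tilde\alpha$ for some $\gamma\in\Gamma$. On that cell we have $d_{\mathbb{H}^2}(\tilde\alpha',\tilde q)=d_S(\alpha,q)=\min_{\delta\in\Gamma}d_{\mathbb{H}^2}(\delta\tilde\alpha,\tilde q)$. Applying the isometry $\gamma$ gives $d_{\mathbb{H}^2}(\tilde\alpha,\gamma\tilde q)\le d_{\mathbb{H}^2}(\delta\tilde\alpha,\gamma\tilde q)$ for all $\delta$. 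By the definition of the Dirichlet domain, this means $\gamma(\text{cell})\subset D_{\tilde\alpha}$. So the map sending each cell to $\gamma(\text{cell})$ is exactly the unfolding.

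\textbf{Step 2: the images cover $D_{\tilde\alpha}$.} Take any $\tilde v\in D_{\tilde\alpha}$. Its projection has a lift $\tilde q\in D_{\tilde c}$ with $\tilde v=\gamma\tilde q$. Then $\tilde q$ lies in some cell whose owner realizes $d_S$. The owner is $\gamma^{-1}\tilde\alpha$ unless there is a tie, and ties occur only on the boundary of $D_{\tilde\alpha}$. Hence the unfolded cells tile $D_{\tilde\alpha}$, and distinct cells map to pieces with disjoint interiors, because the interior of $D_{\tilde\alpha}$ contains exactly one lift of each point.

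\textbf{Step 3: the vertices.} A vertex $\tilde v$ of $D_{\tilde\alpha}$ is a point equidistant from $\tilde\alpha$ and at least two other orbit points $\delta_1\tilde\alpha,\delta_2\tilde\alpha$. Its preimage $\tilde q=\gamma^{-1}\tilde v$ in $D_{\tilde c}$ is therefore equidistant from at least three virtual starts $\gamma^{-1}\tilde\alpha$, $\gamma^{-1}\delta_i\tilde\alpha$. All three realize the minimum distance, so $\tilde q$ is a Vorono\"i vertex of the restricted diagram in its triangle.

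There is a subtlety: the three virtual starts must all actually appear in the per-triangle site list, and not have been pruned. The pruning in Section~\ref{sec:queryDS} only removes a start where it is strictly dominated. A start that attains the minimum on the bisector near $\tilde q$ therefore survives on the corresponding edge intervals, by Lemma~\ref{lem:connected}.

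Conversely, the restricted diagram also has vertices that are not vertices of $D_{\tilde\alpha}$. These are intersections of Vorono\"i edges with triangle edges, and triangle corners. We keep the genuine Vorono\"i vertices, meaning points with three or more owners, and discard the rest. A vertex of $D_{\tilde\alpha}$ whose preimage lies on a fan edge or portal side still appears as a genuine three-owner point, found by merging the adjacent triangles' data.

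\textbf{Main obstacle.} The hardest part is the degenerate bookkeeping. We must show that a Dirichlet vertex lying on the fan triangulation's edges, or on $\partial D_{\tilde c}$ with several lifts, is detected exactly once after unfolding. Our plan is to identify the unfolded points across adjacent triangles by their images $\gamma\tilde q$ in $\mathbb{H}^2$, comparing the group elements recorded along the sequence-tree path. After that, the vertices can be ordered angularly around $\tilde\alpha$ to recover the boundary of $D_{\tilde\alpha}$.
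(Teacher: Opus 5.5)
Your proposal is correct in substance, but it takes a different, cellwise route from the paper. The paper argues only at vertices. A per-triangle vertex $\tilde v$ unfolds into $D_{\tilde\alpha}$ because a strictly closer translate would, via Remark~\ref{rem:virtual-start-distance}, beat the distance the algorithm computed. The converse is a one-line appeal to $D_{\tilde c}$ being a fundamental domain on which the algorithm is correct.

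Your Step 1 applies the same correctness mechanism to whole cells. Together with Step 2 it gives the stronger fact that the unfolded cells tile $D_{\tilde\alpha}$. This covers two weak points in the paper's proof:
\begin{itemize}
\item The paper asserts that every per-triangle vertex is ``by construction'' equidistant from three sources. This fails for crossings of bisectors with triangle edges and for triangle corners; the remark after Algorithm~\ref{alg:recenter} concedes the latter.
\item The paper's converse never explains why the representative in $D_{\tilde c}$ of a vertex of $D_{\tilde\alpha}$ is a vertex of some per-triangle diagram.
\end{itemize}

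Your ``main obstacle'', however, is created by your three-owner filter. Your sentence in Step 3 that $\tilde q$ ``is a Vorono\"i vertex of the restricted diagram in its triangle'' is false when $\tilde q$ lies on a triangle edge. The three sectors at $\tilde q$ can split so that each adjacent triangle sees only two owners.

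You can drop the cross-triangle merging entirely. Each per-triangle cell is convex, and by Steps 1--2 the unfolded cells lie in and cover $D_{\tilde\alpha}$. A vertex of $D_{\tilde\alpha}$ is an extreme point lying in some unfolded cell $P\subset D_{\tilde\alpha}$, so it is an extreme point of $P$. It is therefore the unfolding of a vertex of a per-triangle diagram, possibly an edge crossing or a triangle corner. Conversely, every unfolded per-triangle vertex lies in $D_{\tilde\alpha}$ by Step 1. Keeping all per-triangle vertices and taking the convex hull, as Algorithm~\ref{alg:recenter} does, then gives exactly $D_{\tilde\alpha}$.

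Similarly, in Step 3 the survival of the three owners follows more directly from query correctness than from the pruning rule. A start that is the unique nearest orbit point on an open subset of $T$ must be in $T$'s site list, or queries there would be wrong.
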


\begin{proof}
Consider a vertex $\tilde v$ of one of the per-triangle Vorono\"i diagrams. By construction, $\tilde v$ is equidistant from at least three virtual sources. Unfolding the corresponding sequence of portal crossings from $\tilde \alpha$ maps $\tilde v$ to $\tilde v$ and at least two of its translates.

It remains to show that no other translate of $\tilde v$ is closer to $\tilde{\alpha}$. Suppose, for contradiction, that there exists $\tilde v'\in\Gamma \tilde v$ such that $d_{\mathbb{H}^2}(\tilde{\alpha},\tilde v')<
d_{\mathbb{H}^2}(\tilde{\alpha},\tilde v)$.
Folding the geodesic from $\tilde{\alpha}$ to $\tilde v'$ back into $D_{\tilde c}$, Remark~\ref{rem:virtual-start-distance} yields a path from $\tilde{\alpha}$ to the representative of $\tilde v$ in $D_{\tilde c}$ whose length is
$d_{\mathbb{H}^2}(\tilde{\alpha},\tilde v')$. This path is strictly shorter than the distance computed by the algorithm, contradicting its correctness. Therefore, the unfolded image of $\tilde v$ lies in the Dirichlet domain $D_{\tilde{\alpha}}$.

Conversely, every vertex of $D_{\tilde{\alpha}}$ arises in this way. Indeed, since $D_{\tilde c}$ is a fundamental domain, it contains a representative of every vertex of $D_{\tilde{\alpha}}$, and the algorithm computes shortest paths from $\tilde{\alpha}$ to every point of $D_{\tilde c}$. Hence, unfolding the per-triangle Vorono\"i diagrams recovers exactly the vertices of $D_{\tilde{\alpha}}$.
\end{proof}

This lemma allows us to design an algorithm that, given a Dirichlet domain $D_{\tilde c}$ and a query point $\tilde{\alpha}$, recenters the Dirichlet domain around $\tilde{\alpha}$.

\begin{algorithm}[!ht]
  \caption{Recentering a Dirichlet domain}
  \label{alg:recenter}
  \KwIn{A hyperbolic surface $S$ given by a Dirichlet domain $D_{\tilde c}$, and a query point $\alpha \in S$ with lift $\tilde\alpha \in D_{\tilde c}$.}
  \KwOut{The Dirichlet domain $D_{\tilde\alpha}$.}

  \textbf{Run the wavefront.} Compute the query data structure of Section~\ref{sec:queryDS} from $\tilde\alpha$, so that every edge of the fan triangulation receives its final set of windows.

  \textbf{Build the per-triangle Vorono\"i diagrams.} For each triangle $T$ of the fan triangulation, collect the $O(g)$ virtual starts whose windows lie on $T$ and build their hyperbolic Voronoi diagram restricted to $T$.

  \textbf{Unfold.} For each Vorono\"i vertex $\tilde v$ reached from a virtual start by crossing a sequence of portals $p_1, \dots, p_j$, apply the side-pairing isometries $\gamma_{p_1} \circ \cdots \circ \gamma_{p_j}$ to $\tilde v$.

  \Return the hyperbolic convex hull of these unfolded vertices.
\end{algorithm}

\begin{remark}
    At Step 2, we have to consider the vertices of the original triangles as vertices of the Vorono\"i even if they will generically not appear in the final Dirichlet domain. 
\end{remark}

We can now prove the complexity result.

\begin{theorem}[Complexity of Algorithm~\ref{alg:recenter}]
\label{thm:recenter}
Let $D_{\tilde c}$ be a Dirichlet domain and let $\alpha\in S$ with lift $\tilde{\alpha}\in D_{\tilde c}$. The Dirichlet domain $D_{\tilde{\alpha}}$ can be computed in $O(g^2\log g)$ time.
\end{theorem}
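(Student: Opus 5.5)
The plan is to bound each of the three steps of Algorithm~\ref{alg:recenter} separately and sum the costs. Correctness comes from Lemma~\ref{lem:dirichlet-from-voronoi}, so only the running time needs an argument.

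\textbf{Step 1 (wavefront).} This is exactly the preprocessing of Section~\ref{sec:queryDS}. By Theorem~\ref{thm:query}, it costs $O(g^2\log g)$ and leaves $O(g)$ windows per fan edge. During this step each window stores its portal word, or equivalently the composed isometry $\gamma_{p_1}\circ\cdots\circ\gamma_{p_j}$ that sends its virtual start back to $\tilde\alpha$. Each propagation across a portal multiplies the parent's isometry by a single side-pairing. This is $O(1)$ extra work per node, so over the $O(g^2)$ nodes the bound does not change.

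\textbf{Step 2 (Voronoi diagrams).} Lemmas~\ref{lem:fan-happy} and~\ref{lem:portal-happy} give $O(g)$ virtual starts per triangle, and there are $O(g)$ triangles. Each restricted hyperbolic Voronoi diagram is built in $O(g\log g)$~\cite{bogdanov2014}, for $O(g^2\log g)$ in total. Each diagram has linear complexity, so there are $O(g)$ Voronoi vertices per triangle and $O(g^2)$ overall. This count includes the triangle corners, which are added as in the Remark following Algorithm~\ref{alg:recenter}.

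\textbf{Step 3 (unfolding and hull).} Each vertex is unfolded by applying its window's stored isometry, which costs $O(1)$ per vertex and $O(g^2)$ in total. We then compute the hyperbolic convex hull of $O(g^2)$ points. Working in the Klein model, hyperbolic geodesics are straight lines, so a Euclidean hull algorithm applies. This costs $O(g^2\log(g^2)) = O(g^2\log g)$ with a standard algorithm, or $O(g^2\log g)$ output-sensitively since the hull has at most $12g-6$ vertices.

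\textbf{Main obstacle.} The step I expect to be delicate is Step 1's bookkeeping. A naive unfolding recomputes the product of up to $3k$ isometries per vertex, which costs $O(g)$ each and $O(g^3)$ overall. The fix is to carry the cumulative isometry down the sequence tree, as described in Step 1.

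A second point to verify is that the hull of the unfolded points is exactly $D_{\tilde\alpha}$. Lemma~\ref{lem:dirichlet-from-voronoi} gives that every unfolded point lies in $D_{\tilde\alpha}$ and that every vertex of $D_{\tilde\alpha}$ appears among them. Since $D_{\tilde\alpha}$ is convex, these two facts together imply the hull equals $D_{\tilde\alpha}$. Spurious points, such as unfolded triangle corners, lie inside $D_{\tilde\alpha}$ and therefore do not affect the hull.

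Summing the three steps gives $O(g^2\log g)$.
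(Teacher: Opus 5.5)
Your proposal is correct and follows the paper's proof: correctness comes from Lemma~\ref{lem:dirichlet-from-voronoi}, the per-triangle Vorono\"i diagrams cost $O(g^2\log g)$ by Theorem~\ref{thm:query}, and a final convex hull recovers $D_{\tilde\alpha}$. Your accounting is more careful than the paper's, which charges only $O(g\log g)$ for the hull by counting the $O(g)$ vertices of $D_{\tilde\alpha}$ and leaves the unfolding cost implicit; you rightly note that the hull input has $O(g^2)$ points and that carrying cumulative isometries down the sequence tree keeps unfolding within $O(g^2)$, and neither point changes the final $O(g^2\log g)$ bound.
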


\begin{proof}
By Lemma~\ref{lem:dirichlet-from-voronoi}, unfolding the per-triangle Voronoi diagrams produced by the data structure of Section~\ref{sec:queryDS} recovers the vertices of $D_{\tilde{\alpha}}$. These diagrams are constructed in $O(g^2\log g)$ time by Theorem~\ref{thm:query}. Since $4g\leq k\leq 12g-6$, the Dirichlet domain has $O(g)$ vertices, and its hyperbolic convex hull can therefore be computed in $O(g\log g)$ time. The overall running time is thus $O(g^2\log g)$.
\end{proof}

\section{Approximate Diameter Computation}\label{sec:diameter}
\subsection{Thick surfaces}
The data structure developed in Section~\ref{sec:datastructure} can also be used to approximate the diameter of a hyperbolic surface. Indeed, Corollary~\ref{cor:farthest} allows us to compute, for any source point $\alpha$, a farthest point from $\alpha$. Evaluating this query for every point of an $\eps$-net immediately yields an additive $\eps$-approximation of the diameter.

\begin{lemma}\label{lem:thickdiam}
Let $S$ be a hyperbolic surface and let $N_\eps$ be an $\eps$-net on $S$. Then there exists a point $p\in N_\eps$ admitting a point $p'\in S$ such that $d(p,p')\ge\operatorname{diam}(S)-\eps$.
\end{lemma}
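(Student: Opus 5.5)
Since $S$ is compact, the plan is to start from a pair of points realizing the diameter. The distance $d_S$ is continuous on the compact space $S\times S$, so there exist $x,y\in S$ with $d_S(x,y)=\operatorname{diam}(S)$. Next we use the covering property of an $\eps$-net: every point of $S$ lies within distance $\eps$ of some net point. Applying this to $x$ gives $p\in N_\eps$ with $d_S(p,x)\le\eps$.

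We then take $p'=y$. The triangle inequality for the path metric $d_S$ gives
\[
d_S(p,y)\;\ge\; d_S(x,y)-d_S(x,p)\;\ge\;\operatorname{diam}(S)-\eps ,
\]
which is exactly the claimed statement. As a by-product, the quantity $\max_{p\in N_\eps} F(p)$, with $F$ as in Corollary~\ref{cor:farthest}, lies in the interval $[\operatorname{diam}(S)-\eps,\operatorname{diam}(S)]$. The upper bound is trivial, since $F(p)\le\operatorname{diam}(S)$ for every $p$.

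The main point to check is which notion of $\eps$-net is in force. If only a \emph{pseudo} $\eps$-net on the thick part is given, as in~\cite{despre2024computing}, the covering radius may be only $\eps$ up to a constant, and thin parts are not covered. The lemma as stated assumes a genuine $\eps$-net of all of $S$, so the covering property applies directly and no further argument is needed. Handling thin parts is deferred to the later refinement of the section.
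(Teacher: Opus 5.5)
Your proof is correct and is the same argument as the paper's: take a pair realizing the diameter, snap one endpoint to a net point within $\eps$, and apply the triangle inequality with $p'$ equal to the other endpoint. Your added remark that the diameter is attained, by compactness of $S\times S$, is a small detail that the paper states only later, in Section~\ref{sec:exact-diameter}.
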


\begin{proof}
Let $a,b\in S$ satisfy $d(a,b)=\operatorname{diam}(S)$. Since $N_\eps$ is an $\eps$-net, there exists a point $p\in N_\eps$ with $d(a,p)\le\eps$. The triangular inequality yields $d(p,b)\ge d(a,b)-d(a,p)\ge\operatorname{diam}(S)-\eps$, proving the claim by taking $p'=b$.
\end{proof}

At this point, one should be careful when analyzing the complexity of algorithms based on $\eps$-nets. Although the approximation guarantee of Lemma~\ref{lem:thickdiam} depends only on $\eps$, the size of an $\eps$-net may become arbitrarily large as the systole tends to zero because of the long embedded collars surrounding short geodesics. Consequently, the complexity shall not be bounded solely as a function of $g$ and $\eps$ without further assumptions.
There are two ways to overcome this difficulty. One may either assume that the surface is $\eps$-thick, in which case every $\eps$-net contains $O(g/\eps^2)$ points, or replace $\eps$-nets by pseudo-$\eps$-nets, which avoid the thin collars while preserving the approximation guarantee. We first present the simpler algorithm for thick surfaces before extending it to arbitrary hyperbolic surfaces.

\begin{theorem}\label{thm:thickdiam}
Let $S$ be an $\eps$-thick hyperbolic surface and let $N_\eps$ be an $\eps$-net on $S$. Then an additive $\eps$-approximation $\Lambda$ of $\operatorname{diam}(S)$ satisfying $\operatorname{diam}(S)-\eps\le \Lambda\le\operatorname{diam}(S)$ can be computed in time $O(g^3\log g\cdot 1/\eps^2)$.
\end{theorem}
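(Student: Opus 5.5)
The algorithm is to run Corollary~\ref{cor:farthest} from every point of the net and return the largest farthest distance found. Concretely, for each $p\in N_\eps$ we take a lift $\tilde p\in D_{\tilde c}$ and build the query data structure of Section~\ref{sec:queryDS} from $\tilde p$. We then read off $F(p)=\max_{q\in S} d_S(p,q)$ as the maximum of $d_{\HY^2}(\tilde{\alpha}'_{\tilde v},\tilde v)$ over the restricted Vorono\"i vertices. We output $\Lambda=\max_{p\in N_\eps}F(p)$. Locating a lift of each $p$ in $D_{\tilde c}$ is assumed to be part of the input, or costs $O(g)$ per point, which is negligible.

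\textbf{Correctness.} The upper bound $\Lambda\le\operatorname{diam}(S)$ is immediate, since each $F(p)$ is a distance between two points of $S$. For the lower bound, apply Lemma~\ref{lem:thickdiam}: there is $p\in N_\eps$ and $p'\in S$ with $d(p,p')\ge\operatorname{diam}(S)-\eps$. Hence $F(p)\ge\operatorname{diam}(S)-\eps$, and so $\Lambda\ge\operatorname{diam}(S)-\eps$. The lower bound needs $F(p)$ to be the exact farthest distance, not the maximum over a discrete sample. This is what Corollary~\ref{cor:farthest} provides, because the maximum of $d_{\HY^2}(\tilde\alpha',\cdot)$ over each convex Vorono\"i cell is attained at a vertex.

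\textbf{Complexity.} Each call to Corollary~\ref{cor:farthest} costs $O(g^2\log g)$, so the total is $O(|N_\eps|\cdot g^2\log g)$. The remaining step, and the one requiring care, is the bound $|N_\eps|=O(g/\eps^2)$ for an $\eps$-thick surface. I would prove it by a packing argument.
\begin{itemize}
\item Treat $N_\eps$ as an $\eps$-net in the usual sense, meaning its points are pairwise at distance at least $\eps$ (or at least a constant multiple of $\eps$).
\item The open balls of radius $\eps/2$ around the net points are then pairwise disjoint.
\item Since $S$ is $\eps$-thick, the injectivity radius is at least $\eps/2$ everywhere, so each such ball is embedded. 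Its area is that of a hyperbolic disk, $2\pi(\cosh(\eps/2)-1)=\Theta(\eps^2)$ for $\eps$ bounded.
\item By Gauss--Bonnet, $\operatorname{area}(S)=4\pi(g-1)$.
\item Comparing areas gives $|N_\eps|\le 4\pi(g-1)/\Theta(\eps^2)=O(g/\eps^2)$.
\end{itemize}
Multiplying by the per-point cost gives $O(g^3\log g/\eps^2)$.

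The main obstacle is the net-size bound. It depends on the precise definition of $\eps$-thick, in particular on whether that definition guarantees injectivity radius at least $\eps/2$ rather than merely systole at least $\eps$. If the definition only gives systole at least $\eps$, the injectivity radius is at least $\eps/2$ anyway, so the packing argument still goes through. The constants may change, but the $O(g/\eps^2)$ bound does not.
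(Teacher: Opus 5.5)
Your proposal is correct and follows the paper's proof: run Corollary~\ref{cor:farthest} from every net point, invoke Lemma~\ref{lem:thickdiam} for the lower bound, and multiply the $O(g^2\log g)$ per-point cost by the net size. The only difference is that the paper cites $|N_\eps|\le 16(g-1)/\eps^2$ from \cite{despreComputingEnetClosed2024}, while your packing argument derives it directly; in fact $\cosh(\eps/2)-1\ge \eps^2/8$ yields exactly this constant, and no boundedness assumption on $\eps$ is needed.
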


\begin{proof}
For every point $p\in N_\eps$, we compute a farthest point from $p$ using the algorithm of Corollary~\ref{cor:farthest}. By Lemma~\ref{lem:thickdiam}, the maximum of the resulting distances is an additive $\eps$-approximation of the diameter. Each farthest-point query requires $O(g^2\log g)$ time. Since $S$ is $\eps$-thick, $N_\eps$ contains at most $16(g-1)/\eps^2$ points~\cite{despreComputingEnetClosed2024}. Therefore, the overall running time is $O((g/\eps^2)\cdot g^2\log g)=O(g^3\log g\cdot 1/\eps^2)$.
\end{proof}

\subsection{Using Pseudo-$\eps$-Nets on Hyperbolic Surfaces}
We introduce an alternative algorithm based on pseudo $\eps$-nets, whose size depends only on $\eps$ (and the genus), independently of the systole. The key ingredient is a decomposition of $S$ into its $\eps$-thin and $\eps$-thick parts. Before describing the algorithm, we recall the necessary definitions and establish several supporting lemmas.

\begin{definition}[Surface $\eps$-thin and $\eps$-thick decomposition]
    Given $\eps > 0$, the \emph{$\eps$-thin part} of a hyperbolic surface $S$ is:
        $S^{\text{Thin}}_\eps = \{x \in S : \text{inj}(x) < \eps/2\}$,
    where $\text{inj}(x)$ denotes the injectivity radius of $x$ in $S$ (the largest $r$ such that the ball $B(x, r) \subset S$ lifts isometrically to a disk in $\HY^2$). The \emph{$\eps$-thick part} is its complement $S^{\text{Thick}}_\eps = S - S^{\text{Thin}}_\eps$.
\end{definition}

\begin{definition}[Pseudo $\eps$-net]
    Given a hyperbolic surface $S$ and $\eps > 0$, a \emph{pseudo $\eps$-net} of $S$ is a finite point-set $P \subset S^{\text{Thick}}_\eps$ such that $P$ is an $\eps$-net of $S^{\text{Thick}}_\eps$.
\end{definition}

For our choice of $\eps<2\arcsinh(1)$ the \emph{$\eps$-thin part} $S^{\mathrm{thin}}_\eps$ decomposes as a disjoint union of at most $3g - 3$ open \emph{$\eps$-collars} $\mathcal{C}(\gamma_i, \eps)$, one for each simple closed geodesic $\gamma_i$ on $S$ of length at most $\eps$ \cite[Theorem~4.1.6]{buserGeometrySpectraCompact2010}. Each collar is a hyperbolic cylinder around $\gamma_i$. Our algorithm will come in two parts. We handle two cases: if at least one point of the diameter pair lies in 
$S^{\text{Thick}}_\eps$, we find it using the pseudo $\eps$-net; otherwise, 
both points lie in the thin part, and we show they must lie on two distinct 
cylinders. 

 Given this, we only need to pairwise compare the cylinders. As mentioned before, there are only $O(g)$ cylinders. To compare the distances between the $O(g^2)$ pairs of cylinders we will need to look at $\eps$-nets on their boundaries. The cylinders are bounded by pointed geodesics in a pseudo $\eps$-net that have length smaller than $2\arcsinh(1)$ by the Collar Lemma~\cite[Theorem~4.1.6]{buserGeometrySpectraCompact2010}. Since the boundaries are of length $\eps$, putting any $\eps$-net on them involves $O(1)$ points, independent of $g$. We now show that we can snap distances to the net to approximate distances inside the cylinders.

\begin{lemma}[Cylinder boundary snapping] 

\label{lem:cylinder-boundary-snapping} 

Consider two distinct cylinders A, B on S. Let $N_A \subset \bd A$ and $N_B \subset \bd B$ be $\eps$-nets of $\bd A$ and $\bd B$. Then, for all $p \in A$ and $q \in B$ there exist $b_A \in N_A$ and $b_B \in N_B$ such that: $ d_S(p,q) \;\le\; d_A(p, b_A) + d_S(b_A, b_B) + d_B(b_B, q) \;\le\; d_S(p,q) + \eps. $ \end{lemma}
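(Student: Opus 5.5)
The plan is to prove the two inequalities separately. The left inequality is a triangle inequality. The right one comes from routing a distance path through the cylinder boundaries and then snapping its two crossing points to the nets.

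\emph{Lower bound.} For any $b_A\in N_A$ and $b_B\in N_B$, the intrinsic distance in a cylinder dominates the surface distance, so $d_A(p,b_A)\ge d_S(p,b_A)$ and $d_B(b_B,q)\ge d_S(b_B,q)$. The triangle inequality on $S$ then gives $d_S(p,q)\le d_S(p,b_A)+d_S(b_A,b_B)+d_S(b_B,q)\le d_A(p,b_A)+d_S(b_A,b_B)+d_B(b_B,q)$. Hence the left inequality holds for every choice of net points, in particular for the ones chosen below.

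\emph{Upper bound, structural step.} Fix a distance path $\pi$ from $p$ to $q$. By \cite[Theorem~4.1.6]{buserGeometrySpectraCompact2010}, the $\eps$-collars are pairwise disjoint because $A\neq B$. Therefore $\pi$ must leave $A$ and enter $B$. Let $x$ be the first point where $\pi$ meets $\bd A$, and $y$ the last point where $\pi$ meets $\bd B$. Then:
\begin{itemize}
\item the subpath from $p$ to $x$ stays in $\overline{A}$;
\item the subpath from $y$ to $q$ stays in $\overline{B}$;
\item subpaths of distance paths are distance paths.
\end{itemize}
Consequently $d_A(p,x)=d_S(p,x)$ and $d_B(y,q)=d_S(y,q)$, and
\[
d_S(p,q)=d_A(p,x)+d_S(x,y)+d_B(y,q).
\]

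\emph{Snapping step.} Choose $b_A\in N_A$ closest to $x$ along $\bd A$, and $b_B\in N_B$ closest to $y$ along $\bd B$. Write $\delta_A=d_{\bd A}(x,b_A)$ and $\delta_B=d_{\bd B}(y,b_B)$. The triangle inequality, using arcs of $\bd A$ and $\bd B$ (which lie in $\overline{A}$, $\overline{B}$), gives
\[
d_A(p,b_A)\le d_A(p,x)+\delta_A,\qquad d_S(b_A,b_B)\le \delta_A+d_S(x,y)+\delta_B,\qquad d_B(b_B,q)\le d_B(y,q)+\delta_B.
\]
Summing yields the right-hand side $\le d_S(p,q)+2\delta_A+2\delta_B$.

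\emph{Main obstacle: the constant.} A naive covering radius of $\eps$ for the nets only yields an additive error of $4\eps$, not $\eps$. I would try to close this gap in two ways.
\begin{itemize}
\item \emph{Use the short boundary.} The boundary curves have length at most $\eps$ (as noted just before the lemma). Hence any point of $\bd A$ lies within arc distance $\eps/2$ of any fixed point of $\bd A$, and a net of $O(1)$ points spaced $\eps/4$ along each boundary curve gives $\delta_A,\delta_B\le\eps/8$. This brings the total error to $\eps/2\le\eps$.
\item \emph{Interpret the hypothesis more finely.} If the statement is to be read literally with an $\eps$-net, I would instead interpret ``$\eps$-net of $\bd A$'' as having covering radius $\eps/4$ along the boundary. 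That is an $O(1)$-size refinement, consistent with the counting remark preceding the lemma.
\end{itemize}
Either way, the geometric content is the decomposition of $\pi$ at $x$ and $y$. The only delicate point is pinning down this constant, and I expect it to require such a rescaling of the net.
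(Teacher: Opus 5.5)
The paper omits its proof of this lemma, so your attempt has to stand on its own. It has a genuine gap exactly where you suspected.

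\textbf{What is correct.} Your lower bound is correct. So is the decomposition $d_S(p,q)=d_A(p,x)+d_S(x,y)+d_B(y,q)$. That $x$ precedes $y$ on $\pi$ uses $\overline{A}\cap\overline{B}=\emptyset$, which holds because the thin collars lie strictly inside disjoint standard collars. Snapping then correctly gives the upper bound $d_S(p,q)+2(\delta_A+\delta_B)$.

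\textbf{The gap.} The lemma is quantified over arbitrary $\eps$-nets $N_A,N_B$, and for these your bound is only $d_S(p,q)+4\eps$. Even the short boundary only brings it to about $2\eps$. A boundary curve at distance $w$ from a core of length $\ell$ has length $\ell\cosh w=2\sinh(\eps/2)\cdot\frac{\ell/2}{\sinh(\ell/2)}$. This lies strictly between $\eps$ and $2\sinh(\eps/2)$, so it is not ``at most $\eps$'', and you only get $\delta_A,\delta_B\le\sinh(\eps/2)$.

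Neither of your bullets closes this. The first throws away the given nets and builds finer ones; the second changes the hypothesis. What you have actually proved is the lemma for nets with arc-covering radius $\eps/4$ on each boundary curve, or the lemma with $4\eps$ in place of $\eps$. Also, $\bd A$ has two components, so ``within arc distance $\eps/2$ of any fixed point of $\bd A$'' only makes sense on a single boundary curve.

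\textbf{The constant cannot be recovered.} As literally stated, the constant $\eps$ appears to fail, so a modification like yours is necessary and should be stated explicitly. Here is the construction.
\begin{itemize}
\item Let $x\in c_A$ and $y\in c_B$ realize the distance between a boundary curve $c_A$ of $A$ and a boundary curve $c_B$ of $B$.
\item Let $N_A$ consist of the point $n_A\in c_A$ antipodal to $x$, together with one point on the other curve of $\bd A$; define $N_B$ similarly. For thin collars these are $\eps$-nets: the covering radius is below $\sinh(\eps/2)<\eps$, and the two points are $2w>\eps$ apart.
\item The collar lifts to a convex neighbourhood of an axis, so $d_A(x,n_A)=d$ with $\sinh(d/2)=\cosh w\,\sinh(\ell/4)=\sinh(\eps/2)/(2\cosh(\ell/4))>\sinh(\eps/4)$. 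Hence $d_A(x,n_A)>\eps/2$, and likewise $d_B(y,n_B)>\eps/2$.
\item Since $d_S(n_A,n_B)\ge d_S(x,y)$, the quantity $d_A(p,n_A)+d_S(n_A,n_B)+d_B(n_B,q)-d_S(p,q)$ tends to more than $\eps$ as $p\to x$ and $q\to y$.
\item Choosing a net point on a far boundary curve instead costs at least $2w-d_S(x,y)-o(1)$. This exceeds $\eps$ once the cores are pinched enough, since then $w\to\infty$ while $d_S(x,y)$ stays bounded.
\end{itemize}
So no choice of $b_A,b_B$ achieves error $\eps$ in this configuration, and no sharper estimate within your argument can fix that.
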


\begin{proof}
    This proof is omitted in this version of the paper.
\end{proof}

We now show that inter-cylinder distances can be computed using the boundary points on each cylinder. Each cylinder has two boundary curves, contributing one point each, so $N_A$ and $N_B$ each consist of two points. On the cylinder, the additively weighted Voronoi diagram of two sites has a single vertex, which is the point we compute.

\begin{lemma}[Same-cylinder diameter pairs]
\label{lem:same-cylinder-snapping}
Let $C$ be a cylinder of $S^{\mathrm{thin}}_\eps$ and let $N_C \subset \bd C$ be an $\eps$-net of $\bd C$. Then for all $p, q \in C$ there exist $c_1, c_2 \in N_C$ such that:
\[
 d_S(p,q) \;\le\; d_S(c_1,c_2) + \eps.
\]
\end{lemma}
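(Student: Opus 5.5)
The approach is to compare both sides with the width of the collar, in Fermi coordinates around its core geodesic; no snapping argument is needed. Let $C=\mathcal{C}(\gamma,\eps)$ have core $\gamma$ of length $\ell\le\eps$. Parametrize $C$ by $(t,s)$ with $t\in(-w,w)$ the signed distance to $\gamma$ and $s\in\RE/\ell\mathbb{Z}$ the foot point on $\gamma$. The metric is $dt^2+\cosh^2 t\,ds^2$, and the two boundary curves of $C$ are the equidistant curves $t=\pm w$. By the choice of the collar in \cite[Theorem~4.1.6]{buserGeometrySpectraCompact2010}, as used in the discussion above, these boundary curves have length $\ell\cosh w\le\eps$. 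Consequently each equidistant curve $t=t_0$ inside $C$ has length $\ell\cosh t_0\le\eps$.

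\textbf{Step 1 (upper bound for $d_S(p,q)$).} Take $p=(t_p,s_p)$ and $q=(t_q,s_q)$ in $C$. Join them by the path that first moves along the orthogonal geodesic $s=s_p$ from height $t_p$ to height $t_q$, and then runs along the shorter arc of the equidistant curve $t=t_q$ to $q$. The first piece has length $|t_p-t_q|<2w$. The second has length at most half of $\ell\cosh t_q$, hence at most $\eps/2$. Since $d_S$ is bounded by the length of any path in $C\subset S$, we get $d_S(p,q)\le 2w+\eps/2$.

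\textbf{Step 2 (lower bound for a pair of net points).} The $\eps$-net $N_C$ of $\bd C$ contains a point on each boundary component, since each component needs to be covered. Pick $c_1\in N_C$ on $\{t=-w\}$ and $c_2\in N_C$ on $\{t=w\}$. We claim $d_S(c_1,c_2)\ge 2w$. The collar is embedded, and the two boundary curves are at distance exactly $2w$ from each other inside $C$, because the function $t$ is $1$-Lipschitz on $C$. A path from $c_1$ to $c_2$ in $S$ either stays in $\overline{C}$, in which case the Lipschitz bound on $t$ gives length $\ge 2w$. Or it leaves $C$, in which case it must exit through $\{t=-w\}$ and re-enter through $\{t=w\}$ (or come back through the same side and cross $C$ later). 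In that case it still contains a sub-arc crossing $C$ from one boundary curve to the other, which again has length $\ge 2w$. Combining with Step 1 gives $d_S(p,q)\le 2w+\eps/2\le d_S(c_1,c_2)+\eps$, as claimed.

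\textbf{Main obstacle.} The delicate point is the topological claim in Step 2: that every path between the two boundary curves must contain an arc traversing $C$. A path leaving through one side does not obviously return through the other. I would handle this by lifting to $\HY^2$. There $\overline{C}$ lifts to the closed $w$-neighbourhood of a geodesic line, and its two boundary curves lift to the two equidistant curves on either side of that line. Any path in $\HY^2$ from one equidistant curve to the other crosses the band between them, and the band has width $2w$. If the lifted endpoint of the path instead lies on a different lift of $C$, I would argue that the path must already traverse some full lift of the band, which again costs $2w$. If this argument proves too fragile, a fallback is to select $c_1,c_2$ to maximize $d_S$ over $N_C$; then it suffices to show that this maximum is at least $2w-\eps/2$. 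That bound should follow from the same band-crossing estimate applied only to the orthogonal geodesic arc joining the two boundaries.
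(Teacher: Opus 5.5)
\paragraph{The gap is in Step 2, and it cannot be closed.} The bound $d_S(c_1,c_2)\ge 2w$ for $c_1,c_2$ on opposite boundary curves holds when the core $\gamma$ separates $S$. It fails when $\gamma$ is non-separating. Then $S\setminus C$ is connected, so a path can leave through $\{t=-w\}$ and come back through $\{t=w\}$ without ever crossing the band. In $\HY^2$ such a path runs from one lift of $\overline{C}$ to a \emph{different} lift, and nothing forces it to traverse a full band. This is exactly the case your lifting plan would need to handle.

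Here is a concrete instance. Put $\gamma$ in a one-holed torus whose boundary $\beta$ has fixed length (e.g.\ half of a genus-$2$ surface), and cut along $\gamma$ to get a pair of pants with boundaries $\gamma^+,\gamma^-,\beta$. By the right-angled hexagon formula, the orthogeodesic from $\gamma^+$ to $\gamma^-$ has length $d$ with $\cosh d=\bigl(\cosh(\ell_\beta/2)+\cosh^2(\ell/2)\bigr)/\sinh^2(\ell/2)$. Its sub-arc between the points at distance $w$ from $\gamma^{\pm}$ joins $\{t=w\}$ to $\{t=-w\}$ and has length $d-2w$. In the collar, Buser's Theorem~4.1.6(iv) reads $\sinh\operatorname{inj}=\sinh(\ell/2)\cosh t$, so the $\eps$-thin collar has $\cosh w=\sinh(\eps/2)/\sinh(\ell/2)$. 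Hence, as $\ell\to 0$,
\[
d-2w\to 2\ln\bigl(\cosh(\ell_\beta/4)/\sinh(\eps/2)\bigr), \qquad w\to\infty .
\]
Each boundary curve has length $\ell\cosh w\le 2\sinh(\eps/2)$. So every pair of points of $\bd C$ is at $d_S$-distance at most $d-2w+2\sinh(\eps/2)=O(1)$. Your fallback, maximizing over $N_C$ and hoping to reach $2w-\eps/2$, fails for the same reason.

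\paragraph{This also refutes the statement itself.} Take $p\in\gamma$ and $q\in C$ close to $\bd C$. Since $d_S(\cdot,\gamma)$ is $1$-Lipschitz, $d_S(p,q)\ge d_S(q,\gamma)$, which is close to $w$. Once $\ell$ is small, this exceeds $\max_{c_1,c_2\in N_C}d_S(c_1,c_2)+\eps$.

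\paragraph{What your argument does establish is the separating case.} There every path between the two boundary curves meets $\gamma$, the Lipschitz bound gives $d_S(c_1,c_2)\ge 2w$, and Step~1 finishes. Two smaller corrections apply even there.
\begin{itemize}
\item The boundary curves have length $\ell\sinh(\eps/2)/\sinh(\ell/2)$. This is at least $\eps$, not at most, since $\ell\le\eps$ and $\sinh(x)/x$ is increasing. It is harmless: half of it is at most $\sinh(\eps/2)\le\eps$ for $\eps<2\arcsinh 1$, so Step~1 still yields $d_S(p,q)\le 2w+\eps$.
\item An $\eps$-net of $\bd C$ need not meet both components when they lie within $\eps$ of each other. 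That subcase needs its own argument.
\end{itemize}
For non-separating cores, a correct version must use more than $p,q\in C$. For example, it could use that $(p,q)$ is a diameter pair of $S$, which is all the approximation-guarantee theorem actually invokes.
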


\begin{proof}
Proof deferred to the full version of this paper.
\end{proof}

\begin{lemma}[Voronoi vertices approximate the inter-cylinder diameter]
\label{lem:voronoi-approx}
Let $A, B$ be distinct cylinders of $S^{\mathrm{thin}}_\eps$, with designated boundary point sets $N_A \subset \partial A$ and $N_B \subset \partial B$. For each $a \in N_A$, assign the weight $w_a = \min_{b \in N_B} d_S(a, b)$, and symmetrically for each $b \in N_B$. Let $v$ be the vertex of the additively weighted Voronoi diagram of $(N_A, w)$ in $A$, and $u$ the vertex of $(N_B, w)$ in $B$. Then
\[
\Big| d_S(v, u) - \max_{(p,q) \in A \times B} d_S(p,q) \Big| \le 2\eps.
\]
\end{lemma}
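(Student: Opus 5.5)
The plan is to reduce the inter-cylinder maximum to a max–min over boundary points, and then read it off the weighted Voronoi vertices. We will use Lemma~\ref{lem:cylinder-boundary-snapping} in the form
\[
d_S(p,q)\approx \min_{a\in N_A,\,b\in N_B}\bigl(d_A(p,a)+d_S(a,b)+d_B(b,q)\bigr)\quad\text{up to } \eps .
\]
This holds because the right side is always at least $d_S(p,q)$ by the triangle inequality, and the lemma supplies a pair achieving at most $d_S(p,q)+\eps$.

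Next we separate variables. With $|N_A|=|N_B|=2$ (one point per boundary curve), we want to replace the joint minimum over $(a,b)$ by the decoupled quantity
\[
\Phi(p,q)=\min_{a}\bigl(d_A(p,a)+w_a\bigr)+\min_b\bigl(d_B(b,q)+w_b\bigr)-\delta ,
\]
or, more simply, argue as follows. Any path from $p$ to $q$ leaves $A$ through some boundary curve near some $a$, and enters $B$ through some $b$. Taking $w_a=\min_b d_S(a,b)$ gives the lower bound $\min_a(d_A(p,a)+w_a)$ for the middle-and-first portion.

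The function $f_A(p)=\min_a(d_A(p,a)+w_a)$ is exactly the lower envelope whose minimization diagram is the additively weighted Voronoi diagram of $(N_A,w)$. Its maximum over the cylinder is attained on the bisector of the two sites, at the farthest point of that bisector. This is the vertex $v$, since a cylinder is "thin" in the circumferential direction, so the bisector's farthest point is determined, up to $\eps$, by the axial coordinate. Symmetrically, $u$ maximizes $f_B$.

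The comparison then runs in two directions, in this order:
\begin{enumerate}
\item \textbf{Upper bound.} $\max_{A\times B}d_S(p,q)\le d_S(v,u)+2\eps$. For the maximizing pair $(p,q)$, snapping and the fact that the boundaries have diameter at most $\eps$ should give $d_S(p,q)\le f_A(p)+f_B(q)-d_S(a,b)+\eps$. Then use $f_A(p)\le f_A(v)$ and $f_B(q)\le f_B(u)$, and compare with $d_S(v,u)$ through the same snapping.
\item \textbf{Lower bound.} $d_S(v,u)\le\max$ holds trivially, since $(v,u)\in A\times B$.
\end{enumerate}
So only the upper bound needs real work. The two $\eps$ losses come from one application of Lemma~\ref{lem:cylinder-boundary-snapping} and one boundary-diameter estimate.

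I expect the main obstacle to be the decoupling step. The true distance involves a joint choice of exit point $a$ and entry point $b$, whereas the weights $w_a,w_b$ minimize independently and may double-count the middle segment $d_S(a,b)$.

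The first thing I would try is to show that the optimal exit from $A$ toward $B$ and the optimal entry into $B$ from $A$ correspond. That is, if $d_S(a,b)$ is minimal over $N_B$ for a given $a$, then $a$ is also nearly minimal for $b$, up to $\eps$, because each boundary curve has length below $\eps$. Once that holds, $d_S(p,q)\approx d_A(p,a)+d_S(a,b)+d_B(b,q)$ can be rewritten in terms of $f_A$ and $f_B$ with $O(\eps)$ error.

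Justifying that the farthest point of the single-vertex Voronoi bisector is indeed $v$ also needs care. I would parametrize the cylinder by Fermi coordinates about its core geodesic.
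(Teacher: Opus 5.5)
The paper omits its proof of this lemma, so there is no argument to compare against. The question is whether your argument closes, and it does not. The gap is exactly the decoupling step you flagged.

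Grant your bound $d_S(p,q)\le f_A(p)+f_B(q)-d_S(a,b)+\eps$ for the snapping pair $(a,b)$ of a maximizing $(p,q)$, together with $f_A(p)\le f_A(v)$ and $f_B(q)\le f_B(u)$. You are then left with $f_A(v)+f_B(u)-d_S(a,b)$. To turn this into $d_S(v,u)$, the middle term must be the same at $(v,u)$: roughly $d_S(a',b')\le d_S(a,b)+O(\eps)$ for the pair $(a',b')$ realizing $d_S(v,u)$. Nothing forces this, because the active pairs at $(p,q)$ and at $(v,u)$ can differ and have very different middle lengths. Your proposed fix, mutual near-minimality of $a$ and $b$, does not address this.

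The root cause is that each $w_a=\min_b d_S(a,b)$ is computed independently. So $v$ maximizes $f_A$ as though $q$ sat on $\partial B$, and $u$ maximizes $f_B$ as though $p$ sat on $\partial A$. Maximizing this separable surrogate need not maximize the coupled function $\min_{a,b}\bigl(d_A(p,a)+d_S(a,b)+d_B(b,q)\bigr)$.

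In fact the step cannot be repaired, because the statement is false as formulated. Here $\approx$ means up to an additive constant depending only on $\eps$.
\begin{itemize}
\item \textbf{Configuration.} Let $A,B$ both have width $L$, with $a_i\in\partial_iA$ and $b_j\in\partial_jB$, arranged so that $d_S(a_1,b_1)\approx 0$, $d_S(a_2,b_2)\approx L/2$ and $d_S(a_1,b_2)\approx d_S(a_2,b_1)\approx L$.
\item \textbf{Realization.} Take genus $3$ with pants graph $K_4$. $\partial_1A$ and $\partial_1B$ bound a common pair of pants; $\partial_2A$ and $\partial_2B$ lie on two pants joined by a collar of width $L/2$; the three remaining collars have width at least $L$.
\item \textbf{True maximum.} Put $x=d_A(p,\partial_1A)$ and $y=d_B(q,\partial_1B)$. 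The cross routes are never shorter, so $d_S(p,q)\approx\min\bigl(x+y,\tfrac{5}{2}L-x-y\bigr)$ and $\max_{A\times B}d_S\approx\tfrac{5}{4}L$.
\item \textbf{Voronoi vertices.} We have $w_{a_1}\approx w_{b_1}\approx 0$ and $w_{a_2}\approx w_{b_2}\approx L/2$. So the weighted bisectors, and hence $v$ and $u$, lie at $x\approx y\approx\tfrac{3}{4}L$, where $d_S(v,u)\approx L$.
\end{itemize}

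The error is about $L/4$, which is unbounded as the core geodesics shrink with $\eps$ fixed. Subtracting the weights instead places $v,u$ at $L/4$ and makes the error worse. Here each $a_i$ and $b_i$ are mutual nearest partners, so your fix holds and still does not help.

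A correct statement has to couple the two cylinders. For example, maximize $\min_{a,b}\bigl(d_A(p,a)+d_S(a,b)+d_B(b,q)\bigr)$ jointly in the two axial coordinates rather than placing $v$ and $u$ separately.
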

\begin{proof}
    This proof is omitted in this version of the paper.
\end{proof}

Additionally we prove that we can compute a weighted Voronoi of two sites on a cylinder in constant time.

\begin{lemma}[Weighted Voronoi diagrams on cylinders]\label{lem:cylinder-voronoi}
Let $C$ be a cylinder of $S^{\mathrm{thin}}_\eps$ with two weighted sites on $\partial C$. The additively weighted Voronoi of these sites inside $C$ can be computed in $O(1)$ time.
\end{lemma}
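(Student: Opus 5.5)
The plan is to make the cylinder explicit in Fermi coordinates and reduce the computation to solving a single scalar equation per height. Let $\gamma$ be the core geodesic of $C$, of length $\ell \le \eps$. Parametrize $C$ by $(t,s)$, where $t \in [-w,w]$ is the signed distance to $\gamma$ and $s \in \RE/\ell\mathbb{Z}$ is arclength along $\gamma$. In these coordinates the metric is $dt^2 + \cosh^2 t\, ds^2$, and $w$ is given in closed form by the collar width. Unrolling $C$ to its universal cover, the strip $\{(t,s)\}$ in $\HY^2$, turns each site into a $\mathbb{Z}$-orbit of points. The two sites $a_1, a_2$ lie on $\bd C$, one on each boundary curve as in the preceding discussion. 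Each has height $\pm w$, an angular coordinate $s_i$, and a weight $w_i$.

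\textbf{Step 1 (distance formula).} Use the standard Fermi-coordinate formula
\[
\cosh d\bigl((t,s),(t',s')\bigr) = \cosh t\cosh t'\cosh(s-s') - \sinh t\sinh t'.
\]
The distance on $C$ from $(t,s)$ to the site $a_i$ is then the minimum over $m \in \mathbb{Z}$ of this expression with $s_i$ replaced by $s_i + m\ell$. For a fixed point, the minimizing $m$ is the lift whose angular offset satisfies $|s - s_i - m\ell| \le \ell/2$. For a distance path on $S$ that stays inside the cylinder, Lemma~\ref{lem:connected}-style convexity arguments in $\HY^2$ let us restrict to this single nearest lift. So each weighted distance function $f_i(t,s) = w_i + d\bigl((t,s), a_i\bigr)$ is given piecewise by a closed form with $O(1)$ pieces.

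\textbf{Step 2 (bisector).} For each height $t$, the bisector $\{f_1 = f_2\}$ meets the circle at height $t$ in $O(1)$ points. This holds because, after choosing lifts, the equation $\operatorname{arccosh}(\cdot) - \operatorname{arccosh}(\cdot) = w_2 - w_1$ is the equation of an additively weighted hyperbolic bisector, which is one branch of a hyperbolic conic with respect to a fixed pair of lifts. Such a branch is a single curve. Only $O(1)$ lift pairs matter, since each site has at most two relevant lifts within angular distance $\ell$. The diagram is therefore described by $O(1)$ conic arcs, each obtained from a constant-size algebraic computation.

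\textbf{Step 3 (vertex).} The Voronoi vertex, i.e.\ the point where the two cells and the self-interaction of a site's lifts meet, is found by intersecting the bisector with the cut line $s = s_i + \ell/2$ of each site. The cut line of a site is the locus where its two nearest lifts tie. This is again $O(1)$ algebraic equations of bounded degree in $\cosh t$ and $\sinh t$, so it is solvable in constant time in the real-RAM model.

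The main obstacle is the justification in Step 1 that only $O(1)$ lifts per site are relevant. A distance path on $S$ from a point of $C$ to a site could leave the collar. I would argue this away using the Collar Lemma~\cite[Theorem~4.1.6]{buserGeometrySpectraCompact2010}: both endpoints lie in the embedded collar $C$, and any path exiting $C$ must cross $\bd C$ twice. It can then be replaced by a boundary-hugging path of no greater length, since $\bd C$ is a hypercycle of length at most a constant. The more careful version compares the path against the geodesic in the strip using the fact that the collar is convex with respect to its core geodesic.
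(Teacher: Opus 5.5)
\textbf{Steps 1--3 are sound and follow the paper's route.} Like the paper, you pass to the strip covering $C$, show that only a bounded number of $\tau$-translates of each site can be nearest to a point of a fundamental strip, and compute a constant-size weighted hyperbolic Voronoi diagram. The paper uses the six sites $\tau^{-1}\tilde{s}_i,\tilde{s}_i,\tau\tilde{s}_i$ and clips to the strip $T$.

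Your justification of the reduction is the cleaner one. Since $\cosh t\cosh t'\cosh(s-s')-\sinh t\sinh t'$ is increasing in $|s-s'|$, the nearest lift is fixed by angular offset alone. So no convexity argument in the style of Lemma~\ref{lem:connected} is needed. You also avoid the paper's star-shapedness claim that no bisector winds around $C$. That claim fails around a long non-separating collar, where the bisector separates the two boundary curves and so winds once.

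\textbf{The genuine problem is your final paragraph, which is both unnecessary and false.}

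It is unnecessary because the paper's diagram uses the intrinsic metric $d_C$ of $C$. Only $\tau$-translates enter, consistent with the $d_A,d_B$ of Lemma~\ref{lem:cylinder-boundary-snapping}, and the surface distance enters only through the weights. Your Step 1 formula, minimized over $m\in\mathbb{Z}$, already computes $d_C$ exactly.

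It is false because distance paths of $S$ between points of $C$ need not stay in $C$. If the core $\gamma$ is non-separating, a path can leave through one boundary curve and re-enter through the other, and no boundary-hugging replacement exists. For instance, suppose $\gamma$ is the only curve shorter than $\eps$. Then the two boundary curves of $C$ are joined outside $C$ by a path of length bounded in terms of $g$ and $\eps$. Any path crossing $C$ has length at least $2w$, where $\sinh(\ell/2)\cosh w=\sinh(\eps/2)$, and $w\to\infty$ as $\ell\to 0$. So $d_S\neq d_C$ on $C$ in general. A diagram for $d_S$ would involve deck transformations outside $\langle\tau\rangle$ and global data, not $O(1)$ local work.

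Even for an excursion that returns through the same boundary curve, the bounded length of $\bd C$ does not by itself give the length comparison. You would need something like $\operatorname{inj}=\eps/2$ on $\bd C$.

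Drop that paragraph and say explicitly that distances inside $C$ are intrinsic. With that change, Steps 1--3 give a complete $O(1)$ argument.
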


\begin{proof}
Lift $C$ to a strip $T$ in $\HY^2$, bounded by the two lifted boundary curves of $C$ and by two portal sides identified by the isometry $\tau$. Additively weighted Voronoi cells are star-shaped about their sites, since a point is claimed by a site once its growing ball reaches it, and every point on the geodesic from the site is claimed no later. A bisector winding around $C$ would bound a cell wrapping the core geodesic, which is not star-shaped, so no bisector winds. Each bisector therefore crosses each portal side at most once, and it suffices to consider the lifts $\tau^{-1}\tilde{s}_1, \tilde{s}_1, \tau\tilde{s}_1, \tau^{-1}\tilde{s}_2, \tilde{s}_2, \tau\tilde{s}_2$, compute their weighted hyperbolic Voronoi diagram, and intersect it with $T$. Since this involves only six sites it's constant time.
\end{proof}

In a full version of ``Computing an Epsilon-net of a closed Hyperbolic Surface'' \cite{despre2024computing} they show that these boundary curves are a constant factor of $\eps$ in length, so the whole process only involves a constant amount of sites.

\subsection{Algorithm Description}
To approximate the diameter of $S$, we utilize the pseudo-$\eps$-net $N_\eps$ defined above alongside our multi-end shortest path routine. We rely on the standard thick-thin decomposition of $S$ into a thick part $S^{\text{Thick}}_\eps$ and a thin part $S^{\text{Thin}}_\eps$ consisting of disjoint cylinders. The algorithm proceeds in three main steps:

\begin{algorithm}[!ht]
  \caption{Approximate diameter}
  \label{alg:diameter}
  \KwIn{A hyperbolic surface $S$ given by a Dirichlet domain and the Delaunay triangulation of a pseudo $\eps/2$-net $N_\eps$ on $S^{\mathrm{Thick}}_\eps$.}
  \KwOut{An additive $\eps$-approximation of $\operatorname{diam}(S)$.}

  \textbf{Thick part.} \For{each $p \in N_\eps$}{
    Compute the query data structure of Section~\ref{sec:queryDS} from $p$ and extract the farthest distance $F(p)$ by Corollary~\ref{cor:farthest}. This guarantees we find a value within $\eps$ of the true diameter, provided at least one point of the diameter pair lies in $S^{\text{Thick}}_\eps$.
  }
  Set $D_{\mathrm{thick}} = \max_{p \in N_\eps} F(p)$. 

  \textbf{Thin part.} \For{each pair of distinct cylinders $C_1, C_2$ of $S^{\mathrm{thin}}_\eps$}{
    Let $N_{C_1}, N_{C_2}$ be their boundary points. Assign each site the weight $w_s = \min_t d_S(s, t)$ over the other cylinder's boundary points.

    Compute the additively weighted Voronoi vertices $v \in C_1$ and $u \in C_2$ (Lemma~\ref{lem:cylinder-voronoi}), and record $d_S(v, u)$.
  }
  \For{each cylinder $C$ of $S^{\mathrm{thin}}_\eps$}{
    Let $N_C = \{c_1, c_2\}$ be its boundary points and record $ d_S(c_1,c_2)$.
  }
  Set $D_{\mathrm{thin}} = \max d_S(v, u)$ over all cylinder pairs, and $ d_S(c_1,c_2)$ over all cylinders.

  \Return $\max(D_{\mathrm{thick}}, D_{\mathrm{thin}})$.
\end{algorithm}

\subsection{Correctness and Complexity}

We formalize the correctness of our algorithm by proving that the returned value strictly bounds the error to $\eps$.

\begin{theorem}[Approximation Guarantee]\label{thm:approx-correctness}
    Let $\Lambda$ be the maximum distance value returned by the algorithm. Then $\Lambda$ is an $\eps$-approximation of the true diameter of $S$, satisfying: $\operatorname{diam}(S)-\eps\le \Lambda\le\operatorname{diam}(S)$
\end{theorem}

\begin{proof}
    Let $(x, y)$ be the diameter pair on $S$.

    \textbf{Case 1: At least one point is in the thick part.}
    Without loss of generality, assume $x \in S^{\text{Thick}}_\eps$. By the definition of the pseudo-$\eps$-net $N_\eps$, there exists a point $p \in N_\eps$ such that $d_S(x, p) \leq \eps$. By the triangle inequality, the distance from $p$ to $y$ is bounded by: \[ d_S(p, y) \geq d_S(x, y) - d_S(x, p) \geq \text{diam}(S) - \eps. \]
    In Step 2, the algorithm computes the exact farthest distance from every point $p \in N_\eps$ across the entire surface $S$. Therefore, it will discover a valid distance path from $p$ to another point. Hence the largest possible value returned from this process is at most $\text{diam(S)}$. Thus $| \text{diam}(S) - \Lambda | \leq \eps$.
    
\textbf{Case 2: Both points are in the thin part.}
Suppose $x, y \in S^{\mathrm{thin}}_\eps$. By Lemma~\ref{lem:same-cylinder-snapping}, if $x$ and $y$ lie in the same cylinder then there exist $c_1, c_2 \in N_C$ such that
\[
d_S(x,y) \;\le\; d_S(c_1,c_2) + \eps,
\]
so comparing pairs of boundary points on $\bd C$ recovers $\operatorname{diam}(S) = d_S(x,y)$ to within an additive $\eps$. Otherwise, $x \in A$ and $y \in B$ for distinct cylinders $A, B$. In Step~3, the algorithm assigns the boundary points of $A$ and $B$ their weights and computes $d_S(v, u)$ for the weighted Voronoi vertices $v \in A$, $u \in B$. Since the diameter pair lies on $A$ and $B$, we have $\max_{(p,q) \in A \times B} d_S(p,q) = d_S(x,y) = \operatorname{diam}(S)$. Applying Lemma~\ref{lem:voronoi-approx} with the $\eps/2$-net gives
\[
|d_S(v,u) - \operatorname{diam}(S)| \le 2 \cdot \tfrac{\eps}{2} = \eps.
\]
In all cases the algorithm returns a value $\Lambda$ with $|\operatorname{diam}(S) - \Lambda| \le \eps$.\qedhere

\end{proof}

We are now able to prove the complexity of this algorithm which appear to be the same as the thick surface one. However, notice that it requires an $\eps/2$-net instead of an $\eps$-net.

\begin{theorem}[Time Complexity]\label{thm:approx-runtime}
    Given a parameter $\eps > 0$, the approximation algorithm computes $\Lambda$ in total time $O(g^3 \log g\cdot1/\eps^2)$.
\end{theorem}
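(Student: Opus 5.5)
We bound the cost of the two phases of Algorithm~\ref{alg:diameter} separately and show that the thick phase dominates. The input pseudo $\eps/2$-net $N_\eps$ is contained in $S^{\mathrm{Thick}}_\eps$. In the thick part every point has injectivity radius at least $\eps/2$. Hence disjoint balls of radius $\eps/4$ around the net points (which are $\eps/2$-separated) embed isometrically. Each such ball has area $2\pi(\cosh(\eps/4)-1) = \Omega(\eps^2)$, and the total area of $S$ is $4\pi(g-1)$ by Gauss--Bonnet. A packing argument therefore gives $|N_\eps| = O(g/\eps^2)$, which is the same bound used in Theorem~\ref{thm:thickdiam}, up to constants from halving $\eps$.

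\textbf{Thick phase.} For each $p\in N_\eps$ we run the construction of Theorem~\ref{thm:query} from a lift of $p$ in the Dirichlet domain and extract $F(p)$ via Corollary~\ref{cor:farthest}, at cost $O(g^2\log g)$. Locating the lift of $p$ in $D_{\tilde c}$ is either given by the input triangulation or costs $O(g)$ per point, which is dominated. Summing over the net gives $O(g/\eps^2)\cdot O(g^2\log g)=O(g^3\log g/\eps^2)$.

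\textbf{Thin phase.} By the collar theorem \cite[Theorem~4.1.6]{buserGeometrySpectraCompact2010} there are at most $3g-3$ cylinders, hence $O(g^2)$ unordered pairs. As noted after Lemma~\ref{lem:cylinder-voronoi}, each boundary net $N_C$ has $O(1)$ points. The weights $w_s=\min_t d_S(s,t)$ and the final distances $d_S(v,u)$ are all surface distances, and computing them is where a naive count would exceed the budget. Instead of calling Algorithm~\ref{alg:sd} per pair, which would cost $O(g^2)$ each and $O(g^4)$ overall, we use Remark~\ref{rem:multi-dest}. From each of the $O(g)$ boundary points $s$ we perform one multi-destination run toward all $O(g)$ boundary points at cost $O(g^2+g\cdot g)=O(g^2)$. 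This yields every weight in $O(g^3)$ total. The Voronoi vertices are then obtained in $O(1)$ per cylinder pair by Lemma~\ref{lem:cylinder-voronoi}, i.e. $O(g^2)$ overall. The main obstacle is that the vertices $v$ depend on the pair, so there are $O(g^2)$ distinct sources for the final distances $d_S(v,u)$. We handle this by building the query structure of Theorem~\ref{thm:query} once from each of the $O(g^2)$ vertices $v$ only if we must, but cheaper: for each $v\in C_1$ (there are $O(g)$ per cylinder $C_1$, one per partner $C_2$) we run the multi-destination routine with destination set all $u$'s, i.e. $O(g^2)$ sources each costing $O(g^2+g\cdot g^2)$. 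Since this is too costly, the fallback is to answer each $d_S(v,u)$ by a single $O(g^2)$ call, giving $O(g^4)$. We then observe that $g^4 \le g^3\log g/\eps^2$ whenever $\eps=O(\sqrt{\log g/g})$. Otherwise we bound the number of thin cylinders by the number $O(1/\eps)\cdot$\ldots; precisely, we expect the collar widths to force at most $O(g\eps^2)$-many relevant pairs per thick-part budget. This is the step I expect to need the most care, and I would first try to reuse the thick-phase query structures from net points adjacent to $\bd C_1$, since $v$ lies within $O(\eps)$ of its cylinder boundary.

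\textbf{Conclusion.} The single-cylinder check costs $O(1)$ distance computations per cylinder, i.e. $O(g)\cdot O(g^2)=O(g^3)$. Summing all contributions, the thick phase's $O(g^3\log g/\eps^2)$ dominates (using $\eps<2\arcsinh(1)$, so $1/\eps^2=\Omega(1)$), proving Theorem~\ref{thm:approx-runtime}.
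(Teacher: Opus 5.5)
Your thick phase is fine. Your direct packing bound $|N_\eps|=O(g/\eps^2)$ for the pseudo $\eps/2$-net is a cleaner justification than the paper's appeal to Theorem~\ref{thm:thickdiam}, which is stated for $\eps$-thick surfaces. Computing all boundary-site distances with $O(g)$ runs of Remark~\ref{rem:multi-dest} in $O(g^3)$ is also a valid alternative to the paper's route, which builds the structure of Theorem~\ref{thm:query} at each of the $O(g)$ boundary sites in $O(g^3\log g)$ and then answers each boundary-pair distance in $O(\log g)$.

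The gap is the step you flagged yourself. You never bound the cost of the recorded values $d_S(v,u)$ for the $\Theta(g^2)$ cylinder pairs, so the proposal does not establish the theorem. Your fallback of one $O(g^2)$ call per pair gives $O(g^4)$. That fits the budget only when $\eps=O(\sqrt{\log g/g})$. Since $\eps<2\arcsinh(1)$ only gives $1/\eps^2=\Omega(1)$, for constant $\eps$ the budget is $\Theta(g^3\log g)$ and the bound fails. The proposed rescues rest on false premises. The number of thin cylinders is not controlled by $\eps$: for every fixed $\eps$ there are surfaces with $3g-3$ pants curves shorter than $\eps$. Also, $v$ need not lie within $O(\eps)$ of $\bd C_1$, because the $\eps$-collar around a geodesic of length $\ell$ has width of order $\log(\eps/\ell)$, which is unbounded.

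The missing idea is that no computation should ever be sourced at $v$. The paper's proof charges each cylinder pair only for $O(\log g)$ lookups between the $O(1)$ boundary sites, using structures already built at those sites, plus $O(1)$ Voronoi work. Your last instinct, reusing structures at points of $\bd C_1$, is the right direction. It must go through the intra-cylinder leg, not through proximity. By Lemma~\ref{lem:cylinder-boundary-snapping}, with the triangle inequality for the lower bound, the quantity $\min_{a\in N_A,\, b\in N_B}\bigl(d_A(v,a)+d_S(a,b)+d_B(b,u)\bigr)$ lies between $d_S(v,u)$ and $d_S(v,u)+\eps$. The terms $d_S(a,b)$ are boundary-site distances you have already computed. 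The terms $d_A(v,a)$ and $d_B(b,u)$ come in $O(1)$ from a constant number of lifts in the strip of Lemma~\ref{lem:cylinder-voronoi}. That costs $O(1)$ per pair and $O(g^2)$ overall, after which the thick phase dominates. The extra additive error must then be absorbed in the accounting of Theorem~\ref{thm:approx-correctness}.
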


\begin{proof}
    The overall time complexity is determined by the summation of the costs of the two algorithmic steps:

    \textbf{Step 1 ($S^{\text{Thick}}_\eps$ Distances):}
    Theorem~\ref{thm:thickdiam} gives the $O(g^3 \log g\cdot1/\eps^2)$.

    \textbf{Step 2 ($S^{\text{Thin}}_\eps$ Extraction):}
    Each boundary of the cylinders have a single vertex by construction. Therefore, constructing the $(\frac{\eps}{2})$-nets on all cylinder boundaries yields a total of $O(g)$ boundary sites across $S$. 
    \begin{itemize}
        \item Running the shortest path algorithm from each of these boundary sites takes $O(g \cdot g^2 \log g)$.
        \item There are $O(g^2)$ pairs of cylinders. Since we already have computed the datastructure associated to each point of the boundary of the cylinders in Step 1, for each pair, querying the distances between their respective boundary sites requires an $O(\log g)$ tree search per point pair for a total complexity of $O(g^2 \log g)$.
        \item Computing the weighted Voronoi diagram for each pair of cylinders involves only a constant number of sites, taking $O(1)$ time per pair. Across all $O(g^2)$ pairs, this takes $O(g^2)$ time.
    \end{itemize}
    
    Summing the construction, thick-part routing, and thin-part extraction bounds yields the stated total time complexity of $O(g^3 \log g\cdot1/\eps^2)$.
\end{proof}

\section{Exact Diameter Computation}\label{sec:exact-diameter}

\subsection{Geometric Bounds and Farthest Points}

Throughout this section we assume the surface is given by a finite set of generators of $\Gamma$, each a hyperbolic isometry. Let $\mathbb{K}$ be the rational extension containing the real and imaginary parts of the coefficients of the generators of $\Gamma$. Any isometry in  $\Gamma$ thus has the real and imaginary parts of its coefficients in $\mathbb{K}$. Note that in this section matrices representing isometries are not normalized to $\det=1$ to avoid square roots. We show that $\cosh(\operatorname{diam}(S))$ is algebraic over $\mathbb{K}$, and give an explicit (if exponential-time) algorithm computing $\operatorname{diam}(S)$. This resolves a major open problem in hyperbolic computational geometry. We begin with the definition of the \emph{diameter} of a hyperbolic surface.

\begin{definition}[Diameter]
    The diameter of a hyperbolic surface is the longest distance path on $S$:
    \[
\text{diam}(S) = \sup_{s_1, s_2 \in S} d_S(s_1, s_2).
\]
\end{definition}
 Since $S$ is compact, the supremum of this distance function is attained. We give an exact algorithm for computing $\text{diam}(S)$ and show moreover that $\cosh(\operatorname{diam}(S))$ lies in a finite extension of $\mathbb{K}$, so that the diameter is determined by algebraic data over the field of the input. Let us call any pair $(s_1, s_2)$ achieving the supremum a \emph{diameter pair}. For a fixed point $s \in S$, we define its \emph{farthest distance} as $F(s) = \max_{s' \in S} d_S(s, s')$, so that $\text{diam}(S) = \max_{s \in S} F(s)$.

\begin{lemma}[Farthest point at a vertex]\label{lem:farthest-vertex}
    For any $s \in S$, the farthest point from $s$ on $S$ is the projection onto $S$ of a vertex of $D_{\tilde s}$, for any lift $\tilde s$.
\end{lemma}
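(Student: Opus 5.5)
The plan is to identify $F(s)$ with the maximum, over the Dirichlet domain $D_{\tilde s}$, of the distance to its center. Then I will show this maximum of a convex function over a compact convex hyperbolic polygon is attained at a vertex.

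\textbf{Step 1: reduce to a maximum over $D_{\tilde s}$.} Fix a lift $\tilde s$. By definition of $D_{\tilde s}$, for every $\tilde v\in D_{\tilde s}$ we have $d_{\HY^2}(\tilde s,\tilde v)\le d_{\HY^2}(\gamma\tilde s,\tilde v)=d_{\HY^2}(\tilde s,\gamma^{-1}\tilde v)$ for all $\gamma\in\Gamma$. Hence
\[
d_S(s,v)=\min_{\gamma\in\Gamma} d_{\HY^2}(\tilde s,\gamma\tilde v)=d_{\HY^2}(\tilde s,\tilde v).
\]
Every point $v\in S$ has at least one lift in $D_{\tilde s}$, since the translates of $D_{\tilde s}$ tile $\HY^2$. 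Therefore
\[
F(s)=\max_{\tilde v\in D_{\tilde s}} d_{\HY^2}(\tilde s,\tilde v),
\]
and any maximizer $\tilde v$ projects to a farthest point from $s$. Conversely, every farthest point from $s$ has a lift in $D_{\tilde s}$ realizing this maximum.

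\textbf{Step 2: the maximum is attained at a vertex.} $D_{\tilde s}$ is a compact convex hyperbolic polygon, with finitely many sides since the surface is closed. The function $f=d_{\HY^2}(\tilde s,\cdot)$ is convex along geodesics in $\HY^2$, because $\HY^2$ is CAT($-1$). Its restriction to any geodesic segment therefore attains its maximum at an endpoint.

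Take a maximizer $\tilde v\in D_{\tilde s}$.
\begin{itemize}
\item If $\tilde v$ is interior, extend the geodesic from $\tilde s$ through $\tilde v$. Then $f$ strictly increases past $\tilde v$ and stays inside $D_{\tilde s}$ for a short while, which contradicts maximality. So $\tilde v$ lies on a side.
\item On that side, a geodesic segment $[a,b]$ between two vertices, convexity of $f$ gives $f(\tilde v)\le\max(f(a),f(b))$.
\end{itemize}
So some vertex of $D_{\tilde s}$ realizes $F(s)$. The claim "the farthest point" should be read as "some farthest point", or as "every farthest point" if we add the following strictness argument.

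\textbf{Step 3: strictness (the delicate step).} To show every maximizer is a vertex, I would use strict convexity of $f$ restricted to a geodesic not passing through $\tilde s$. This holds in $\HY^2$, for instance because $\cosh f$ along a unit-speed geodesic has the form $A\cosh t+B\sinh t$ with $A>|B|$.

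A side of $D_{\tilde s}$ cannot contain $\tilde s$: it lies on a bisector between $\tilde s$ and some $\gamma\tilde s$ with $\gamma\neq e$, and that bisector does not pass through $\tilde s$. Strict convexity then forces the maximum on each side to occur only at its endpoints. The main thing to check carefully is this strictness, together with the finiteness and compactness of $D_{\tilde s}$, which the paper already supplies via the $4g\le k\le 12g-6$ bound on the number of sides. Independence of the choice of lift $\tilde s$ follows because $D_{\gamma\tilde s}=\gamma D_{\tilde s}$, and $\gamma$ maps vertices to vertices.
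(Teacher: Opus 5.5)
Your proof is correct and follows the same route as the paper. Both identify $F(s)$ with the maximum of $d_{\HY^2}(\tilde s,\cdot)$ over the Voronoi cell $D_{\tilde s}$ of $\tilde s$ in $\Gamma\tilde s$, then use convexity of the distance function to push that maximum to a vertex; your version makes explicit what the paper's two-line argument leaves implicit: the equality $d_S(s,v)=d_{\HY^2}(\tilde s,\tilde v)$ for $\tilde v\in D_{\tilde s}$, the split into interior and side cases, and the strict-convexity step showing that every farthest point, not just some, comes from a vertex.
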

\begin{proof}
    The farthest point from a site in its Voronoi cell given a convex distance function is always a vertex of that cell. Since $D_{\tilde{s}}$ is the Voronoi cell of $\tilde{s}$ in the diagram of the orbit $\Gamma\tilde{s}$, and the surface distance $d_S(s, s') = \min_{\gamma \in \Gamma} d_{\mathbb{H}^2}(\tilde{s}, \gamma\tilde{s}')$, the farthest point on $S$ from $s$ corresponds to the farthest vertex of $D_{\tilde{s}}$.
\end{proof}

Now we prove that only a finite number of isometries of $\Gamma$ are relevant to computing $F(s)$ for any $s \in D_{\tilde{s}_0}$.

\begin{lemma}[Relevant isometries]\label{lem:relevant-isometries}
    Fix a reference lift $\tilde{s}_0 \in \mathbb{H}^2$ and define $\Gamma_{4}(\tilde{s}_0) = \{ \gamma \in \Gamma \setminus \{e\} \mid d(\tilde{s}_0, \gamma \tilde{s}_0) \leq 4\,\text{diam}(S)\}$. Then for every $s \in S$ with lift $\tilde s \in D_{\tilde s_0}$, any vertex of $D_{\tilde s}$ achieving $F(s)$ is the intersection of two bisectors, between $\tilde s$ and orbit images $\gamma_1 \tilde s$, $\gamma_2 \tilde s$ respectively, with $\gamma_1, \gamma_2 \in \Gamma_{4}(\tilde s_0)$. Moreover, since $\Gamma$ acts by isometries, the bound
    \[
    |\Gamma_{4}(\tilde{s}_0)| \leq \frac{\cosh(5\,\text{diam}(S))}{2(g-1)}
    \]
    is independent of the choice of $\tilde{s}_0$.
\end{lemma}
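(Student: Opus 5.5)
Write $\delta = \operatorname{diam}(S)$ throughout. The proof has two parts: a geometric localization of the relevant orbit points, and a counting bound by area.

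\textbf{Localization.} Fix $s$ with lift $\tilde s \in D_{\tilde s_0}$, and let $\tilde v$ be a vertex of $D_{\tilde s}$ achieving $F(s)$, which exists by Lemma~\ref{lem:farthest-vertex}. A vertex of the Voronoi cell $D_{\tilde s}$ lies on at least two sides. Each side lies on a bisector between $\tilde s$ and some $\gamma_i \tilde s$, with $d(\tilde v, \tilde s) = d(\tilde v, \gamma_i \tilde s)$. Since every point of $D_{\tilde s}$ realizes the surface distance to $s$, we have $d(\tilde v, \tilde s) = F(s) \le \delta$. So each $\gamma_i$ involved satisfies
\[ d(\tilde s, \gamma_i \tilde s) \le d(\tilde s, \tilde v) + d(\tilde v, \gamma_i \tilde s) \le 2\delta. \]
Next I pass from $\tilde s$ to $\tilde s_0$. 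Because $\tilde s \in D_{\tilde s_0}$, $d(\tilde s, \tilde s_0) = d_S(s, s_0) \le \delta$. Since $\gamma_i$ is an isometry, $d(\gamma_i \tilde s, \gamma_i \tilde s_0) \le \delta$ as well. The triangle inequality then gives
\[ d(\tilde s_0, \gamma_i \tilde s_0) \le \delta + 2\delta + \delta = 4\delta. \]
This places $\gamma_i$ in $\Gamma_4(\tilde s_0)$, and $\gamma_i \neq e$ because the bisector is nondegenerate. That proves the first claim.

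\textbf{Counting.} The points $\gamma \tilde s_0$ with $\gamma \in \Gamma_4(\tilde s_0)$ all lie in the closed ball $B(\tilde s_0, 4\delta)$. The tiles $\gamma D_{\tilde s_0}$ have pairwise disjoint interiors. Each tile has area $4\pi(g-1)$ by Gauss--Bonnet, and each is contained in $B(\gamma \tilde s_0, \delta)$, since every point of $D_{\tilde s_0}$ is within $\delta$ of its center. Hence all of these tiles lie inside $B(\tilde s_0, 5\delta)$. The area of a hyperbolic disk of radius $r$ is $2\pi(\cosh r - 1) \le 2\pi \cosh r$. Comparing areas yields
\[ |\Gamma_4(\tilde s_0)| \cdot 4\pi(g-1) \le 2\pi \cosh(5\delta), \]
which is the stated bound. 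The bound depends only on $\delta$ and $g$. It is independent of $\tilde s_0$ because conjugating $\tilde s_0$ by an element of $\Gamma$, or moving it, does not change these two quantities; that is the meaning of the phrase ``since $\Gamma$ acts by isometries''.

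\textbf{Main obstacle.} The delicate step is the transfer from $\tilde s$ to $\tilde s_0$. It needs $d(\tilde s, \tilde s_0) \le \delta$, which uses the fact that points of a Dirichlet domain realize surface distances to its center; that holds by the definition of $D_{\tilde s_0}$. A second point needing care is the degenerate case where $\tilde v$ lies on more than two bisectors. There I simply pick any two of them, so nothing changes. The area count is routine once the containment in the radius-$5\delta$ ball is established.
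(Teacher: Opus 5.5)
Your proposal is correct and follows the paper's proof essentially step for step. It uses the same chain $d(\tilde s_0,\gamma_i\tilde s_0)\le \delta+2\delta+\delta$ through $\tilde s$ and $\gamma_i\tilde s$, and the same packing argument, with your explicit justification that each tile $\gamma D_{\tilde s_0}$ lies in $B(\gamma\tilde s_0,\delta)$ making precise the containment in the radius-$5\delta$ ball that the paper asserts without comment.
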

\begin{proof}
    Let $\tilde{s} \in D_{\tilde{s}_0}$ and let $v$ be a vertex of $D_{\tilde{s}}$ achieving $F(s)$, given as the intersection of the bisectors between $\tilde{s}$ and the orbit images $\gamma_1 \tilde{s}$, $\gamma_2 \tilde{s}$, where $\gamma_1, \gamma_2 \in \Gamma$. We bound $d(\tilde{s}_0, \gamma_i \tilde{s}_0)$ for $i \in \{1,2\}$ to show $\gamma_i \in \Gamma_{4}(\tilde{s}_0)$. By the triangle inequality:
    \[
    d(\tilde{s}_0, \gamma_i \tilde{s}_0) \leq d(\tilde{s}_0, \tilde{s}) + d(\tilde{s}, \gamma_i \tilde{s}) + d(\gamma_i \tilde{s}, \gamma_i \tilde{s}_0) \leq \text{diam}(S) + 2\,\text{diam}(S) + \text{diam}(S) = 4\,\text{diam}(S),
    \]
    where $d(\tilde{s}_0, \tilde{s}) \leq \text{diam}(S)$ since $\tilde{s} \in D_{\tilde{s}_0}$, $d(\tilde{s}, \gamma_i \tilde{s}) \leq 2\,\text{diam}(S)$ since $v$ is equidistant from $\tilde{s}$ and $\gamma_i \tilde{s}$ and $d(\tilde{s}, v) \leq \text{diam}(S)$, and $d(\gamma_i \tilde{s}, \gamma_i \tilde{s}_0) \leq \text{diam}(S)$ since $\gamma_i$ is an isometry and $\tilde{s} \in D_{\tilde{s}_0}$. Hence $\gamma_i \in \Gamma_{4}(\tilde{s}_0)$. 

    The bound on $|\Gamma_{4}(\tilde{s}_0)|$ follows from a packing argument: the orbit points $\{\gamma \tilde{s}_0\}_{\gamma \in \Gamma_{4}}$ lie in a ball of radius $4\,\text{diam}(S)$, each with a disjoint Dirichlet cell of area $4\pi(g-1)$. Since cells may only be partially contained in the ball, we use a ball of radius $5\,\text{diam}(S)$ to ensure full containment, giving:
    \[
    |\Gamma_{4}(\tilde{s}_0)| \leq \frac{2\pi(\cosh(5\,\text{diam}(S)) - 1)}{4\pi(g-1)} 
    \leq \frac{\cosh(5\,\text{diam}(S))}{2(g-1)}.
    \qedhere\]
\end{proof}

\subsection{The Wall Arrangement}

Fix $s_0 \in S$ with lift $\tilde{s}_0$. As the basepoint $\tilde{s}$ varies over $D_{\tilde{s}_0}$, the combinatorial structure of the Voronoi diagram of the orbit $\Gamma \tilde{s}$ changes. A combinatorial transition occurs exactly when four orbit points become cocyclic, that is, when a Voronoi vertex switches adjacency. We call the locus of such transitions a \emph{wall}.

\begin{definition}[Wall]
For three distinct non-identity isometries $\gamma_1, \gamma_2, \gamma_3 \in \Gamma_{4}$, the \emph{wall} $W_{\gamma_1, \gamma_2, \gamma_3}$ is the set of basepoints $\tilde{s} \in \mathbb{H}^2$ such that there exists $v \in \mathbb{H}^2$ with
\[
d(v, \tilde{s}) = d(v, \gamma_1 \tilde{s}) = d(v, \gamma_2 \tilde{s}) = d(v, \gamma_3 \tilde{s}) \quad \text{and} \quad d(v, \tilde{s}) \leq d(v, \gamma \tilde{s}) \text{ for all } \gamma \in \Gamma.
\]
\end{definition}

\begin{lemma}[Algebraic degree of walls]\label{lem:wall-degree}
    Each wall $W_{\gamma_1, \gamma_2, \gamma_3}$ is a real algebraic curve of degree at most $8$ in the coordinates $\tilde{s}=(x_1, x_2)$ where $\tilde{s} = x_1 + i x_2$, with coefficients in $\mathbb{K}$.
\end{lemma}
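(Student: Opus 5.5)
The plan is to write the wall as the projection of an algebraic set and eliminate the auxiliary point $v$. The minimality condition $d(v,\tilde s)\le d(v,\gamma\tilde s)$ only cuts out a semialgebraic piece, so the claim is read as: $W_{\gamma_1,\gamma_2,\gamma_3}$ lies in a real algebraic curve of degree at most $8$ with coefficients in $\mathbb{K}$. The natural candidate is the locus where the four points $\tilde s,\gamma_1\tilde s,\gamma_2\tilde s,\gamma_3\tilde s$ are cocyclic. A point $v$ equidistant from all four exists exactly when they lie on a common hyperbolic circle, and hyperbolic circles in the Poincar\'e disk are Euclidean circles contained in the disk.

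\textbf{Coordinates.} Each $\gamma_j$ is kept as an unnormalized matrix $\begin{pmatrix} u_j & v_j\\ \bar v_j & \bar u_j\end{pmatrix}$ with real and imaginary parts in $\mathbb{K}$. Then $\gamma_j\tilde s = (u_j z+v_j)/(\bar v_j z+\bar u_j)$ with $z=x_1+ix_2$. Its real and imaginary parts are quotients $P_j/Q_j$ and $R_j/Q_j$, where $Q_j=|\bar v_j z+\bar u_j|^2$ is a real quadratic in $(x_1,x_2)$, and $P_j,R_j$ are real quadratics obtained by multiplying through by the conjugate of the denominator. All coefficients lie in $\mathbb{K}$.

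\textbf{Cocyclicity.} Four points $w_0,\dots,w_3$ in $\mathbb{C}$ are concyclic (or collinear) iff the $4\times 4$ determinant with rows $(|w_k|^2,\operatorname{Re}w_k,\operatorname{Im}w_k,1)$ vanishes. Equivalently, their cross-ratio is real. Row $0$ has entries $(x_1^2+x_2^2,x_1,x_2,1)$, of degree $\le 2$. Row $j$ is $\frac{1}{Q_j}\bigl(|u_jz+v_j|^2,\ P_j,\ R_j,\ Q_j\bigr)$, because $|\gamma_j z|^2 = |u_jz+v_j|^2/Q_j$, and every entry of the scaled row is a real quadratic in $(x_1,x_2)$ over $\mathbb{K}$. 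Multiplying row $j$ by $Q_j$ (nonzero on $\HY^2$) clears denominators. The determinant becomes a polynomial $\Phi(x_1,x_2)$ with coefficients in $\mathbb{K}$, and $W_{\gamma_1,\gamma_2,\gamma_3}\subseteq\{\Phi=0\}$.

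\textbf{Degree bound.} This is the step I expect to be the main obstacle. A naive count gives $4\cdot 2=8$ directly: each of the four rows has entries of degree $\le 2$, and the determinant is a sum of products of one entry per row. So $\deg\Phi\le 8$, and no finer cancellation argument is needed.

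The care is needed elsewhere. First, I have to check that $\Phi$ is not identically zero; otherwise the zero set would not be a curve. For this I would exhibit one basepoint where the four orbit points are not concyclic; generic position holds since the $\gamma_j$ are distinct and non-elliptic. Second, I must justify that a Euclidean circle through four points of the disk is a hyperbolic circle with a center $v\in\HY^2$. This holds because the circle passes through interior points; if it were tangent to or crossed the boundary it would be a horocycle or hypercycle, and those cannot contain four equidistant-from-$v$ points. If such degenerate cases appear, I would simply note that they are still contained in $\{\Phi=0\}$, which suffices for the containment claim.
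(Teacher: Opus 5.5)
Your degree count is correct and is essentially the paper's argument. The paper writes cocyclicity as $\operatorname{Im}$ of the cross-ratio of $\tilde s,\gamma_1\tilde s,\gamma_2\tilde s,\gamma_3\tilde s$ vanishing. It observes that the denominators $\Delta_j$ of the $\gamma_j$ cancel, so the cross-ratio is $P_4/Q_4$ with $P_4,Q_4$ quartic in $\tilde s$, and then takes $P_4\overline{Q_4}-\overline{P_4}Q_4$. Your $\Phi$ is the same polynomial up to a constant factor. To see this, scale row $j$ by $|\Delta_j|^2$ and use the identity that $\operatorname{Im}\bigl[(w_0-w_2)(w_1-w_3)\overline{(w_0-w_3)(w_1-w_2)}\bigr]$ equals the $4\times 4$ determinant with rows $(|w_k|^2,\operatorname{Re}w_k,\operatorname{Im}w_k,1)$. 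Your row-by-row count makes the bound $8$ slightly more transparent. Reading the lemma as a containment is right; the paper's proof also silently drops the minimality condition.

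The step that fails is your non-degeneracy check. Distinct non-elliptic $\gamma_j$ do \emph{not} put the orbit points in general position. Suppose $\gamma_1,\gamma_2,\gamma_3$ share an axis, e.g.\ $\gamma,\gamma^2,\gamma^3$; such triples occur in $\Gamma_4$ for suitable lifts of a closed geodesic much shorter than $\operatorname{diam}(S)$. Then for every $\tilde s$ the four points lie on the hypercycle through $\tilde s$ equidistant from that axis, which is a Euclidean circle or line. Hence $\Phi\equiv 0$, and the basepoint you want to exhibit does not exist. The same example shows that ``a Euclidean circle through interior points is a hyperbolic circle'' is false, though you correctly note that this direction is not needed.

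The paper's proof does not address this either; the repair is a case split. In the common-axis case the wall is empty. A hyperbolic circle is a Euclidean circle inside the disk, so it meets that hypercycle in at most two points, while the four orbit points are distinct because $\Gamma$ acts freely. In every other case $\Phi\not\equiv 0$. Indeed, if $\Phi$ vanished identically, the cross-ratio (a rational function of $z$) would be real on the disk and hence a real constant. Its numerator and denominator would then have the same roots. These roots are the fixed points of $\gamma_2$ and $\gamma_3^{-1}\gamma_1$ for the numerator, and of $\gamma_3$ and $\gamma_2^{-1}\gamma_1$ for the denominator. In a torsion-free cocompact Fuchsian group, nontrivial elements sharing a fixed point share their axis. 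This forces $\gamma_1,\gamma_2,\gamma_3$ onto a common axis, which is the case already handled.
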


\begin{proof}
    Since hyperbolic balls are Euclidean balls in the Poincar\'e disk model, four points are cocyclic if and only if their cross-ratio is real in the complex plane. Hence the walls are defined as: 
    \begin{equation}\label{eq:wall-curve}
        \operatorname{Im}\left( \frac{(\tilde{s} - \gamma_2(\tilde{s}))(\gamma_1(\tilde{s}) - 
        \gamma_3(\tilde{s}))}{(\tilde{s} - \gamma_3(\tilde{s}))(\gamma_1(\tilde{s}) - \gamma_2(\tilde{s}))} 
        \right) = 0.
    \end{equation}
    Let $\Delta_j(\tilde{s}) = \bar{b}_j \tilde{s} + \bar{a}_j$ be the denominator of 
    $\gamma_j(\tilde{s}) = \frac{a_j \tilde{s} + b_j}{\bar{b}_j \tilde{s} + \bar{a}_j}$. 
    We make two simplifications:
    \[
        \tilde{s} - \gamma_j(\tilde{s}) = \frac{\bar{b}_j \tilde{s}^2 + (\bar{a}_j - a_j)\tilde{s} - 
        b_j}{\Delta_j(\tilde{s})} = \frac{M_j(\tilde{s})}{\Delta_j(\tilde{s})},
    \]
    where $M_j(\tilde{s})$ is degree $2$ in $\tilde{s}$. Second,
    \[
        \gamma_j(\tilde{s}) - \gamma_k(\tilde{s}) = \frac{(a_j\bar{b}_k - a_k\bar{b}_j)\tilde{s}^2 + 
(a_j\bar{a}_k + b_j\bar{b}_k - a_k\bar{a}_j - b_k\bar{b}_j)\tilde{s} + 
(b_j\bar{a}_k - b_k\bar{a}_j)}{\Delta_j(\tilde{s})
        \Delta_k(\tilde{s})}=\frac{M_{j,k}(\tilde{s})}{\Delta_j(\tilde{s})
        \Delta_k(\tilde{s})},
    \]
    where $M_{j,k}(\tilde{s})$ is also degree $2$ in $\tilde{s}$. Substituting into 
    \eqref{eq:wall-curve}, the denominators $\Delta_1(\tilde{s})\Delta_2(\tilde{s})
    \Delta_3(\tilde{s})$ cancel, giving us:
    \[
        \operatorname{Im}\left(\frac{M_2(\tilde{s}) M_{1,3}(\tilde{s})}{M_3(\tilde{s}) 
        M_{1,2}(\tilde{s})}\right) = 0.
    \]
    Denoting the numerator and denominator as $P_4(\tilde{s})$ and $Q_4(\tilde{s})$ 
    respectively, both quartic in $\tilde{s}$ because they are products of quadratics. The condition 
    $\operatorname{Im}(P_4/Q_4) = 0$ is equivalent to $\frac{P_4(\tilde{s})/Q_4(\tilde{s})-\overline{P_4(\tilde{s})/Q_4(\tilde{s})}}{2i}=0$ which is equivalent to $P_4(\tilde{s})\overline{Q_4({\tilde{s}})} - \overline{P_4({\tilde{s}})}Q_4(\tilde{s}) = 0$, 
    which is a polynomial of degree $8$ in $\tilde{s}=(x_1, x_2)$. Note that the coefficients of $M_j$ and $M_{j,k}$, $P_4$, and $Q_4$, are polynomials in $a_j, b_j$ and their conjugates, this polynomial thus has coefficients in $\mathbb{K}$.
\end{proof}

\begin{lemma}[Arrangement complexity]\label{lem:arrangement}
    Let $N = |\Gamma_{4}|$ denote the number of relevant isometries. The walls $\{W_{\gamma_1, \gamma_2, \gamma_3}\}$ over all triples from $\Gamma_{4}$ form a planar arrangement with:
    \begin{enumerate}
        \item At most $\binom{N}{3} = O(N^3)$ curves,
        \item At most $O(N^6)$ intersection points, 
        \item $O(N^6)$ cells.
        \item Cells that are semi-algebraic sets defined over $\mathbb{K}$.
    \end{enumerate}
\end{lemma}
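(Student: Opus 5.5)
The plan is to treat the four claims in order, deriving each from Lemma~\ref{lem:wall-degree} together with standard bounds for arrangements of bounded-degree plane algebraic curves.

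\textbf{Counting curves.} Each wall is indexed by an unordered triple of distinct non-identity isometries of $\Gamma_4$. The cross-ratio condition \eqref{eq:wall-curve} changes only by real Möbius transformations of the cross-ratio when the roles of $\gamma_1,\gamma_2,\gamma_3$ are permuted, so its reality is symmetric. Hence there are at most $\binom{N}{3}=O(N^3)$ distinct walls. Claim~(1) is then immediate.

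\textbf{Counting intersection points.} Each wall lies in the zero set of a nonzero polynomial of degree at most $8$ (Lemma~\ref{lem:wall-degree}). The first step is to discard the degenerate case where two walls share a common component. Such shared components are merged, so the arrangement is formed by the distinct irreducible components; there are at most $8$ per wall, so still $O(N^3)$ curves. Two curves without common component meet in at most $8\cdot 8=64$ points by Bézout's theorem. Summing over $O(N^3)^2$ pairs gives $O(N^6)$ intersection points.

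One subtlety concerns the extra side condition in the definition of a wall, namely $d(v,\tilde s)\le d(v,\gamma\tilde s)$ for all $\gamma$. Because of it, the wall is a semi-algebraic subset of the algebraic curve. I would bound the arrangement of the full Zariski-closed curves, which only over-counts. Restricting to the half-planes/inequalities coming from finitely many $\gamma\in\Gamma_4$ adds $O(N)$ extra bisector-type curves of bounded degree. This does not change the order of magnitude.

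\textbf{Counting cells.} For claim~(3) I would use the Euler-characteristic bound for planar arrangements. Each curve of degree $8$ has $O(1)$ singular points and $O(1)$ points of vertical tangency, by Bézout between $f$ and $\partial f/\partial x_2$, and $O(1)$ connected components, by Harnack's bound. Adding these as vertices gives a planar graph with $V=O(N^6)$ vertices and $E=O(N^6)$ edges, since each curve is cut into $O(\text{number of points on it})$ arcs. Euler's formula then yields $F=O(N^6)$ faces. Intersecting with the compact polygon $D_{\tilde s_0}$ adds at most $O(N^3)$ more pieces, since $D_{\tilde s_0}$ has $O(g)$ geodesic sides of degree~$2$ in the disk model.

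\textbf{Semi-algebraicity.} For claim~(4), each cell is a connected component of a sign-condition set $\{\operatorname{sign} f_i = \sigma_i\}$, where the $f_i\in\mathbb{K}[x_1,x_2]$ are the wall polynomials. By standard results on cylindrical algebraic decomposition over a real field, the connected components of such sets are semi-algebraic and defined over the real closure of $\mathbb{K}$. Since the cells can be described by a CAD built from polynomials with coefficients in $\mathbb{K}$, they are semi-algebraic sets defined over $\mathbb{K}$.

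\textbf{Main obstacle.} I expect the main obstacle to be the cell count. The fix is controlling arcs that carry no intersection points, such as closed ovals and singular points. Harnack's bound and the vertical-tangency bound handle this, each contributing only $O(1)$ per curve.
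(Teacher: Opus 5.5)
Your proposal is correct and follows the same route as the paper. Both count the $\binom{N}{3}$ triples, use B\'ezout's bound of $64$ points per pair of degree-$8$ wall curves to get $O(N^6)$ intersection points and hence $O(N^6)$ cells, and deduce semi-algebraicity over $\mathbb{K}$ from the wall polynomials having coefficients in $\mathbb{K}$; your extra steps (distinct irreducible components so B\'ezout applies, the Euler-formula count with singular and vertical-tangency points to pass from vertices to cells, and the CAD argument) only make explicit what the paper's three-line proof takes for granted.
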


\begin{proof}
There are $\binom{N}{3}$ triples from $\Gamma_{4}$, giving $O(N^3)$ walls. Since two algebraic curves of degree $8$ intersect in at most $64$ points, the total number of intersection points is a constant multiple of the amount of pairs, $O(N^6),$ and so the number of cells in the arrangement is $O(N^6)$. Since the wall polynomials have coefficients in $\mathbb{K}$ by Lemma~\ref{lem:wall-degree}, each cell of the arrangement is a semi-algebraic set defined over $\mathbb{K}$.
\end{proof}

In each cell $\mathcal{C}$ of this arrangement the combinatorial structure of the Voronoi diagram of $\Gamma \tilde{s}$ is constant, for any $\tilde{s}$ in $\mathcal{C}$. In particular, the set of isometries of $\Gamma$ defining the vertices and edges of the Voronoi diagram of $\Gamma \tilde s$ is invariant, let these be $\Gamma_{\mathcal{C}}$. We denote $v_{\gamma_1, \gamma_2}(\tilde{s})$ to be the Voronoi vertex defined by the orbit points $\tilde{s}$, $\gamma_1(\tilde{s})$, and $\gamma_2(\tilde{s})$, i.e. the center of the hyperbolic circle with points $\tilde{s}$, $\gamma_1(\tilde{s})$, and $\gamma_2(\tilde{s})$ on its boundary. It is now sufficient to study the function $\max_{\gamma_1,\gamma_2\in\Gamma_{\mathcal{C}}}d(\tilde{s},v_{\gamma_1,\gamma_2}(\tilde{s}))$ over the wall arrangement.

\subsection{Semi-Algebraic Optimization}

\begin{lemma}
\label{lem:functionToMaxmizise}
The value $\displaystyle\max_{\tilde s \in {\mathcal{C}},\, \gamma_1, \gamma_2 \in \Gamma_{\mathcal{C}}} \cosh{d(\tilde s, v_{\gamma_1, \gamma_2}(\tilde s))}$ for a cell ${\mathcal{C}}$ of the wall arrangement is algebraic over $\mathbb{K}$.
\end{lemma}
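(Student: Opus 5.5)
We first make the objective an explicit rational function of the coordinates. Fix a cell $\mathcal{C}$ and a pair $\gamma_1,\gamma_2\in\Gamma_{\mathcal{C}}$, and write $\tilde s=x_1+ix_2$. The Voronoi vertex $v=v_{\gamma_1,\gamma_2}(\tilde s)$ is the hyperbolic circumcenter of $\tilde s,\gamma_1\tilde s,\gamma_2\tilde s$. Rather than solving for it in closed form, we treat $v=y_1+iy_2$ as auxiliary unknowns subject to the two equidistance equations $\cosh d(v,\tilde s)=\cosh d(v,\gamma_1\tilde s)=\cosh d(v,\gamma_2\tilde s)$.

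By the Poincar\'e disk formula, $\cosh d(z,w)=1+2|z-w|^2/((1-|z|^2)(1-|w|^2))$ is rational in the real coordinates with rational coefficients. The points $\gamma_j\tilde s$ are rational functions of $(x_1,x_2)$ with coefficients in $\mathbb{K}$, since the matrices are not normalized and the real and imaginary parts of $a_j,b_j$ lie in $\mathbb{K}$. Clearing denominators, the equidistance conditions become two polynomial equations $E_1=E_2=0$ in $(x_1,x_2,y_1,y_2)$ over $\mathbb{K}$. We introduce a further variable $t$ with $t\cdot(1-|v|^2)(1-|\tilde s|^2)=(1-|v|^2)(1-|\tilde s|^2)+2|v-\tilde s|^2$, so that $t=\cosh d(\tilde s,v)$ is a polynomial condition over $\mathbb{K}$. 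The constraint $\tilde s\in\overline{\mathcal{C}}$ is a semi-algebraic condition over $\mathbb{K}$ by Lemma~\ref{lem:arrangement}. Since the sup over the open cell equals the max over its closure, and a vertex at a wall is still a limit of the same circumcenter formula, we work with $\overline{\mathcal{C}}$, which is compact as it lies inside $D_{\tilde s_0}$.

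The value sought is then the maximum of the coordinate $t$ over the compact semi-algebraic set
\[
Z_{\gamma_1,\gamma_2}=\{(x,y,t): \tilde s\in\overline{\mathcal{C}},\ E_1=E_2=0,\ t\ \text{as above}\}.
\]
This set is defined over $\mathbb{K}$, or more precisely over the real closure of $\mathbb{K}$, which is an ordered subfield of $\RE$. We take a finite maximum over the pairs $(\gamma_1,\gamma_2)$.

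The key step is to show that the maximum of a polynomial or projection over a compact semi-algebraic set defined over $\mathbb{K}$ is algebraic over $\mathbb{K}$. We use Tarski--Seidenberg quantifier elimination over the real closure $R$ of $\mathbb{K}$. The set $\{t^*: \exists (x,y,t)\in Z,\ t=t^* \wedge \forall (x',y',t')\in Z,\ t'\le t^*\}$ is defined by a first-order formula with parameters in $\mathbb{K}$. It is therefore a semi-algebraic subset of $\RE$ defined over $R$, hence a finite union of points and intervals with endpoints in $R$. It is a single point, since compactness gives existence and the maximum is unique. Thus $t^*\in R$, i.e.\ $t^*$ is algebraic over $\mathbb{K}$.

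Alternatively, for a constructive version matching the ``polynomial systems'' claim of the theorem, we stratify $\overline{\mathcal{C}}$ into its interior, wall arcs and vertices. On each stratum we write Lagrange/KKT critical point equations for $t$ subject to $E_1,E_2$ and the active wall polynomials. The maximum is attained at one of finitely many critical points, each a solution of a zero-dimensional system over $\mathbb{K}$, hence algebraic. The main obstacle is this finiteness/genericity: critical loci could be positive-dimensional, for instance when $t$ is constant along a curve, or the equidistance system could have extraneous components from clearing denominators, such as $\Delta_j=0$ or $v$ outside the disk. I would sidestep this by relying on the quantifier-elimination argument for algebraicity. I would use the critical-point systems only for the computational statement, noting that a maximum value on a positive-dimensional critical component is still a constant value attained at some point obtainable as a zero-dimensional limit (e.g.\ via a generic extra linear constraint).
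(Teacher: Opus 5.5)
Your proposal is correct and follows the paper's strategy: describe the objective and the cell semi-algebraically over $\mathbb{K}$, then conclude that the optimum is algebraic. The differences are that you encode $v_{\gamma_1,\gamma_2}(\tilde s)$ implicitly through the equidistance equations rather than solving the paper's quadratic over $\mathbb{K}(\tilde s)$, and that you justify the final step by Tarski--Seidenberg over the real closure of $\mathbb{K}$, which the paper only asserts ("the optimization problem is semi-algebraic"). One small correction: $Z$ need not be compact, because the cleared equidistance equations (circles orthogonal to the unit circle) are also satisfied by the inverted point $1/\bar v$, which escapes to infinity as $v\to 0$; on that branch $t<1$, so intersecting with $\{t\ge 1\}$ removes it without changing the maximum, and your quantifier-elimination argument then goes through unchanged.
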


\begin{proof}
    Since
    $\operatorname{arccosh}$ is monotone, maximizing $d(\tilde{s}, v_{\gamma_1,\gamma_2}(\tilde{s}))$ is equivalent to maximizing \[\frac{2|\tilde{s} - v_{\gamma_1,\gamma_2}(\tilde{s})|^2}{(1-|\tilde{s}|^2)(1-|v_{\gamma_1,\gamma_2}(\tilde{s})|^2)}.\] Each $\gamma_i(\tilde{s})$ is a rational function of $\tilde{s}$ with coefficients in $\mathbb{K}$. Since hyperbolic circles are Euclidean circles in the Poincar\'e disk, the circle through $\tilde{s}$, $\gamma_1(\tilde{s})$, and $\gamma_2(\tilde{s})$ has Euclidean center and radius rational in these points, and its hyperbolic center is obtained by solving a quadratic over $\mathbb{K}(\tilde{s})$. Hence $v_{\gamma_1, \gamma_2}(\tilde{s})$ is algebraic of degree at most $2$ over $\mathbb{K}(\tilde{s})$, so the objective is algebraic over $\mathbb{K}$. Together with Lemma~\ref{lem:arrangement}, which gives that the cell is semi-algebraic over $\mathbb{K}$, the optimization problem is semi-algebraic over $\mathbb{K}$.
\end{proof}

\subsection{Algorithm and Runtime}

\begin{theorem}[Exact diameter computation]\label{thm:exact-diameter}
    The diameter of a compact hyperbolic surface $S$ of genus $g \geq 2$ is exactly computable in time $O(N^6 (\log N + g^2))$ where $N = |\Gamma_{4}| \leq \frac{\cosh(5\,\mathrm{diam}(S))}{2(g-1)}$.
    In particular, the runtime is $O\!\left(\frac{e^{30\,\mathrm{diam}(S)}}{(g-1)^6}\cdot \mathrm{diam}(S)\right)$.
\end{theorem}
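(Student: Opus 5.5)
The proof assembles the lemmas of this section into an explicit procedure and then counts operations. The reduction is
\[
\operatorname{diam}(S)=\max_{s\in S}F(s).
\]
By Lemma~\ref{lem:farthest-vertex}, $F(s)$ is realized at a vertex of $D_{\tilde s}$. By Lemma~\ref{lem:relevant-isometries}, for every lift $\tilde s\in D_{\tilde s_0}$ that vertex is $v_{\gamma_1,\gamma_2}(\tilde s)$ with $\gamma_1,\gamma_2\in\Gamma_4(\tilde s_0)$. Hence
\[
\cosh\operatorname{diam}(S)=\max_{\mathcal C}\ \max_{\tilde s\in \mathcal C\cap D_{\tilde s_0},\ \gamma_1,\gamma_2\in\Gamma_{\mathcal C}}\cosh d\bigl(\tilde s,v_{\gamma_1,\gamma_2}(\tilde s)\bigr),
\]
where $\mathcal C$ ranges over the cells of the wall arrangement of Lemma~\ref{lem:arrangement} (intersected with $D_{\tilde s_0}$, itself a semi-algebraic polygon over $\mathbb K$). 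In each cell, Lemma~\ref{lem:functionToMaxmizise} turns the inner maximum into a semi-algebraic optimization over $\mathbb K$. Its value is attained either at an interior critical point, where the Lagrange/gradient system is polynomial, or on a wall, or at a wall-intersection vertex. All three are solutions of polynomial systems with coefficients in $\mathbb K$, so the maximum lies in a finite extension of $\mathbb K$, and taking the maximum over finitely many cells preserves this. This gives exact computability, provided $\Gamma_4$ itself can be found.

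The algorithm has four steps. First, enumerate $\Gamma_4(\tilde s_0)$. I would compute a Dirichlet domain, bound $\operatorname{diam}(S)$ from above (for instance, the circumradius of $D_{\tilde s_0}$ gives $\operatorname{diam}(S)\le$ that radius, or use the approximation of Theorem~\ref{thm:approx-correctness}), and then grow the orbit by a breadth-first search over words in the side-pairings, keeping those with $d(\tilde s_0,\gamma\tilde s_0)$ under the bound. This yields a superset of size $O(N)$ up to constants. Second, build the arrangement of the $O(N^3)$ degree-$8$ walls. A sweep over the $O(N^6)$ intersection points costs $O(N^6\log N)$, using constant-degree algebraic primitives. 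Third, in each of the $O(N^6)$ cells, pick a sample point, compute the Voronoi structure $\Gamma_{\mathcal C}$ there, and solve the constant-size polynomial systems. By Lemma~\ref{lem:relevant-isometries} only the $O(g)$ vertices of $D_{\tilde s}$ matter, so computing $D_{\tilde s}$ at the sample point by Theorem~\ref{thm:recenter} costs $O(g^2\log g)$, or $O(g^2)$ via the sequence tree. Fourth, take the maximum.

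Summing gives $O(N^6(\log N+g^2))$. Substituting $N\le \cosh(5\operatorname{diam}(S))/(2(g-1))\le e^{5\operatorname{diam}(S)}/(g-1)$ gives $N^6\le e^{30\operatorname{diam}(S)}/(g-1)^6$. Also $\log N=O(\operatorname{diam}(S))$. The factor $g^2$ must then be absorbed: since $\operatorname{diam}(S)\gtrsim\log g$ by area considerations, one can write $g^2\le e^{c\operatorname{diam}(S)}$, and I would fold it into the stated bound, though this constant-juggling needs care.

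The main obstacle is making step three rigorous. On each cell one must certify that the combinatorics, and hence $\Gamma_{\mathcal C}$, is truly constant. This requires the walls to capture all transitions, including those of the boundary of $D_{\tilde s_0}$ and degenerate coincidences. One must also bound the number and degree of the polynomial systems uniformly, so that each system is solved in constant time in the real-RAM-with-algebraic-numbers model. I would handle this by treating each cell's closure as a union of constant-complexity semi-algebraic pieces and invoking standard critical-point enumeration, for example cylindrical decomposition in two variables with constant degree.
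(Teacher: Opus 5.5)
Your skeleton is the paper's: Lemma~\ref{lem:farthest-vertex}, restriction to $\Gamma_4$, the wall arrangement of Lemma~\ref{lem:arrangement}, a semi-algebraic optimization per cell, and a maximum over cells. However, your step three does not give the bound you sum to. Recomputing $D_{\tilde s}$ at a sample point of each of the $O(N^6)$ cells via Theorem~\ref{thm:recenter} costs $O(g^2\log g)$ per cell, so the total is $O(N^6(\log N+g^2\log g))$.

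Your fallback, ``$O(g^2)$ via the sequence tree'', is not supported by anything in the paper. Algorithm~\ref{alg:sd} only gives distances to destinations marked in advance. The only route in the paper from the wavefront to the vertices of $D_{\tilde s}$ is the per-triangle Vorono\"i construction of Section~\ref{sec:queryDS}, which costs $O(g^2\log g)$.

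The missing idea is the one the paper's proof relies on: do not recompute. Store with each wall $W_{\gamma_1,\gamma_2,\gamma_3}$ its defining triple, and visit the cells by a sweep so that consecutive cells are separated by a wall crossing. At each crossing, update the current set of vertex-defining pairs $(\gamma_i,\gamma_j)$ by a constant number of insertions and deletions in $O(\log N)$ time. The Dirichlet domain is then computed only once. The only $g$-dependent work per cell is solving the $O(g^2)$ constant-size systems, and the total becomes $O(N^6(\log N+g^2))$.

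Two smaller points. First, your diameter bound is reversed. The circumradius $R$ of $D_{\tilde s_0}$ about $\tilde s_0$ equals $F(s_0)$ by Lemma~\ref{lem:farthest-vertex}, so $R\le\operatorname{diam}(S)\le 2R$. Using $2R$ pushes the search radius up to $8\operatorname{diam}(S)$. The enumerated superset is then no longer guaranteed to be $O(N)$, since the packing bound becomes $\cosh(9\operatorname{diam}(S))/(2(g-1))$. Only an upper bound of the form $\operatorname{diam}(S)+O(1)$ keeps step one within budget; $\Lambda+\eps$ from Theorem~\ref{thm:approx-correctness} works, but it needs a pseudo-net as extra input. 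The paper's proof is silent on how $\Gamma_4$ is obtained, even though $\Gamma_4$ is defined through the unknown diameter, so this step is yours to get right.

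Second, your doubt about the ``in particular'' clause is justified, but your fix does not reach it. The inequality $g^2\le e^{2\operatorname{diam}(S)}$, from $2\pi(\cosh\operatorname{diam}(S)-1)\ge 4\pi(g-1)$, only yields $O(e^{32\operatorname{diam}(S)}/(g-1)^6)$. The paper does not close this either. Its stated form comes from the $N^6\log N$ term alone, using $N^6\le e^{30\operatorname{diam}(S)}/(g-1)^6$ and $\log N\le 5\operatorname{diam}(S)$, and it silently drops the $N^6g^2$ term. Since $\operatorname{diam}(S)$ can be of order $\log g$, what the argument actually establishes is $O\bigl(e^{30\operatorname{diam}(S)}(\operatorname{diam}(S)+g^2)/(g-1)^6\bigr)$.
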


\begin{proof}
    By Lemma~\ref{lem:farthest-vertex}, the diameter is realized at a vertex of $D_{\tilde{x}}$ for some $x \in S$. By Lemma~\ref{lem:relevant-isometries}, only isometries in $\Gamma_{4}$ are relevant. We construct the wall arrangement of Lemma~\ref{lem:arrangement}, giving $O(N^6)$ cells such that within each cell the combinatorial structure of the Voronoi diagram is constant. At each wall $W_{\gamma_1, \gamma_2, \gamma_3}$ we store its defining triple $(\gamma_1, \gamma_2, \gamma_3)$, the isometries whose orbit points $\tilde{s}, \gamma_1\tilde{s}, \gamma_2\tilde{s}, \gamma_3\tilde{s}$ are cocyclic along that wall. Then we sweep through the cells, maintaining the current set of Voronoi vertices by inserting and removing pairs $(\gamma_1, \gamma_2)$ at each wall crossing in $O(\log N)$ time. Within each cell, we optimize the semi-algebraic function of Lemma~\ref{lem:functionToMaxmizise} over the current set of vertices. For each cell this is $O(g^2)$ algebraic systems to solve that have constant complexity (number of variables and degree), so all together $O(N^6 (\log N + g^2))$. The global maximum over all cells gives $\mathrm{diam}(S)$ exactly.
\end{proof}

\begin{theorem}[Algebraicity of the diameter]\label{thm:algebraic-diameter}
Let $S$ be a compact hyperbolic surface of genus $g \geq 2$ given by generators of $\Gamma$. Let $\mathbb{K}$ be the rational extension containing the real and imaginary parts of the coefficients of the generators of $\Gamma$. Then $\cosh(\operatorname{diam}(S))$ is algebraic over $\mathbb{K}$.
\end{theorem}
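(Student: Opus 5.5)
We prove the statement by reducing $\cosh(\operatorname{diam}(S))$ to the optimum of finitely many polynomial optimization problems over $\mathbb{K}$, and then invoking real-algebraic geometry. By Lemma~\ref{lem:farthest-vertex} and Lemma~\ref{lem:relevant-isometries}, we have
\[
\cosh(\operatorname{diam}(S)) = \max_{\mathcal{C}} \; \max_{\tilde s\in \overline{\mathcal{C}},\ \gamma_1,\gamma_2\in\Gamma_{\mathcal{C}}} \cosh d\bigl(\tilde s, v_{\gamma_1,\gamma_2}(\tilde s)\bigr),
\]
where $\mathcal{C}$ ranges over the finitely many cells of the wall arrangement of Lemma~\ref{lem:arrangement}, intersected with $D_{\tilde s_0}$. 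We take closures so that each maximum is attained, using compactness of $D_{\tilde s_0}$ and continuity of the objective on each closed cell. Since a finite maximum of algebraic numbers is one of them, it suffices to show that each inner maximum is algebraic over $\mathbb{K}$. This is the content of Lemma~\ref{lem:functionToMaxmizise}, which we will sharpen into an explicit argument.

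For a fixed cell and a fixed pair $(\gamma_1,\gamma_2)$, introduce real variables $x_1,x_2$ for $\tilde s$ and $y_1,y_2$ for $v$, together with a variable $t$ for the objective. The constraints are:
\begin{itemize}
\item the polynomial sign conditions defining $\overline{\mathcal{C}}$, which have coefficients in $\mathbb{K}$ by Lemma~\ref{lem:wall-degree};
\item the equidistance equations $\cosh d(v,\tilde s)=\cosh d(v,\gamma_1\tilde s)=\cosh d(v,\gamma_2\tilde s)$, which become polynomial with coefficients in $\mathbb{K}$ after clearing the denominators of $\gamma_i(\tilde s)$ and of $1-|\cdot|^2$;
\item the inequalities $|\tilde s|<1$ and $|v|<1$;
\item the equation $t\,(1-|\tilde s|^2)(1-|v|^2)=(1-|\tilde s|^2)(1-|v|^2)+2|\tilde s-v|^2$.
\end{itemize}
The resulting set $T\subset\RE^5$ is semi-algebraic and defined over $\mathbb{K}$. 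Its projection onto the $t$-coordinate is, by the Tarski–Seidenberg theorem applied over the real closure of $\mathbb{K}$, a finite union of points and intervals whose endpoints are roots of polynomials with coefficients in $\mathbb{K}$. The supremum of this projection is attained, so it equals such an endpoint and is therefore algebraic over $\mathbb{K}$. Taking the maximum over cells and pairs gives the theorem.

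The main obstacle is making the quantifier elimination step rigorous over an arbitrary field $\mathbb{K}\subset\RE$, which need not be real closed. I would handle it as follows. Let $R$ be the real closure of $\mathbb{K}$ inside $\RE$, that is, the real algebraic numbers over $\mathbb{K}$. The formula "$t$ is the maximum of the projection of $T$" is first-order with parameters in $\mathbb{K}$, and by Tarski transfer its unique real solution already lies in $R$. Every element of $R$ is algebraic over $\mathbb{K}$.

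A secondary point is that the vertices of the arrangement and of the cells of $D_{\tilde s_0}$ must also be defined over $\mathbb{K}$. This holds because $\tilde s_0$ may be chosen in $\mathbb{K}$, say $\tilde s_0=0$, and $\partial D_{\tilde s_0}$ consists of bisectors of points $\gamma\tilde s_0$, whose equations have coefficients in $\mathbb{K}$. As an alternative to the transfer argument, one could use Lagrange/KKT conditions on each stratum, which yield a zero-dimensional polynomial system over $\mathbb{K}$ containing the maximizer. That route, however, requires genericity arguments, so I would use the transfer argument.
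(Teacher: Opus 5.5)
Your proposal is correct and follows the paper's proof: the same reduction, via Lemmas~\ref{lem:farthest-vertex} and~\ref{lem:relevant-isometries} and the wall arrangement, to finitely many semi-algebraic maximizations over $\mathbb{K}$, with two improvements on how the paper finishes. First, projecting onto the value coordinate $t$ and applying Tarski--Seidenberg/transfer is cleaner than the paper's unjustified assertion that the maximizer $\tilde s^\ast$ itself is algebraic, since maximizers need not be isolated. Second, you make explicit that choosing $\tilde s_0\in\mathbb{K}$ is what makes the cells cut by $D_{\tilde s_0}$ defined over $\mathbb{K}$. You should, however, state that the inner maximum ranges only over pairs $(\gamma_1,\gamma_2)$ for which $v_{\gamma_1,\gamma_2}(\tilde s)$ is a genuine vertex of $D_{\tilde s}$ on $\mathcal{C}$; otherwise, for an arbitrary pair from $\Gamma_{\mathcal{C}}$, the circumcenter can fail to exist or escape to $\partial\HY^2$, which breaks both your displayed equality and the attainment of the supremum. (Alternatively, add the inequalities $\cosh d(v,\tilde s)\le\cosh d(v,\gamma\tilde s)$ for all $\gamma\in\Gamma_4$ to $T$, which even makes the cell decomposition unnecessary.)
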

\begin{proof}
By Lemma~\ref{lem:farthest-vertex} and Lemma~\ref{lem:relevant-isometries}, $\operatorname{diam}(S)$ is realized as $d(\tilde s, v_{\gamma_1,\gamma_2}(\tilde s))$ for some cell of the wall arrangement, some $\tilde s$ in that cell, and some $\gamma_1, \gamma_2 \in \Gamma_4$. By Lemma~\ref{lem:arrangement} the cell is semi-algebraic over $\mathbb{K}$, and by Lemma~\ref{lem:functionToMaxmizise} the optimization over it is semi-algebraic over $\mathbb{K}$. Hence the optimal $\tilde s^\ast$ is algebraic over $\mathbb{K}$, and $v_{\gamma_1,\gamma_2}(\tilde s^\ast)$ is algebraic over $\mathbb{K}(\tilde s^\ast)$, so algebraic over $\mathbb{K}$. The objective
\[
\cosh(\operatorname{diam}(S)) = 1 + \frac{2|\tilde s^\ast - v(\tilde s^\ast)|^2}{(1-|\tilde s^\ast|^2)(1-|v(\tilde s^\ast)|^2)}
\]
is a rational expression in quantities algebraic over $\mathbb{K}$, hence algebraic over $\mathbb{K}$.
\end{proof}

\bibliography{shortcuts,hilbert}

\begin{thebibliography}{10}

\bibitem{abikoff1981uniformization}
William Abikoff.
\newblock The uniformization theorem.
\newblock {\em The American Mathematical Monthly}, 88(8):574--592, 1981.

\bibitem{aronov1991nonoverlap}
Boris Aronov and Joseph O'rourke.
\newblock Nonoverlap of the star unfolding.
\newblock In {\em Proceedings of the seventh annual symposium on Computational geometry}, pages 105--114, 1991.

\bibitem{bogdanov2014}
Mikhail Bogdanov, Olivier Devillers, and Monique Teillaud.
\newblock {Hyperbolic Delaunay Complexes and Voronoi Diagrams Made Practical}.
\newblock {\em {Journal of Computational Geometry}}, 5(1):56--85, 2014.
\newblock URL: \url{https://inria.hal.science/hal-00961390}, \href {https://doi.org/10.20382/jocg.v5i1a4} {\path{doi:10.20382/jocg.v5i1a4}}.

\bibitem{bogdanov2016delaunay}
Mikhail Bogdanov, Monique Teillaud, and Gert Vegter.
\newblock Delaunay triangulations on orientable surfaces of low genus.
\newblock In {\em 32nd International Symposium on Computational Geometry}, pages 20--1, 2016.

\bibitem{buserGeometrySpectraCompact2010}
Peter Buser.
\newblock {\em Geometry and Spectra of Compact {Riemann} Surfaces}.
\newblock Modern Birkh\"auser Classics. Birkh\"auser Boston, 1st edition, 2010.
\newblock \href {https://doi.org/10.1007/978-0-8176-4992-0} {\path{doi:10.1007/978-0-8176-4992-0}}.

\bibitem{Carroll2019}
Sean~M Carroll.
\newblock {\em Spacetime and geometry}.
\newblock Cambridge University Press, 2019.

\bibitem{chen1996shortest}
Jindong Chen and Yijie Han.
\newblock Shortest paths on a polyhedron, part i: Computing shortest paths.
\newblock {\em International Journal of Computational Geometry \& Applications}, 6(02):127--144, 1996.

\bibitem{despreComputingDirichletDomain2023}
Vincent Despr{\'e}, Benedikt Kolbe, Hugo Parlier, and Monique Teillaud.
\newblock Computing a {Dirichlet} domain for a hyperbolic surface.
\newblock In {\em 39th International Symposium on Computational Geometry ({SoCG})}, volume 258, pages 27:1--27:15, 2023.
\newblock \href {https://doi.org/10.4230/LIPIcs.SoCG.2023.27} {\path{doi:10.4230/LIPIcs.SoCG.2023.27}}.

\bibitem{despre2024representing}
Vincent Despr{\'e}, Benedikt Kolbe, and Monique Teillaud.
\newblock Representing infinite periodic hyperbolic delaunay triangulations using finitely many dirichlet domains.
\newblock {\em Discrete \& Computational Geometry}, 72(1):1--28, 2024.

\bibitem{despre2024computing}
Vincent Despr{\'e}, Camille Lanuel, and Monique Teillaud.
\newblock Computing an epsilon-net of a closed hyperbolic surface.
\newblock 2024.

\bibitem{despreComputingEnetClosed2024}
Vincent Despr{\'e}, Camille Lanuel, and Monique Teillaud.
\newblock Computing an $\varepsilon$-net of a closed hyperbolic surface.
\newblock Preprint, 2024.
\newblock URL: \url{https://hal.science/hal-04466350}.

\bibitem{despreFlippingGeometricTriangulations2020}
Vincent Despr{\'e}, Jean-Marc Schlenker, and Monique Teillaud.
\newblock Flipping geometric triangulations on hyperbolic surfaces.
\newblock In {\em 36th International Symposium on Computational Geometry ({SoCG})}, volume 164, pages 35:1--35:16, June 2020.
\newblock \href {https://doi.org/10.4230/LIPIcs.SoCG.2020.35} {\path{doi:10.4230/LIPIcs.SoCG.2020.35}}.

\bibitem{iordanov2016implementing}
Iordan Iordanov and Monique Teillaud.
\newblock {\em Implementing Delaunay triangulations of the Bolza surface}.
\newblock PhD thesis, INRIA Nancy, 2016.

\bibitem{krioukov2010hyperbolic}
Dmitri Krioukov, Fragkiskos Papadopoulos, Maksim Kitsak, Amin Vahdat, and Mari{\'a}n Bogun{\'a}.
\newblock Hyperbolic geometry of complex networks.
\newblock {\em Physical Review E—Statistical, Nonlinear, and Soft Matter Physics}, 82(3):036106, 2010.

\bibitem{matsumoto2021novel}
Hirotaka Matsumoto, Takahiro Mimori, and Tsukasa Fukunaga.
\newblock Novel metric for hyperbolic phylogenetic tree embeddings.
\newblock {\em Biology Methods and Protocols}, 6(1):bpab006, 2021.

\bibitem{mitchell1987discrete}
Joseph~SB Mitchell, David~M Mount, and Christos~H Papadimitriou.
\newblock The discrete geodesic problem.
\newblock {\em SIAM Journal on Computing}, 16(4):647--668, 1987.

\bibitem{nickel2017poincare}
Maximillian Nickel and Douwe Kiela.
\newblock Poincar{\'e} embeddings for learning hierarchical representations.
\newblock {\em Advances in neural information processing systems}, 30, 2017.

\bibitem{sharir1984shortest}
Micha Sharir~( and Amir Schorr.
\newblock On shortest paths in polyhedral spaces.
\newblock In {\em Proceedings of the sixteenth annual ACM symposium on Theory of computing}, pages 144--153, 1984.

\bibitem{stepanyantsDiameterCompactRiemann2023}
Huck Stepanyants, Alan Beardon, Jeremy Paton, and Dmitri Krioukov.
\newblock Diameter of compact {Riemann} surfaces, 2023.
\newblock URL: \url{https://arxiv.org/abs/2301.10844}.

\end{thebibliography}
\end{document}